\newcommand{\be}{\begin{equation}}
\newcommand{\ee}{\end{equation}}
\newcommand{\ba}{\begin{eqnarray}}
\newcommand{\ea}{\end{eqnarray}}
\newcommand{\ketbra}[2]{|#1\rangle \langle #2|}
\newcommand{\tr}{\operatorname{Tr}}
\newtheorem{theorem}{Theorem}
\newtheorem{definition}{Definition}
\newtheorem{observation}{Observation}
\def\>{\rangle}
\def\<{\langle}
\begin{document}

\title{Mutually Unbiased Balanced Functions \& Generalized Random Access Codes}

\author{Vaisakh M}
\affiliation{School of Physics, IISER Thiruvananthapuram, Vithura, Kerala 695551, India.}

\author{Ram krishna Patra}
\affiliation{School of Physics, IISER Thiruvananthapuram, Vithura, Kerala 695551, India.}

\author{Mukta Janpandit}
\affiliation{School of Physics, IISER Thiruvananthapuram, Vithura, Kerala 695551, India.}

\author{Samrat Sen}
\affiliation{School of Physics, IISER Thiruvananthapuram, Vithura, Kerala 695551, India.}

\author{Anubhav Chaturvedi}
\email{anubhav.chaturvedi@research.iiit.ac.in}
\affiliation{Institute of Theoretical Physics and Astrophysics, National Quantum Information Centre, Faculty of Mathematics, Physics and Informatics, University of Gdansk, 80-952 Gdansk, Poland}
\affiliation{International Centre for Theory of Quantum Technologies, University of Gdansk, 80-308, Gdansk, Poland}

\author{Manik Banik}
\email{manik11ju@gmail.com}
\affiliation{School of Physics, IISER Thiruvananthapuram, Vithura, Kerala 695551, India.}

\begin{abstract}
Quantum resources and protocols are known to outperform their classical counterparts in variety of communication and information processing tasks. Random Access Codes (RACs) are one such cryptographically significant family of bipartite communication tasks, wherein, the sender encodes a data set (typically a string of input bits) onto a physical system of bounded dimension and transmits it to the receiver, who then attempts to guess a randomly chosen part of the sender's data set (typically one of the sender's input bits). In this work, we introduce a generalization of this task wherein the receiver, in addition to the individual input bits, aims to retrieve randomly chosen functions of sender's input string. Specifically, we employ sets of \textit{mutually unbiased balanced functions} (MUBS), such that perfect knowledge of any one of the constituent functions yields no knowledge about the others. We investigate and bound the performance of (i.) classical, (ii.) quantum prepare and measure, and (iii.) entanglement assisted classical communication (EACC) protocols for the resultant generalized RACs (GRACs). Finally, we detail the case of GRACs with three input bits, find maximal success probabilities for classical, quantum and EACC protocols, along with the effect of noisy quantum channels on the performance of quantum protocols. Moreover, with the help of this case study, we reveal several characteristic properties of the GRACs which deviate from the standard RACs.

\end{abstract}


\maketitle	
\section{Introduction}
Quantum information theory entails the study of quantum resources and protocols which are known to enable a plethora of communication and information processing tasks, which otherwise remain unattainable by their classical counterparts governed by Shannon's information theory \cite{Shannon48}. For instance, in quantum super-dense coding, a sender (say Alice) can transfer two classical bits of information to a distant receiver (say Bob) by transmitting a single two-level quantum system, with the aid of pre-shared entanglement \cite{Bennett92}. Similarly, the counter-intuitive feature of quantum entanglement is known to empower several seemingly impossible tasks. However, in absence of entanglement, the utility of quantum systems in communication tasks is constrained by certain fundamental no-go results. For instance, the Holevo theorem \cite{Holevo73} constrains the informational utility of individual quantum systems. Specifically, no more than $n$ classical bits of information can be reliably transmitted using $n$ quantum bits. Recently, a more stringent constraint on quantum communication was established by Frenkel \& Weiner, namely, it has been established that classical information storage capacity of a $d$-level quantum system is same as that of a classical $d$-state system \cite{Frenkel15}. 

These no-go results seem to point towards the conclusion that in absence of entanglement, quantum resources and protocols might not be better than their classical counterparts for transmission of classical information. However, in actuality, even without entanglement, finite dimensional quantum systems can outperform their classical counterparts in a large variety of stochastic communication tasks. $(n\to1)$ Random Access Codes (RACs) constitute such a class of communication tasks wherein the sender is tasked with encoding a string of $n$ bits onto a single bit of message, such that the receiver can decode any one of the randomly chosen initial bits with certain probability of success. It is known that if the message is encoded onto a qubit \footnote{This problem was first studied by Stephen Wiesner under the name \textit{conjugate coding} \cite{Wiesner83}}, the parties can attain higher success probability than any classical strategy entailing a bit of communication \cite{Ambainis99,Ambainis02}. RACs utilizing quantum resources (often referred to as QRACs) have a plethora of applications, for instance, in connection with quantum communication complexity (see \cite{Buhrman10} and references therein), network coding and locally decodable codes \cite{Klauck01,Kerenidis04,Aaronso04,Wehner05,Gavinsky06,Hayashi07}. Moreover, RACs have found a number of foundational implications \cite{Spekkens09,Banik15,Chailloux16,Hameedi17,Ghorai18,Saha19,Ambainis19,Chaturvedi20}, in particular, pre-shared Entanglement Assisted Random Access Codes (EARACs) are closely related to non-local games \cite{Pawloski10,AnubhavRAC}, and form the basis for the principle of Information Causality \cite{Pawloski09,Safi11}. 

Finally, RACs form a cryptographic primitive, and consequently, form the basis of Quantum Key Distribution (QKD) schemes \cite{Marcin2011,Chaturvedi2018,Anubhav2020}. One of the features of $(n \to 1)$ RACs which makes them a suitable crytographic primitive is that in each round the receiver intends to retrieve a single bit of the sender's $n$ bit data, and as these bits are independently distributed, such an exclusive decoding reveals no non-trivial information about the other bits. 
In this work, we propose a generalization of the RAC task, referred to as GRACs, which expands on this property. To this end, we introduce sets of $n\mapsto 1$ bit functions that are \textit{mutually unbiased} (called MUBS) such that perfect knowledge of any one of the constituent functions in a MUBS, yields no information about the others. For each such MUBS, the communication task wherein, in each round the receiver intends to decode a constituent function, forms a GRAC.

The manuscript is organized in the following manner: in Section \ref{sec2} we formally introduce the concept of mutually unbiased balanced functions and their sets; in Section \ref{sec3} we provide the definition of generalized RAC (GRAC) task, describe (i.) classical, (ii.) quantum prepare and measure, and (iii.) entanglement assisted classical communication (EACC) protocols for GRACs, and derive general bounds on the success probabilities of these protocols in GRACs. In Section \ref{sec4} we provide a detailed study of the case when the sender receives three idependently distributed bits as input. Here, we find that majority-encoding-identity-decoding may not be one of the optimal classical strategies for GRAC which is always the case for RACs. We also report an interesting feature of classical GRACs, namely, the maximal average classical success probability of a GRAC gets increased with addition of new function(s) to the already existing one, a phenomenon referred to as, `{\it the harder the goal, the greater the payoff}'. In the same section, we also analyse the performance quantum prepare and measure protocols, the effect noisy channel on their performance, and the performance of EACC protocols. In section \ref{sec5} we present a brief discussion along with some relevant open questions for further research.

\section{Mutually Unbiased Balanced Functions}\label{sec2}
This section specifies the definitions of balanced and mutually unbiased balanced functions and sets of mutually unbiased balanced functions, providing key examples along with some notational preliminaries for subsequent use throughout the rest of the manuscript.
\begin{definition} [Balanced Functions] A Boolean function $f:\{0,1\}^n\to\{0,1\}$ is balanced if its outputs yield as many $0$s as $1$s over its input set.
\end{definition}
In this work, we consider $n\to1$ bit Boolean functions which take as input an $n$ bit string $\mathbf{x}\equiv \{x_1x_2\cdots x_n\} \in \{0,1\}^n$ to produce a bit of output, i.e., $f:\{0,1\}^n\to\{0,1\}$. Consequently, for such functions to be balanced, the cardinality of the set of inputs mapped to $0$, $\mathcal{X}_{f(\mathbf{x})=0}\equiv\{\mathbf{x}\in\{0,1\}^n~|~f(\mathbf{x})=0\}$, must be the same as the cardinality of the set of inputs mapped to $1$, $\mathcal{X}_{f(\mathbf{x})=1}\equiv\{\mathbf{x}\in\{0,1\}^n~|~f(\mathbf{x})=1\}$, i.e., $|\mathcal{X}_{f(\mathbf{x})=0}| = |\mathcal{X}_{f(\mathbf{x})=1}| = 2^{n-1}$.
Furthermore, this implies that for such balanced functions, and a uniformly distributed string of inputs, the probability of obtaining a $0$ is the same as the probability of obtaining a $1$, i.e., $\forall \mathbf{x}\in \{0,1\}^{n}: p(\mathbf{x})=\frac{1}{2^n}$,
$p(f(\mathbf{x})=0)=p(f(\mathbf{x})=1)=\frac{1}{2}$. Next, we introduce the notion of \textit{mutual unbiasedness} for balanced functions. 
\begin{definition} [MUBF] A pair of balanced Boolean functions $f_1,f_2$ is called mutually unbiased if exactly half of the inputs yielding an output for one function yields the same output for the other function.
\end{definition}
In particular, for a pair of MUBFs $f_1,f_2:\{0,1\}^n\to\{0,1\}$, the sets $\mathcal{X}_{f_1(\mathbf{x})=0},\mathcal{X}_{f_1(\mathbf{x})=1}$ have equal overlaps with the sets $\mathcal{X}_{f_2(\mathbf{x})=0},\mathcal{X}_{f_2(\mathbf{x})=1}$, i.e., $\forall i,j \in\{0,1\}: |\mathcal{X}_{f_1(\mathbf{x})=i} \cap \mathcal{X}_{f_2(\mathbf{x})=j}|=2^{n-2}$. This further implies for a uniformly distributed string of inputs $\forall i,j,k,l \in \{0,1\}:p(f_2(\mathbf{x})=i|f_1(\mathbf{x})=j)=p(f_1(\mathbf{x})=k|f_2(\mathbf{x})=l)=\frac{1}{2}$. Next, we define sets of mutually unbiased balanced functions.
\begin{definition} [MUBS]
A set of functions $\mathcal{F}\equiv \{f_i\}^{|\mathcal{F}|}_{i=1}$ forms a mutually unbiased balanced set if all of the constituent functions are balanced and pairwise mutually unbiased.  
\end{definition}
For a $n$-bit string input, consider the set of $2^{n}-1$ functions $\mathcal{F}^n_\mathcal{R}\equiv\{f_{\mathbf{r}}(\mathbf{x})=\bigoplus^n_{i=1}r_ix_i|\mathbf{r}\equiv\{r_1,\ldots,r_n\}\in \mathcal{R}\}$ where $\mathcal{R}\equiv\{\mathbf{r}\in \{0,1\}^n|\sum_ir_i\geq 1\}$. Notice that all functions in the set are balanced, i.e., $\forall f \in \mathcal{F}^n_R: |\mathcal{X}_{f(\mathbf{x})=0}|=|\mathcal{X}_{f(\mathbf{x})=1}|=2^{n-1}$. Now any two distinct functions $f_i,f_j\in \mathcal{F}^n_{\mathcal{R}}$ differ by XOR of alteast one completely independent bit, and XOR with a random bit obscures all information (one-time pad ), all functions in such a set are pairwise mutually unbiased, deeming the set to be MUBS. For instance, for the simplest case of two bit input functions the set $\mathcal{F}^2_{\mathcal{R}}=\{x_1,x_2,x_1\oplus x_2\}$ forms a MUBS. Similarly, for the case of three bit input functions the set $\mathcal{F}^3_{\mathcal{R}}=\{x_1,x_2,x_3,x_1\oplus x_2,x_2\oplus x_3,x_1\oplus x_3,x_1\oplus x_2\oplus x_3\}$ forms a MUBS. In general, it is easy to see that any  non-trivial subset of $\mathcal{F}^n_{\mathcal{R}}$ forms a MUBS, i.e., the sets $\mathcal{F}^n_{\mathcal{R}_j}\equiv\{f_{\mathbf{r}}(\mathbf{x})=\bigoplus^n_{i=1}r_ix_i|\mathbf{r}\equiv\{r_1,\ldots,r_n\}\in \mathcal{R}_j\}$ where $\mathcal{R}_j \subseteq \mathcal{R}$. 

\section{Generalized random access codes}\label{sec3}
In this section, we start by introducing generalized random access codes which utilize mutually unbiased balanced functions defined above.

\begin{definition} [$\mathbf{(n,\mathcal{R}_i)}$-GRAC] An $(n,\mathcal{R}_i)$ generalized random access code (GRAC) is a bipartite one-way communication task, wherein in each round the sender (Alice) receives a uniformly distributed input $n$-bit string $\textbf{x} \in \{0,1\}^n$ which they encode onto a message, which is transmitted to the receiver (Bob). Bob upon receiving the transmission from Alice, decodes the message based on their uniformly distributed input $\mathbf{y}\in \mathcal{R}_i$ where $\mathcal{R}_i \subseteq \mathcal{R} \equiv\{\mathbf{r}\in\{0,1\}^n|\sum_ir_i\geq 1\}$ and produces an output bit $z\in\{0,1\}$. They win a round if $z=f_\mathbf{y}(\mathbf{x})=\bigoplus^n_{i=1}x_iy_i$. They gauge their performance on the basis of their average success probability $s^{(n,\mathcal{R}_i)}=\frac{1}{2^n|\mathcal{R}_i|}\sum_{\mathbf{x},\mathbf{y}}p(z=f_\mathbf{y}(\mathbf{x})|\mathbf{x},\mathbf{y})$. 
\end{definition}
We note here that the standard $(n\rightarrow1)$ random access codes (RACs) form restricted cases of GRACs, specifically when Bob's input $\mathbf{y}$ is uniformly sampled from $\mathcal{R}_{RAC} \equiv \{\mathbf{r}\in\{0,1\}^n|\sum_ir_i=1\}$. We denote the success probability of $(n\rightarrow1)$ RACs by $s^{(n\rightarrow1)}$.

In this work, we study three distinct classes of communication protocols for $(n,\mathcal{R}_i)$ GRAC, 

$(i.)$ A classical communication $\mathcal{C}$ protocol for $(n,\mathcal{R}_i)$ GRAC is one wherein Alice encodes their input string $\mathbf{x}$ onto a bit $\omega \in\{0,1\}$, based on a encoding scheme $\mathcal{E}$ entailing conditional probability distributions of the form $\{p_{\mathcal{E}}(\omega|\mathbf{x})\}$. Bob decodes the message based on their input to produce the output $z$ employing a decoding scheme $\mathcal{D}$ entailing conditional probability distributions of the form $\{p_{\mathcal{D}}(z|\omega,\mathbf{y})
\}$. The average success probability for such a protocol is $s^{(n,\mathcal{R}_i)}_{\mathcal{C}} = \frac{1}{2^n|\mathcal{R}_i|} \sum_{\mathbf{x},\mathbf{y},\omega}p_{\mathcal{E}}(\omega|\mathbf{x})p_{\mathcal{D}}(z=f_\mathbf{y}(\mathbf{x})|\omega,\mathbf{y})$.  The maximal classical success probability of $(n,\mathcal{R}_i)$ GRAC, $S^{(n,\mathcal{R}_i)}_{\mathcal{C}}$ has the expression, 
\small
\begin{equation}
 S^{(n,\mathcal{R}_i)}_{\mathcal{C}}=\max_{\mathcal{E},\mathcal{D}}\left\{\frac{1}{2^n|\mathcal{R}_i|} \sum_{\mathbf{x},\mathbf{y},\omega}p_{\mathcal{E}}(\omega|\mathbf{x})p_{\mathcal{D}}\left(z=f_\mathbf{y}(\mathbf{x})|\omega,\mathbf{y}\right)\right\}.  
\end{equation}
\normalsize
We note here that, as we are considering average success probability, the parties gain no advantage even they have access to an arbitrary amount of shared randomness \cite{Ambainis08}. Moreover, it is straightforward to see that for average success probability it is enough to consider only deterministic encoding and decoding schemes. Consequently, the optimal classical protocols for $(n,\mathcal{R}_i)$ GRAC, without loss of generality, comprise of a deterministic encoding schemes such that $\omega=f_{\mathcal{E}}(\mathbf{x})$, and deterministic decoding schemes such that $z=f_{\mathcal{D}}(\mathbf{y},\omega$). 

$(ii.)$ A quantum prepare and measure protocol $\mathcal{Q}$ for $(n,\mathcal{R}_i)$ GRAC entails Alice encoding her input onto a qubit $\rho_{\mathbf{x}}$, which is transmitted to Bob. Bob upon receiving the qubit, performs the measurement  $\{M^\mathbf{y}_z | \forall \mathbf{y}: \sum_z M^\mathbf{y}_z =\mathbb{I} \}$ based on his input $\mathbf{y}$ to produce the outcome $z$. The average success probability for such a protocol has the expression  $s^{(n,\mathcal{R}_i)}_{\mathcal{Q}} = \frac{1}{2^n|\mathcal{R}_i|}\sum_{\mathbf{x},\mathbf{y}}\tr(\rho_\mathbf{x}M^{\mathbf{y}}_{z=f_\mathbf{y}(\mathbf{x})})$. The maximal quantum success probability of 
$(n,\mathcal{R}_i)$ GRAC $S^{(n,\mathcal{R}_i)}_{\mathcal{Q}}$ has the expression, 
\small
\begin{equation} \label{prepmeasMax}
 S^{(n,\mathcal{R}_i)}_{\mathcal{Q}}=\max_{\{\rho_{\mathbf{x}}\},\{M^\mathbf{y}_z\}}\left\{\frac{1}{2^n|\mathcal{R}_i|}\sum_{\mathbf{x},\mathbf{y}}\tr\left(\rho_\mathbf{x}M^{\mathbf{y}}_{z=f_\mathbf{y}(\mathbf{x})}\right)\right\},  
\end{equation}
\normalsize
where the maximization is over all two dimensional states $\{\rho_\mathbf{x}\}$ and two dimensional binary outcome measurements $\{M^\mathbf{y}_z\}$. For maximal average success probability it is enough to consider pure states, i.e., $\forall \mathbf{x}:\rho_\mathbf{x} \equiv \ketbra{\psi_\mathbf{x}}{\psi_\mathbf{x}}$, and measurement operators to be projectors, i.e., $\{M^\mathbf{y}_z \equiv \Pi^\mathbf{y}_z|\forall \mathbf{y}: \sum_z\Pi^\mathbf{y}_z=\mathbb{I}\}$. This allows us to re-express \eqref{prepmeasMax} as, 
\small
\begin{equation} \label{prepmeasMax1}
 S^{(n,\mathcal{R}_i)}_{\mathcal{Q}}=\max_{\{\mathbf{r}_{\mathbf{x}}\},\{\mathbf{v}_\mathbf{y}\}}\left\{\frac{1}{2^n|\mathcal{R}_i|}\sum_{\mathbf{x},\mathbf{y}}\frac{1}{2}\left(1+(-1)^{f_{\mathbf{y}}(\mathbf{x})}\mathbf{r}_{\mathbf{x}}.\mathbf{v}_{\mathbf{y}}\right)\right\},  
\end{equation}
\normalsize
where we have used Bloch vector notation for states $\rho_{\mathbf{x}}=\frac{\mathbb{I}+\mathbf{r}_{\mathbf{x}}.\boldsymbol{\sigma}}{2}$, and for measurements $M^{\mathbf{y}}_z=\frac{\mathbb{I}+(-1)^z\mathbf{v}_{\mathbf{y}}.\boldsymbol{\sigma}}{2}$, where $\mathbf{r}_{\mathbf{x}}\in \mathbb{R}_3,\mathbf{v}_{\mathbf{y}}\in \mathbb{R}_3$ are unit vectors, such that $\forall \mathbf{x}: \|\mathbf{r}_{\mathbf{x}}\|=1,\forall \mathbf{y}: \|\mathbf{v}_{\mathbf{y}}\|=1$, and $\boldsymbol{\sigma}$ is the vector of Pauli matrices.

$(iii.)$ An entanglement assisted classical communication protocol ${EACC}$ entails Alice and Bob pre-sharing an entangled quantum state $\rho_{AB}$ of arbitrary local dimension. Alice based on her input measures her part of the entangled state employing the binary outcome measurement $\{M^{\mathbf{x}}_{\omega}|\forall \mathbf{x}: \sum_\omega M^{\mathbf{x}}_{\omega} = \mathbb{I}\}$, and transmits the outcome $\omega$ to Bob. Bob upon receiving the message $\omega$, and his input $\mathbf{y}$ performs the binary outcome measurements $\{M^{\omega,\mathbf{y}}_{z}|\forall \mathbf{y},\omega: \sum_z M^{\omega,\mathbf{y}}_{z} = \mathbb{I}\}$ to produce the outcome $z$. The average success probability for such a protocol has the expression  $s_{{EACC}} = \frac{1}{2^n|\mathcal{R}_i|}\sum_{\mathbf{x},\mathbf{y},\omega}\tr\left(\rho_{AB}M^{\mathbf{x}}_{\omega}\otimes M^{\omega,\mathbf{y}}_{z=f_\mathbf{y}(\mathbf{x})}\right)$.  
The maximal success probability of EACC protocols in $(n,\mathcal{R}_i)$ GRAC, $S_{EACC}$, has the expression, 
\small
\begin{align}
& S_{EACC}=\nonumber\\
& \max_{\rho_{AB},\{M^\mathbf{x}_\omega\},\{M^{\omega,\mathbf{y}}_{z}\}}\left\{\frac{1}{2^n|\mathcal{R}_i|}\sum_{\mathbf{x},\mathbf{y},\omega}\tr\left(\rho_{AB}M^{\mathbf{x}}_{\omega}\otimes M^{\omega,\mathbf{y}}_{z=f_\mathbf{y}(\mathbf{x})}\right)\right\}.  
\end{align}
\normalsize
\subsection{Bounding success of GRACs}
Now we are prepared to present our results for bounding the average success probability of $(n,\mathcal{R}_i)$ GRACs.

\begin{theorem} \label{GRACLB}
The maximal success probability of $(n,\mathcal{R}_i)$ GRACs, $S^{(n,\mathcal{R}_i)}_{\mathcal{O}}$, is lower bounded by that of $(|\mathcal{R}_i|\rightarrow 1)$ RAC, $S^{(|\mathcal{R}_i|\rightarrow 1)}_{\mathcal{O}}$, i.e.,
\begin{equation}
    S^{(n,\mathcal{R}_i)}_{\mathcal{O}} \geq S^{(|\mathcal{R}_i|\rightarrow1)}_{\mathcal{O}},
\end{equation}
where $\mathcal{O}\in\{\mathcal{C},\mathcal{Q},EACC\}$.
\end{theorem}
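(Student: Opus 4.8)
The plan is to show that any protocol for the $(|\mathcal{R}_i|\to 1)$ RAC can be simulated by a protocol for the $(n,\mathcal{R}_i)$ GRAC with the same average success probability, so that the optimal GRAC value cannot be smaller. The key observation is that the $(n,\mathcal{R}_i)$ GRAC is, from Bob's point of view, a task of retrieving one of $|\mathcal{R}_i|$ bits, namely the bits $\{f_{\mathbf{r}}(\mathbf{x})\}_{\mathbf{r}\in\mathcal{R}_i}$, and by the MUBS property established in Section \ref{sec2} these $|\mathcal{R}_i|$ bits are, under the uniform distribution over $\mathbf{x}\in\{0,1\}^n$, uniformly and independently distributed. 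Concretely, the map $\mathbf{x}\mapsto \big(f_{\mathbf{r}}(\mathbf{x})\big)_{\mathbf{r}\in\mathcal{R}_i}\in\{0,1\}^{|\mathcal{R}_i|}$ is a surjective linear map whose fibers all have the same size $2^{n-|\mathcal{R}_i|}$ (the functions $f_{\mathbf{r}}$, $\mathbf{r}\in\mathcal{R}_i$, are linearly independent over $\mathbb{F}_2$ since $\mathcal{R}_i$ is a set of distinct nonzero vectors and... wait — this needs the $\mathbf{r}$'s to be linearly independent, which is \emph{not} automatic; see the obstacle below), so a uniform $\mathbf{x}$ induces a uniform string $\mathbf{b}=(b_{\mathbf{r}})_{\mathbf{r}\in\mathcal{R}_i}$.

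First I would set up the reduction for the classical case $\mathcal{O}=\mathcal{C}$: given an optimal RAC encoding $p_{\mathcal{E}'}(\omega|\mathbf{b})$ and decoding $p_{\mathcal{D}'}(z|\omega,k)$ for the $(|\mathcal{R}_i|\to1)$ RAC (here $k\in\{1,\dots,|\mathcal{R}_i|\}$ indexes which bit Bob wants), define the GRAC encoding by $p_{\mathcal{E}}(\omega|\mathbf{x}):=p_{\mathcal{E}'}(\omega|\mathbf{b}(\mathbf{x}))$ where $\mathbf{b}(\mathbf{x})=(f_{\mathbf{r}}(\mathbf{x}))_{\mathbf{r}\in\mathcal{R}_i}$, and the GRAC decoding by having Bob, on input $\mathbf{y}\in\mathcal{R}_i$, identify $\mathbf{y}$ with its index $k(\mathbf{y})$ and output $p_{\mathcal{D}}(z|\omega,\mathbf{y}):=p_{\mathcal{D}'}(z|\omega,k(\mathbf{y}))$. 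Then Bob wins the GRAC round $(\mathbf{x},\mathbf{y})$ exactly when he would win the RAC round $(\mathbf{b}(\mathbf{x}),k(\mathbf{y}))$, and since $\mathbf{x}\mapsto\mathbf{b}(\mathbf{x})$ is uniform-to-uniform and $\mathbf{y}$ is uniform on $\mathcal{R}_i$, the average success probabilities coincide. Next I would run the identical argument for $\mathcal{Q}$ (replace $p_{\mathcal{E}'}(\omega|\mathbf{b})$ by states $\rho_{\mathbf{b}}$ and set $\rho_{\mathbf{x}}:=\rho_{\mathbf{b}(\mathbf{x})}$; keep Bob's $|\mathcal{R}_i|$ RAC measurements, relabelled by $\mathbf{y}$) and for $EACC$ (keep $\rho_{AB}$, set $M^{\mathbf{x}}_\omega:=M^{\mathbf{b}(\mathbf{x})}_\omega$, relabel Bob's measurements). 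In all three cases the simulation is transparent once the distributional fact is in hand, so the content is entirely in that fact.

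The main obstacle is precisely that the distributional claim — a uniform $\mathbf{x}$ induces a uniform and, crucially, \emph{independent} string over $\{f_{\mathbf{r}}(\mathbf{x})\}_{\mathbf{r}\in\mathcal{R}_i}$ — requires the vectors $\{\mathbf{r}:\mathbf{r}\in\mathcal{R}_i\}$ to be \emph{linearly independent} over $\mathbb{F}_2$, since e.g. $f_{\mathbf{r}_1}\oplus f_{\mathbf{r}_2}=f_{\mathbf{r}_1\oplus\mathbf{r}_2}$ would force a parity constraint if $\mathbf{r}_1\oplus\mathbf{r}_2\in\mathcal{R}_i$. Pairwise mutual unbiasedness of the $f_{\mathbf{r}}$'s (which the MUBS definition guarantees) is strictly weaker than joint independence. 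So the theorem as reduction-to-RAC holds cleanly when $\mathcal{R}_i$ is linearly independent (in particular for $\mathcal{R}_i=\mathcal{R}_{RAC}$, recovering $S^{(n,\mathcal{R}_{RAC})}_{\mathcal{O}}=S^{(n\to1)}_{\mathcal{O}}$), and more generally one should pass to a maximal linearly independent subset $\mathcal{R}'_i\subseteq\mathcal{R}_i$ of size $\dim\mathrm{span}_{\mathbb{F}_2}(\mathcal{R}_i)=:m\le|\mathcal{R}_i|$, obtaining a simulation of the $(m\to1)$ RAC inside the GRAC and hence $S^{(n,\mathcal{R}_i)}_{\mathcal{O}}\ge S^{(m\to1)}_{\mathcal{O}}$; I would then either (a) argue $m=|\mathcal{R}_i|$ under whatever hypothesis the paper intends for $\mathcal{R}_i$, or (b) weaken the bound's statement accordingly. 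I would flag this as the one place where the argument needs care, and I would check the intended reading of $\mathcal{R}_i$ against the $n=3$ examples in Section \ref{sec4} before committing to the clean statement.
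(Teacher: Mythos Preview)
Your overall reduction---have Alice precompute $\mathbf{b}(\mathbf{x})=(f_{\mathbf{r}}(\mathbf{x}))_{\mathbf{r}\in\mathcal{R}_i}$ and run an optimal $(|\mathcal{R}_i|\to1)$ RAC on $\mathbf{b}$, with Bob translating $\mathbf{y}\in\mathcal{R}_i$ into a RAC index---is exactly the paper's strategy, and you are right that the only nontrivial point is the distribution of $\mathbf{b}(\mathbf{x})$. You also correctly diagnose that pairwise mutual unbiasedness does not force joint uniformity: the pushforward is uniform on $\{0,1\}^{|\mathcal{R}_i|}$ iff $\mathcal{R}_i$ is linearly independent over $\mathbb{F}_2$.

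However, both of your proposed resolutions fail. Option~(a) is ruled out by the paper's own examples (e.g.\ $n=3$, $|\mathcal{R}_i|=7$ in Section~\ref{sec4}). Option~(b) is not merely a weakening---it is false as stated: passing to a basis $\mathcal{R}'_i$ of size $m=\dim\operatorname{span}_{\mathbb{F}_2}\mathcal{R}_i$ lets you handle only $m$ of Bob's $|\mathcal{R}_i|$ questions, so the simulation does \emph{not} yield $S^{(n,\mathcal{R}_i)}_{\mathcal{O}}\ge S^{(m\to1)}_{\mathcal{O}}$; and indeed for $n=3$, $\mathcal{R}_i=\mathcal{R}$ one has $m=3$ and $S^{(3\to1)}_{\mathcal{C}}=\tfrac{3}{4}$, whereas $S^{(3,\mathcal{R})}_{\mathcal{C}}=\tfrac{37}{56}<\tfrac{3}{4}$ (Table~\ref{classical}), so that inequality is violated.

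The paper takes neither route. It explicitly concedes that ``$(f_{\mathbf{r}}(\mathbf{x}))_{\mathbf{r}\in\mathcal{R}_i}$ may not be uniformly distributed'' and nonetheless feeds this string into the optimal $(|\mathcal{R}_i|\to1)$ RAC, ``up to optimal reordering'' $\tilde{\mathbf{x}}=\mathrm{Perm}\big((f_{\mathbf{r}}(\mathbf{x}))_{\mathbf{r}\in\mathcal{R}_i}\big)$, and asserts that the resulting GRAC success satisfies $s^{(n,\mathcal{R}_i)}_{\mathcal{O}}\ge S^{(|\mathcal{R}_i|\to1)}_{\mathcal{O}}$, with equality precisely when the optimal RAC strategy has the same per-input success on every $\tilde{\mathbf{x}}$. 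So the paper keeps the full $|\mathcal{R}_i|$-bit RAC as the subroutine and argues that the inequality survives the support restriction via the reordering freedom; you should replace your (a)/(b) dichotomy with that line rather than with a linear-independence hypothesis.
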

\begin{proof} To prove the desired thesis we provide a viable strategy for $(n,\mathcal{R}_i)$ GRAC which utilizes an optimal $(|\mathcal{R}_i|\rightarrow1)$ RAC as a subroutine, and achieves success $s^{n,\mathcal{R}_i}_{\mathcal{O}}\geq S^{(|\mathcal{R}_i|\rightarrow1)}_{\mathcal{O}}$.

Given a $(n,\mathcal{R}_i)$ GRAC with the input string $\mathbf{x}\equiv\{x_1,\ldots,x_n\}\in\{0,1\}^n$, consider the bit string $(f_{\mathbf{r}}(\mathbf{x}))_{\mathbf{r}\in \mathcal{R}_i}$. Notice that, $(f_{\mathbf{r}}(\mathbf{x}))_{\mathbf{r}\in \mathcal{R}_i}$ may not be uniformly distributed.
Now, consider a $(|\mathcal{R}_i|\rightarrow1)$ RAC with the input string $\mathbf{\tilde{x}} \equiv \{\tilde{x}_1,\ldots,\tilde{x}_{|\mathcal{R}_i|}\}\in \{0,1\}^{|\mathcal{R}_i|}$ with maximal success probability $S^{|\mathcal{R}_i|}_{\mathcal{O}}=\frac{1}{2^{|R_i|}|R_i|}\sum_{\mathbf{x},\tilde{y}\in\{1,\ldots,|\mathcal{R}_i|\}}p(\tilde{z}=\tilde{x}_{\tilde{y}}|\mathbf{\tilde{x}},\tilde{y})$, where $\mathcal{O}\in\{\mathcal{C},\mathcal{Q},EACC\}$ specifies the particular type of the protocol. We use the string $(f_{\mathbf{r}}(\mathbf{x}))_{\mathbf{r}\in \mathcal{R}_i}$ as the input string for the $(|\mathcal{R}_i|\rightarrow1)$ RAC, up to optimal reordering, i.e., $\mathbf{\tilde{x}}=Perm((f_{\mathbf{r}}(\mathbf{x}))_{\mathbf{r}\in \mathcal{R}_i})$. It is easy to see that, this protocol achieves success probability $s^{(n,\mathcal{R}_i)}_{\mathcal{O}}\geq S^{(|\mathcal{R}_i|\rightarrow1)}_{\mathcal{O}}$, where the inequality is saturated when the optimal strategy of $(|\mathcal{R}_i|\rightarrow1)$ RAC has equal success for all inputs, $\forall \tilde{\mathbf{x}}\in\{0,1\}^{|\mathcal{R}_i|}: \frac{1}{2^{|R_i|}|R_i|}\sum_{\tilde{y}\in\{1,\ldots,|\mathcal{R}_i|\}}p(\tilde{z}=\tilde{x}_{\tilde{y}}|\mathbf{\tilde{x}},\tilde{y})= S^{(|\mathcal{R}_i|\rightarrow1)}_{\mathcal{O}}$.     
\end{proof}
\begin{theorem} \label{UBPrepAndMeas}
The maximal success probability of a prepare and measure protocol in an $(n,\mathcal{R}_i)$ GRAC, $S^{(n,\mathcal{R}_i)}_{\mathcal{Q}}$, is upper bounded as follows.
\begin{equation}
    S^{(n,\mathcal{R}_i)}_{\mathcal{Q}} \leq \frac{1}{2}\left(1+\frac{1}{\sqrt{|\mathcal{R}_i|}}\right),
\end{equation}
\end{theorem}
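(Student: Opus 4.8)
The plan is to start from the Bloch-vector form \eqref{prepmeasMax1}, reduce the double maximization to a single vector-norm sum, and then apply the Cauchy--Schwarz inequality twice. First I would separate off the constant, writing $S^{(n,\mathcal{R}_i)}_{\mathcal{Q}} = \frac{1}{2} + \frac{1}{2^{n+1}|\mathcal{R}_i|}\max_{\{\mathbf{r}_{\mathbf{x}}\},\{\mathbf{v}_{\mathbf{y}}\}}\sum_{\mathbf{x}}\mathbf{r}_{\mathbf{x}}\cdot\mathbf{w}_{\mathbf{x}}$, where $\mathbf{w}_{\mathbf{x}}:=\sum_{\mathbf{y}}(-1)^{f_{\mathbf{y}}(\mathbf{x})}\mathbf{v}_{\mathbf{y}}$. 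Since each $\mathbf{r}_{\mathbf{x}}$ is a unit vector, the maximization over $\{\mathbf{r}_{\mathbf{x}}\}$ is immediate (align $\mathbf{r}_{\mathbf{x}}$ with $\mathbf{w}_{\mathbf{x}}$), leaving $\sum_{\mathbf{x}}\|\mathbf{w}_{\mathbf{x}}\|$ to be bounded over the choice of measurement directions $\{\mathbf{v}_{\mathbf{y}}\}$.

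Next I would use concavity of the square root --- equivalently Cauchy--Schwarz against the all-ones vector on the $2^n$-element index set of strings $\mathbf{x}$ --- to obtain $\sum_{\mathbf{x}}\|\mathbf{w}_{\mathbf{x}}\| \le \big(2^n\sum_{\mathbf{x}}\|\mathbf{w}_{\mathbf{x}}\|^2\big)^{1/2}$. The crucial computation is then $\sum_{\mathbf{x}}\|\mathbf{w}_{\mathbf{x}}\|^2 = \sum_{\mathbf{y},\mathbf{y}'}(\mathbf{v}_{\mathbf{y}}\cdot\mathbf{v}_{\mathbf{y}'})\sum_{\mathbf{x}}(-1)^{f_{\mathbf{y}}(\mathbf{x})\oplus f_{\mathbf{y}'}(\mathbf{x})}$. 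Because $f_{\mathbf{y}}(\mathbf{x})\oplus f_{\mathbf{y}'}(\mathbf{x}) = f_{\mathbf{y}\oplus\mathbf{y}'}(\mathbf{x})$, for $\mathbf{y}\neq\mathbf{y}'$ the inner sum is $(-1)$ raised to a nonzero linear form summed over all of $\{0,1\}^n$ and hence vanishes (equivalently, it counts agreements minus disagreements of $f_{\mathbf{y}},f_{\mathbf{y}'}$, which by the MUBF property $|\mathcal{X}_{f_{\mathbf{y}}=i}\cap\mathcal{X}_{f_{\mathbf{y}'}=j}|=2^{n-2}$ is zero). Only the diagonal terms $\mathbf{y}=\mathbf{y}'$ survive, each contributing $2^n\|\mathbf{v}_{\mathbf{y}}\|^2 = 2^n$, so $\sum_{\mathbf{x}}\|\mathbf{w}_{\mathbf{x}}\|^2 = 2^n|\mathcal{R}_i|$ independently of the $\{\mathbf{v}_{\mathbf{y}}\}$.

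Putting the pieces together, $\sum_{\mathbf{x}}\|\mathbf{w}_{\mathbf{x}}\| \le \sqrt{2^n\cdot 2^n|\mathcal{R}_i|} = 2^n\sqrt{|\mathcal{R}_i|}$, and substituting back gives $S^{(n,\mathcal{R}_i)}_{\mathcal{Q}} \le \frac{1}{2} + \frac{2^n\sqrt{|\mathcal{R}_i|}}{2^{n+1}|\mathcal{R}_i|} = \frac{1}{2}\big(1+\frac{1}{\sqrt{|\mathcal{R}_i|}}\big)$, as claimed. I do not expect a genuine obstacle here; the only delicate point is the orthogonality-type cancellation of the off-diagonal terms, which is exactly where the linear (mutually unbiased) structure of the functions $f_{\mathbf{y}}$ is essential --- without it the cross terms need not vanish and the bound can be exceeded. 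It is also worth recording from the two Cauchy--Schwarz steps that equality requires both $\mathbf{r}_{\mathbf{x}}$ parallel to $\mathbf{w}_{\mathbf{x}}$ and a flat profile $\|\mathbf{w}_{\mathbf{x}}\|$ independent of $\mathbf{x}$, a characterization that feeds directly into the tightness discussion of the $n=3$ case study.
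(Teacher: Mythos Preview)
Your proposal is correct and follows essentially the same route as the paper's own proof: rewrite in Bloch-vector form, eliminate the $\{\mathbf{r}_{\mathbf{x}}\}$ by aligning each with $\mathbf{w}_{\mathbf{x}}=\sum_{\mathbf{y}}(-1)^{f_{\mathbf{y}}(\mathbf{x})}\mathbf{v}_{\mathbf{y}}$, apply Cauchy--Schwarz against the all-ones vector on $\{0,1\}^n$, and use the MUBF property to kill the off-diagonal terms in $\sum_{\mathbf{x}}\|\mathbf{w}_{\mathbf{x}}\|^2$, yielding $2^n|\mathcal{R}_i|$. Your additional remark on the equality conditions (flat profile $\|\mathbf{w}_{\mathbf{x}}\|$) is a useful complement the paper does not make explicit.
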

\begin{proof}
We start by recalling that the maximal success probability of prepare and measure protocol for an $(n,\mathcal{R}_i)$ GRAC has the expression \eqref{prepmeasMax1},
\small
 \begin{eqnarray}\label{plug2}
S^{(n,\mathcal{R}_i)}_{\mathcal{Q}}&=&\max_{\{\mathbf{r}_{\mathbf{x}}\},\{\mathbf{v}_\mathbf{y}\}}\left\{\frac{1}{2^n|\mathcal{R}_i|}\sum_{\mathbf{x},\mathbf{y}}\frac{1}{2}\left(1+(-1)^{f_{\mathbf{y}}(\mathbf{x})}\mathbf{r}_{\mathbf{x}}.\mathbf{v}_{\mathbf{y}}\right)\right\},\nonumber \\ \nonumber
&=&\frac{1}{2}\left(1+\frac{1}{2^n|\mathcal{R}_i|}~\max_{\{\mathbf{r}_{\mathbf{x}}\},\{\mathbf{v}_\mathbf{y}\}}\underbrace{\left\{\sum_{\mathbf{x},\mathbf{y}}(-1)^{f_{\mathbf{y}}(\mathbf{x})}\mathbf{r}_{\mathbf{x}}.\mathbf{v}_{\mathbf{y}}\right\}}_{\Phi_{n,|\mathcal{R}_i|}(\{\mathbf{r}_\mathbf{x}\},\{\mathbf{v}_{\mathbf{y}}\})}\right). \\
\end{eqnarray}
\normalsize
Consequently, finding maximal success probability of prepare and measure protocols in $(n,\mathcal{R}_i)$ GRACs effectively boils down to solving the following optimization problem,
\small
\begin{eqnarray} \label{midProofEq}
\Phi(n,|\mathcal{R}_i|)&=&\max_{\{\mathbf{r}_{\mathbf{x}},\{\mathbf{v}_\mathbf{y}\}}\left\{\Phi_{n,|\mathcal{R}_i|}\left(\{\mathbf{r}_\mathbf{x}\},\{\mathbf{v}_{\mathbf{y}}\}\right)\right\},\nonumber\\
&=&\max_{\{\mathbf{v}_\mathbf{y}\}}\left\{\sum_{\mathbf{x}}\max_{\{\mathbf{r}_{\mathbf{x}}\}}\left\{\mathbf{r}_{\mathbf{x}}.\sum_{\mathbf{y}}(-1)^{f_{\mathbf{y}}(\mathbf{x})}\mathbf{v}_{\mathbf{y}}\right\}\right\},
\end{eqnarray}
\normalsize
Defining, $\mathbf{V}_{\mathbf{x}} = \sum_{\mathbf{y}}(-1)^{f_{\mathbf{y}}(\mathbf{x})}\mathbf{v}_{\mathbf{y}}$, we notice that the scalar product $\mathbf{r}_{\mathbf{x}}.\mathbf{V}_{\mathbf{x}}$ is maximized when $\mathbf{r}_{\mathbf{x}}$ has the same direction as $\mathbf{V}_{\mathbf{x}}$, i.e., when $\mathbf{r}_{\mathbf{x}}=\mathbf{V}_{\mathbf{x}} /\left\|\mathbf{V}_{\mathbf{x}}\right\|$, which implies $\mathbf{r}_{\mathbf{x}}.\mathbf{V}_{\mathbf{x}}=\left\|\mathbf{V}_{\mathbf{x}}\right\|$. This observation allows us to re-express \eqref{midProofEq} as,
\begin{equation}
\Phi(n,|\mathcal{R}_i|)=\max_{\{\mathbf{v}_{\mathbf{y}}\}}\left\{\sum_{\mathbf{x}}\left\|\sum_{\mathbf{y}}(-1)^{f_{\mathbf{y}}(\mathbf{x})}\mathbf{v}_{\mathbf{y}}\right\|\right\}.
\label{eq:max}
\end{equation}
We now further rewrite the equation \eqref{eq:max} as,  
\begin{equation} \label{plug1}
\Phi(n,|\mathcal{R}_i|)=\max_{\{\mathbf{v}_{\mathbf{y}}\}}\underbrace{\left\{\sum_{\mathbf{x}}\left\|\sum_{\mathbf{y}}{g}_{\mathbf{y}}(\mathbf{x})\mathbf{v}_{\mathbf{y}}\right\|\right\}}_{\Phi_{n,|\mathcal{R}_i|}\left(\{\mathbf{v}_{\mathbf{y}}\}\right)};
\end{equation}
where ${g}_{\mathbf{y}}(\mathbf{x})=(-1)^{{f}_{\mathbf{y}}(\mathbf{x})}$. Now, $\Phi_{n,|\mathcal{R}_i|}(\{\mathbf{v}_{\mathbf{y}}\})$ can be thought as the dot product between $\mathbf{z}=(1,1,\cdots,1)\in \mathbb{R}^{2^n}$ and $\mathbf{w}=\left(\left\|\sum_{\mathbf{y}}{g_{\mathbf{y}}(\mathbf{x}_1)}\mathbf{v}_{\mathbf{y}}\right\|,\ldots,\left\|\sum_{\mathbf{y}}{g_{\mathbf{y}}(\mathbf{x}_{2^n})}\mathbf{v}_{\mathbf{y}}\right\| \right)\in \mathbb{R}^{2^n}$, where $\mathbf{x}_1\equiv (x_0=0,\ldots,x_{n}=0), \ldots ,\mathbf{x}_{2^n}\equiv (x_0=1,\ldots,x_{n}=1)$. Now recall that Cauchy-Schwarz inequality implies,
\begin{equation}
\Phi_{n,|\mathcal{R}_i|}\left(\{\mathbf{v}_{\mathbf{y}}\}\right)=\boldsymbol{z} \cdot \boldsymbol{w} \leq\|\boldsymbol{z}\| \| \boldsymbol{w}\|.
\end{equation}
Now observe that,
\begin{eqnarray} \label{TheSum}
\| \boldsymbol{w}\|^2&=&\sum_{\mathbf{x}}\left\|\sum_{\mathbf{y}}{g_{\mathbf{y}}(\mathbf{x})}\mathbf{v}_{\mathbf{y}}\right\|^2\nonumber\\
&=& \sum_{\mathbf{x}}\left(\sum_{\mathbf{y}}{g_{\mathbf{y}}(\mathbf{x})}\mathbf{v}_{\mathbf{y}}\cdot\sum_{\mathbf{y}'}{g_{\mathbf{y}'}(\mathbf{x})}\mathbf{v}_{\mathbf{y}'}\right). 
\end{eqnarray}
There are two types of terms that appear in the sum in \eqref{TheSum}, $(i.)$ 
whenever $\mathbf{y}= \mathbf{y}'$, in this case $\forall \mathbf{y}\in \mathcal{R}_i: {g_{\mathbf{y}}(\mathbf{x})}\mathbf{v}_{\mathbf{y}}\cdot{g_{\mathbf{y}'}(\mathbf{x})} = \|\mathbf{v}_{\mathbf{y}}\|^2$, $(ii.)$ whenever $\mathbf{y} \neq \mathbf{y}'$, we have terms of the form ${g_{\mathbf{y}}(\mathbf{x})} {g_{\mathbf{y}'}(\mathbf{x})}\mathbf{v}_{\mathbf{y}}\cdot\mathbf{v}_{\mathbf{y}'}$. Now as all functions in our MUBS are balanced and pairwise mutually unbiased, there exits $2^{n-1}$ strings $\mathbf{x} \in (\mathcal{X}_{f_{\mathbf{y}}=0} \cap \mathcal{X}_{f_{\mathbf{y}'}=0})\cup(\mathcal{X}_{f_{\mathbf{y}}=1} \cap \mathcal{X}_{f_{\mathbf{y}'}=1})$ for which the coefficients  ${g_{\mathbf{y}}(\mathbf{x})} {g_{\mathbf{y}'}(\mathbf{x})}=1$, and there exists $2^{n-1}$ strings $\mathbf{x} \in  (\mathcal{X}_{f_{\mathbf{y}}=0} \cap \mathcal{X}_{f_{\mathbf{y}'}=1})\cup(\mathcal{X}_{f_{\mathbf{y}}=1} \cap \mathcal{X}_{f_{\mathbf{y}'}=1})$ for which the coefficients  ${g_{\mathbf{y}}(\mathbf{x})} {g_{\mathbf{y}'}(\mathbf{x})}=-1$. Consequently, all terms of the form $(ii.)$ cancel out, and we are left with,
\begin{eqnarray}
\| \boldsymbol{w}\|^2&=&\sum_{\mathbf{x}\in\{0,1\}^n,\mathbf{y}\in\mathcal{R}_i}\|\mathbf{v}_{\mathbf{y}}\|^2=2^n|\mathcal{R}_i|,\nonumber\\
\implies~\| \boldsymbol{w}\|&=&\sqrt{2^n|\mathcal{R}_i|}.
\end{eqnarray}
Since $\| \boldsymbol{z}\|=\sqrt{2^n}$, we have $\Phi_{n,|\mathcal{R}_i|}(\{\mathbf{v}_{\mathbf{y}}\})\le 2^n\sqrt{|\mathcal{R}_i|}$, which when plugged back in \eqref{plug1} and \eqref{plug2} yields the desired thesis,
\begin{eqnarray}
S^{(n,\mathcal{R}_i)}_{\mathcal{Q}}\le\frac{1}{2}\left(1+\frac{2^n\sqrt{|\mathcal{R}_i|}}{2^n|\mathcal{R}_i|}\right)=\frac{1}{2}\left(1+\frac{1}{\sqrt{\mathcal{R}_i}}\right).
\end{eqnarray}
\end{proof}
\begin{theorem} \label{EACCBineq}
The maximal success probability of an EACC protocol in an $(n,\mathcal{R}_i)$ GRAC, $S^{(n,\mathcal{R}_i)}_{EACC}$, is upper bounded by the maximum quantum value of the following Bell expression $B^{(n,\mathcal{R}_i)}_\mathcal{Q}$, i.e, $S^{(n,\mathcal{R}_i)}_{EACC}\leq B^{(n,\mathcal{R}_i)}_\mathcal{Q}$
\small
\begin{eqnarray} \label{BellEacc} \nonumber
    B^{(n,\mathcal{R}_i)} \equiv \frac{1}{2^n|\mathcal{R}_i|} \sum_{\mathbf{x},\mathbf{y},y_0\in\{0,1\}}p(u=y_0,v=f_{\mathbf{y}}(\mathbf{x}) |\mathbf{x},y_0,\mathbf{y}), \\
\end{eqnarray}
\normalsize
where $\mathbf{x} \in \{0,1\}^n$ is Alice's input, $(y_0\in\{0,1\},\mathbf{y}\in|\mathcal{R}_i|)$ are Bob's input, and $u,v\in\{0,1\}$ are outputs of Alice and Bob, respectively.
\end{theorem}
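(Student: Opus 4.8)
The plan is to show that any EACC protocol can be simulated by a Bell-type (nonsignalling correlation) experiment realizing the expression $B^{(n,\mathcal{R}_i)}$, so that the EACC success probability is at most the maximum quantum value of that Bell functional. The key conceptual point is that the one bit of classical communication $\omega$ sent from Alice to Bob in an EACC protocol can be ``absorbed'' into Bob's input/output structure: instead of Alice \emph{sending} $\omega$, we let Bob \emph{guess} it (call his guess $u$), and demand that Bob additionally output the value $v$ that he would have produced had he received $\omega=u$. Whenever Bob's guess is correct, the simulated run reproduces exactly the statistics of the EACC protocol; this is why a fresh binary input $y_0$ for Bob appears in \eqref{BellEacc}, playing the role of ``the message value to be matched.''

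Concretely, I would proceed as follows. First, fix an optimal EACC protocol: the shared state $\rho_{AB}$, Alice's measurements $\{M^{\mathbf{x}}_{\omega}\}$, and Bob's measurements $\{M^{\omega,\mathbf{y}}_{z}\}$. Second, construct from it a bipartite quantum correlation $p(u,v|\mathbf{x};y_0,\mathbf{y})$ on the \emph{same} state $\rho_{AB}$: Alice, on input $\mathbf{x}$, measures $\{M^{\mathbf{x}}_{\omega}\}$ and outputs $u=\omega$; Bob, on input $(y_0,\mathbf{y})$, measures $\{M^{y_0,\mathbf{y}}_{z}\}$ and outputs $v=z$. Third, evaluate $B^{(n,\mathcal{R}_i)}$ on this correlation: the event $\{u=y_0,\;v=f_{\mathbf{y}}(\mathbf{x})\}$ has probability $\sum_{\omega}\delta_{\omega,y_0}\,\tr\!\left(\rho_{AB}\,M^{\mathbf{x}}_{\omega}\otimes M^{\omega,\mathbf{y}}_{z=f_{\mathbf{y}}(\mathbf{x})}\right)$, and summing over the uniformly distributed extra input $y_0\in\{0,1\}$ reconstructs exactly $\sum_{\omega}\tr\!\left(\rho_{AB}\,M^{\mathbf{x}}_{\omega}\otimes M^{\omega,\mathbf{y}}_{z=f_{\mathbf{y}}(\mathbf{x})}\right)$, up to the normalization already built into the definitions. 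Hence $B^{(n,\mathcal{R}_i)}$ evaluated on this correlation equals $s^{(n,\mathcal{R}_i)}_{EACC}$ of the original protocol. Fourth, since this correlation is a legitimate quantum (Bell) correlation, its value is at most the maximum quantum value $B^{(n,\mathcal{R}_i)}_{\mathcal{Q}}$ of the functional; taking the optimal EACC protocol gives $S^{(n,\mathcal{R}_i)}_{EACC}\leq B^{(n,\mathcal{R}_i)}_{\mathcal{Q}}$.

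The main obstacle, and the step that needs care rather than cleverness, is the bookkeeping of the extra input $y_0$ and the normalization factors: one must check that the factor $\tfrac{1}{2}$ from averaging over $y_0\in\{0,1\}$ is correctly matched by the definition of $B^{(n,\mathcal{R}_i)}$ in \eqref{BellEacc} (which divides only by $2^n|\mathcal{R}_i|$, not by $2\cdot 2^n|\mathcal{R}_i|$), so that the identification $B^{(n,\mathcal{R}_i)}=s^{(n,\mathcal{R}_i)}_{EACC}$ holds \emph{exactly} and not merely up to a constant. A secondary subtlety is making explicit that nothing is lost by letting Alice output her message bit directly as $u$: since in the EACC protocol $\omega$ is already produced by a binary measurement on $\rho_{AB}$, the simulated Alice is literally the same device, so no signalling or extra resources are introduced and the resulting object is genuinely a quantum correlation to which the definition of $B^{(n,\mathcal{R}_i)}_{\mathcal{Q}}$ applies.
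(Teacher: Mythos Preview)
Your concrete construction (steps two through four) is exactly the paper's proof: Alice outputs $u=\omega$, Bob uses the fresh input $y_0$ in place of the received message to select his measurement and outputs $v=z$, and the resulting Bell value coincides with the EACC success probability because summing over $y_0\in\{0,1\}$ reproduces the sum over $\omega$. Your bookkeeping remark about the normalization is also on point and matches how the paper's computation closes.

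One small expository inconsistency: in your opening paragraph you describe $u$ as \emph{Bob's guess} of $\omega$, but in the theorem statement (and in your own step two) $u$ is \emph{Alice's} output. The correct picture is the one you implement concretely: Alice outputs her would-be message as $u$, and $y_0$ is Bob's \emph{input} that stands in for the message; the winning condition $u=y_0$ then selects the runs where Bob's substitute input happens to equal what Alice would have sent. This does not affect the validity of your formal argument, only the intuition you offer for it.
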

\begin{proof}
 To prove the above thesis all we need to demonstrate is that for an EACC $(n,\mathcal{R}_i)$ GRAC achieving success probability $S^{(n,\mathcal{R}_i)}_{EACC}=\frac{1}{2^n|\mathcal{R}_i|}\sum_{\mathbf{x},\mathbf{y},\omega}\tr(\rho_{AB}M^{\mathbf{x}}_{\omega}\otimes M^{\omega,\mathbf{y}}_{z=f_\mathbf{y}(\mathbf{x})})$, a quantum correlation can be obtained that achieves the same value for the Bell expression in \eqref{BellEacc}, so that the maximum value of the Bell expression caps the success probability of the EACC protocol. To this end we recall that an EACC protocol for $(n,\mathcal{R}_i)$ GRAC entails a pre-shared entangled state $\rho_{AB}$, local measurements for Alice $\{M^{\mathbf{x}}_{\omega}\}$, and local measurements for Bob $\{M^{\omega,\mathbf{y}}_{z}\}$, where Alice's output $\omega$ is transmitted to Bob. Now instead of transmitting the output of her measurement, Alice simply relabels it as her local output, i.e., $u=\omega$. On the other hand, Bob obtains an additional uniformly sampled input bit $y_0\in\{0,1\}$, utilizing it instead of the message from Alice to decide on the measurement $\{M^{y_0,\mathbf{y}}_{z}\}$ he performs on his part of the entangled state. Finally, Bob relabels his output as $v=z$. As a result, they obtain the correlation $p(u,v|\mathbf{x},y_0,\mathbf{y})=\tr(\rho_{AB}M^{\mathbf{x}}_{u}\otimes M^{y_0,\mathbf{y}}_{v}\})$. Clearly, this correlation because of the construction achieves the Bell value
 \small
\begin{align*}
B^{(n,\mathcal{R}_i)}_{Q}&=\frac{1}{2^n|\mathcal{R}_i|}\sum_{\mathbf{x},\mathbf{y},y_0\in\{0,1\}}\tr\left(\rho_{AB}M^{\mathbf{x}}_{u=y_0}\otimes M^{y_0,\mathbf{y}}_{v=f_{\mathbf{y}}(\mathbf{x})}\right)\\
&=\frac{1}{2^n|\mathcal{R}_i|}\sum_{\mathbf{x},\mathbf{y},\omega}\tr\left(\rho_{AB}M^{\mathbf{x}}_{\omega}\otimes M^{\omega,\mathbf{y}}_{z=f_\mathbf{y}(\mathbf{x})}\right) = S^{(n,\mathcal{R}_i)}_{EACC}   
\end{align*}
\normalsize
Therefore, the maximum Bell value of the Bell expression \eqref{BellEacc}, $B^{(n,\mathcal{R}_i)}_{Q}$ caps the success probability of EACC protocols in $(n,\mathcal{R}_i)$ GRAC, $S^{(n,\mathcal{R}_i)}_{EACC}$. 
\end{proof}

\section{$\mathbf{n=3}$: a case study}\label{sec4}
In this section, we study and characterize $(n=3,\mathcal{R}_i)$ GRACs, finding out optimal classical and quantum protocol and success probabilities, as well as noise tolerance of the latter. 
\subsection{Classical protocols}
We recall that in classical $(n=3,\mathcal{R}_i)$ GRACs, Alice encodes her input string $\mathbf{x}\in \{0,1\}^3$ onto a classical bit message $\omega \in \{0,1\}$, based on a deterministic encoding scheme $\mathcal{E}$, $\omega=f_{\mathcal{E}}(\mathbf{x})$. On the other end, Bob upon receiving $\omega$ from Alice, decodes it to produce his output $z=f_{\mathcal{D}}(\mathbf{y},\omega)$ based on a deterministic decoding scheme $\mathcal{D}$. The optimal classical strategy for $(n\rightarrow 1)$ RACs, without loss of generality, turns out to be \textit{majority encoding}, i.e., $\omega=maj(x_1,\ldots,x_n)$, and \textit{identity decoding}, i.e., $z=\omega$ \cite{Ambainis08,Ambainis15}. However, as we demonstrate below, this strategy may not be optimal for $(n=3,\mathcal{R}_i)$ GRACs. 
\onecolumngrid
\begin{center}
	\begin{table}[h!]
		\begin{tabular}{@{}c|c|c|c|c|c|c|c|c|c|@{}} \toprule
			$\mathbf{x}$ & \textcolor[rgb]{0,.5,0}{$\omega=maj(x_1,x_2,x_3)$} & \textcolor{blue}{$\omega = x_1\wedge \neg(x_2\wedge x_3)$ } & $x_1$                             & $x_2$                                         & $x_3$                                         & $x_1\oplus x_2$                   & $x_1\oplus x_3$                   & $x_2\oplus x_3$                   & $x_1\oplus x_2 \oplus x_3$                    \\ \toprule
			$(000)$      & \textcolor[rgb]{0,.5,0}{$0$}                          & \textcolor{blue}{$0$}                      & $0$                               & $0$                                           & $0$                                           & $0$                               & $0$                               & $0$                               & $0$                                           \\
			$(001)$      & \textcolor[rgb]{0,.5,0}{$0$}                          & \textcolor{blue}{$0$}                      & $0$                               & $0$                                           & $1$                                           & $0$                               & $1$                               & $1$                               & $1$                                           \\
			$(010)$      & \textcolor[rgb]{0,.5,0}{$0$}                          & \textcolor{blue}{$0$}                      & $0$                               & $1$                                           & $0$                                           & $1$                               & $0$                               & $1$                               & $1$                                           \\
			$(011)$      & \textcolor[rgb]{0,.5,0}{$1$}                          & \textcolor{blue}{$0$}                      & $0$                               & $1$                                           & $1$                                           & $1$                               & $1$                               & $0$                               & $0$                                           \\
			$(100)$      & \textcolor[rgb]{0,.5,0}{$0$}                          & \textcolor{blue}{$1$}                      & $1$                               & $0$                                           & $0$                                           & $1$                               & $1$                               & $0$                               & $1$                                           \\
			$(101)$      & \textcolor[rgb]{0,.5,0}{$1$}                          & \textcolor{blue}{$1$}                      & $1$                               & $0$                                           & $1$                                           & $1$                               & $0$                               & $1$                               & $0$                                           \\
			$(110)$      & \textcolor[rgb]{0,.5,0}{$1$}                          & \textcolor{blue}{$1$}                      & $1$                               & $1$                                           & $0$                                           & $0$                               & $1$                               & $1$                               & $0$                                           \\
			$(111)$      & \textcolor[rgb]{0,.5,0}{$1$}                          & \textcolor{blue}{$0$}                      & $1$                               & $1$                                           & $1$                                           & $0$                               & $0$                               & $0$                               & $1$                                           \\ \midrule \midrule
			\multicolumn{3}{c|}{\multirow{2}{*}{$s^{(n=3,\mathcal{R})}_{\mathcal{C}}(\mathbf{y})$}}                          & \textcolor[rgb]{0,.5,0}{$6/8$}    & \textcolor[rgb]{0,.5,0}{$6/8$}                & \textcolor[rgb]{0,.5,0}{$6/8$}                & \textcolor[rgb]{0,.5,0}{$4/8$}    & \textcolor[rgb]{0,.5,0}{$4/8$}    & \textcolor[rgb]{0,.5,0}{$4/8$}    & \textcolor[rgb]{0,.5,0}{$6/8^{(\star)}$}      \\
			\multicolumn{3}{c|}{}                                                                                            & \textcolor{blue}{$7/8[\uparrow]$} & \textcolor{blue}{$5/8^{(\star)}[\downarrow]$} & \textcolor{blue}{$5/8^{(\star)}[\downarrow]$} & \textcolor{blue}{$5/8[\uparrow]$} & \textcolor{blue}{$5/8[\uparrow]$} & \textcolor{blue}{$5/8[\uparrow]$} & \textcolor{blue}{$5/8^{(\star)}[\downarrow]$} \\ \bottomrule
		\end{tabular}
		\caption{(Color online) \label{MajIsNotOptimal}
		Explicit comparison of classical strategies, $(i.)$ \textit{majority-encoding} and \textit{identity-decoding} (green), and $(ii.)$ An encoding strategy entailing $\omega = x_1 \wedge \neg(x_2 \wedge x_3)$, along with the decoding scheme entailing producing $z=\omega \oplus 1$ whenever Bob is asked to guess $x_2$, $x_3$, or $x_1\oplus x_2 \oplus x_3$, and $z=\omega$ otherwise, for $(n=3,\mathcal{R})$ GRAC. We also enlist the guess probability for the functions $f_i\in\mathcal{F}^3_{\mathcal{R}}$ in the bottom row, for the ease of access. The asterisk $(\star)$ indicates the use of \textit{inverse identity} decoding, i.e., $z=\omega \oplus 1$ for the particular function.}
	\end{table}
\end{center}
\twocolumngrid

\begin{observation}
Unlike $(n\rightarrow1)$ RAC, \textit{majority encoding} and \textit{indentity decoding} is not optimal for all $(n=3,\mathcal{R}_i)$ GRACs,
\end{observation}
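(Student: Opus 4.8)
The plan is to establish the Observation by a single explicit counterexample, which is exactly what Table~\ref{MajIsNotOptimal} is designed to supply; the proof is therefore a verification rather than a derivation. I would work with the full set $\mathcal{R}_i=\mathcal{R}$ for $n=3$, so that $|\mathcal{R}|=7$ and the seven functions are $x_1,x_2,x_3,x_1\oplus x_2,x_1\oplus x_3,x_2\oplus x_3,x_1\oplus x_2\oplus x_3$. The first step is to compute the success probability of the putative ``optimal'' strategy (majority encoding $\omega=\mathrm{maj}(x_1,x_2,x_3)$ together with identity decoding $z=\omega$, allowing also the ``inverse identity'' decoding $z=\omega\oplus1$ where that helps, since for a fixed encoding the optimal decoding is chosen function-by-function). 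For each $\mathbf{y}\in\mathcal{R}$ one counts, over the eight inputs $\mathbf{x}$, how many satisfy $\omega=f_{\mathbf{y}}(\mathbf{x})$ (or its complement), and one reads off $6/8$ for the three single-bit functions, $4/8$ for the three two-bit parities, and $6/8$ for the three-bit parity; averaging gives $s=\tfrac{1}{7}\bigl(3\cdot\tfrac{6}{8}+3\cdot\tfrac{4}{8}+\tfrac{6}{8}\bigr)=\tfrac{36}{56}=\tfrac{9}{14}$.

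The second step is to exhibit a strictly better strategy, namely the encoding $\omega=x_1\wedge\neg(x_2\wedge x_3)$ paired with the decoding that outputs $z=\omega\oplus1$ when Bob is asked for $x_2$, $x_3$, or $x_1\oplus x_2\oplus x_3$ and $z=\omega$ otherwise. Again one tabulates $\omega$ against each $f_{\mathbf{y}}(\mathbf{x})$ over the eight inputs: this yields $7/8$ for $x_1$, $5/8$ for each of $x_2,x_3$, $5/8$ for each of the three two-bit parities, and $5/8$ for the three-bit parity, so the average is $s=\tfrac{1}{7}\cdot 7\cdot\tfrac{5}{8}\;+\;\tfrac{1}{7}\cdot\tfrac{2}{8}=\tfrac{37}{56}>\tfrac{36}{56}$. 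Since this alternative strategy beats majority-plus-identity on the $(n=3,\mathcal{R})$ GRAC, majority encoding with identity decoding cannot be optimal for every $(n=3,\mathcal{R}_i)$ GRAC, which is precisely the claim. I would present the two strategies side by side, refer the reader to Table~\ref{MajIsNotOptimal} for the round-by-round bookkeeping, and note that the decoding attached to each encoding is already the optimal one for that encoding (so the comparison is genuinely between best-decoding-given-encoding values, not artifacts of a suboptimal decoder).

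The only mildly delicate point — and the one I would flag explicitly rather than the ``main obstacle'', since there is no real obstacle in a counterexample proof — is making clear that both numbers quoted are the maximal success probabilities \emph{for the indicated encodings}, obtained by choosing, independently for each $\mathbf{y}$, whether to decode with $z=\omega$ or $z=\omega\oplus1$ (whichever matches $f_{\mathbf{y}}$ on more inputs); the asterisks in Table~\ref{MajIsNotOptimal} mark exactly the functions where the inverse decoding is used. One should also remark that the strict inequality $\tfrac{37}{56}>\tfrac{36}{56}$ settles the matter for $\mathcal{R}_i=\mathcal{R}$; whether majority-plus-identity fails for other proper subsets $\mathcal{R}_i\subsetneq\mathcal{R}$ is not needed for the Observation, which only asserts non-optimality ``for all'' $(n=3,\mathcal{R}_i)$ GRACs, i.e. that there exists at least one for which it fails. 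Hence the proof closes immediately after the table-based computation, with no further estimates required.
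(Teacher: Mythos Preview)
Your proposal is correct and follows essentially the same route as the paper: both take the full $(n=3,\mathcal{R})$ GRAC as the counterexample, compute $36/56$ for majority encoding with the best per-function decoding, and exhibit the encoding $\omega=x_1\wedge\neg(x_2\wedge x_3)$ with the same decoding rule to reach $37/56>36/56$. The only cosmetic difference is that the paper additionally remarks (via a linear program) that $37/56$ is in fact the classical optimum, which you correctly observe is unnecessary for the Observation itself.
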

\begin{proof}
To prove this thesis, we shall consider the $(n=3,\mathcal{R})$ GRAC which entails the entire MUBS $\mathcal{F}^3_{\mathcal{R}}$. For this task, the \textit{majority encoding}, i.e., $\omega=maj(x_1,x_2,x_3)$, and \textit{identity decoding}, i.e., $z=\omega$ strategy yields a success probability of $s^{(n,\mathcal{R})}_{\mathcal{C}}=\frac{32}{56}$. Moreover, even if we allow for a \textit{inverse identity decoding}, for the function $x_1\oplus x_2\oplus x_3$, i.e., $z=\omega\oplus 1$, we obtain success probability of $s^{(3,\mathcal{R})}_{\mathcal{C}}=\frac{36}{56}$ (see Table \ref{MajIsNotOptimal}). However, employing a straightforward linear program, we find that the optimal success probability of $(n=3,\mathcal{R})$ GRAC turns out to be $S^{(3,\mathcal{R})}_{\mathcal{C}}=\frac{37}{56}$. Specifically, Alice's encoding strategy entails $\omega = x_1 \wedge \neg(x_2 \wedge x_3)$, whereas Bob's decoding scheme entails producing $z=\omega \oplus 1$ whenever he is asked to guess $x_2$, $x_3$, or $x_1\oplus x_2 \oplus x_3$, and $z=\omega$ otherwise. We note that this strategy is not unique, as there exist other strategies that saturate the classical optimal success probability.

\end{proof}
 Moving on, we used straightforward linear programs to obtain the optimal classical success probabilities for all non-trivial $(n=3,\mathcal{R}_i)$ GRACs (see Table \ref{classical}). 
\begin{table}[b]
\begin{tabular}{@{}c|c|c@{}}
	\toprule
		$|\mathcal{R}_i|$    & $S^{(|\mathcal{R}_i|\rightarrow 1)}_{\mathcal{C}}$ & $S^{(3,\mathcal{R}_i)}_{\mathcal{C}}$                                                                        \\ \toprule
		$2$                  & $\frac{3}{4} = 0.75$                         & $\frac{3}{4} = 0.75$                                                                                                \\ \midrule
		$3$                  & $\frac{3}{4} = 0.75$                         & $\frac{3}{4} = 0.75$                                                                                                \\ \midrule
		\multirow{2}{*}{$4$} & \multirow{2}{*}{$\frac{11}{16} = 0.6875$}      & \begin{tabular}[c]{@{}c@{}}$\frac{3}{4} = 0.75$ \\ $\ (\text{if } f_i\oplus f_j = f_k \oplus f_l)$\end{tabular}     \\ \cmidrule(l){3-3} 
		&                                       & \begin{tabular}[c]{@{}c@{}}$\frac{11}{16} = 0.6875$\\  $\ (\text{if} f_i\oplus f_j \neq f_k \oplus f_l)$\end{tabular} \\ \midrule
		$5$                  & $\frac{11}{16} = 0.6875$                       & $\frac{7}{10} = 0.7$                                                                                               \\ \midrule
		$6$                  & $\frac{21}{32} = 0.65625$                       & $\frac{2}{3} \approx 0.67$                                                                                                \\ \midrule
		$7$                  & $\frac{21}{32}= 0.65625$                       & $\frac{37}{56}\approx 0.66$                                                                                              \\ \bottomrule
\end{tabular}
\caption{\label{classical} The maximal classical success probability of $(3,\mathcal{R}_i)$ GRACs, $S^{(3,\mathcal{R}_i)}_{\mathcal{C}}$, listed along with the number of MUBFs Bob's required to guess. These values are contrasted against the maximal success probability of standard $(n=|\mathcal{R}_i|\rightarrow 1)$ RACs, $S^{(|\mathcal{R}_i|\rightarrow 1)}_{\mathcal{C}}$, which form lower bounds for the former according to Theorem \ref{GRACLB}. These were obtained via linear programming, and by retrieving explicit classical strategies. The case of four MUBFs, presents a peculiarity, i.e., when the four functions $\{f_i,f_j,f_k,f_l\}$ as such that $f_i\oplus f_j = f_k \oplus f_l$ the classical protocols can attain a success probability of $0.75$, whereas in the other cases, classical protocols cannot go beyond $\frac{11}{16}$, which is also the maximal success probability of $(4\rightarrow 1)$ RAC (see Observation \ref{successDependsOnQuestions}). Moreover, notice that for the latter case, adding any MUBF to the latter increases the maximal classical average success probability, demonstrating the surprising feature of GRACs termed \textit{harder the task, greater the payoff} (see Observation \ref{hard}).}
\end{table}
 We find that the optimal strategies for all cases, are (equivalent up to rebelling) to either \textit{majority encoding} and \textit{identity decoding}, or the strategy entailing the encoding $\omega = x_1 \wedge \neg(x_2 \wedge x_3)$. While for $|\mathcal{R}_i|\in\{2,3\}$, we find the maximal classical success probability of the $(n=3,\mathcal{R}_i)$ GRACs remains the same as that of the $(|\mathcal{R}_i|\rightarrow 1)$ RACs. For $|\mathcal{R}_i|=\{5,6\}$, the maximal classical success probability of the $(n=3,\mathcal{R}_i)$ GRACs exceeds that of the $(|\mathcal{R}_i|\rightarrow 1)$ RACs. The case of four questions, $|\mathcal{R}_i|=4$, presents a peculiarity, and forms the basis of the following observation.
 \begin{observation} \label{successDependsOnQuestions}
 The maximum average success probability of $(n=3,\mathcal{R}_i)$ GRAC depends on the list of functions that Bob needs to guess, when $|\mathcal{R}_i|=4$.
 \end{observation}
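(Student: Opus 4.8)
The plan is to exhibit two four-element subsets of $\mathcal{R}$ whose GRACs have different maximal classical success probabilities. Take $\mathcal{R}_1$ to be (the labels of) $\{x_1,x_2,x_3,x_1\oplus x_2\oplus x_3\}$ and $\mathcal{R}_2$ that of $\{x_1,x_2,x_3,x_1\oplus x_2\}$. These are the two qualitatively distinct cases identified in Table~\ref{classical}: for $\mathcal{R}_1$ the bitwise XOR of the four labels is $\mathbf{0}$, equivalently $x_1\oplus x_2=x_3\oplus(x_1\oplus x_2\oplus x_3)$, so $f_i\oplus f_j=f_k\oplus f_l$ for some pairing; for $\mathcal{R}_2$ no such relation holds for any pairing. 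We will show $S^{(3,\mathcal{R}_1)}_{\mathcal{C}}=3/4$ while $S^{(3,\mathcal{R}_2)}_{\mathcal{C}}<3/4$, which is all the statement requires; the sharp value $S^{(3,\mathcal{R}_2)}_{\mathcal{C}}=11/16$ is an optional refinement.

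First we would recast the classical optimum in Fourier form. A deterministic encoding is a sign function $\epsilon\colon\{0,1\}^3\to\{+1,-1\}$ via $\epsilon(\mathbf{x})=(-1)^{\omega(\mathbf{x})}$, and for each $\mathbf{y}$ Bob's best decoder is a relabelling of the received bit, so the number of rounds won on $f_{\mathbf{y}}$ equals $4+\tfrac12|c_{\mathbf{y}}|$ with $c_{\mathbf{y}}=\sum_{\mathbf{x}}\epsilon(\mathbf{x})(-1)^{\mathbf{y}\cdot\mathbf{x}}$. Hence
\[
S^{(3,\mathcal{R}_i)}_{\mathcal{C}}=\frac12+\frac{1}{16|\mathcal{R}_i|}\max_{\epsilon}\sum_{\mathbf{y}\in\mathcal{R}_i}|c_{\mathbf{y}}|.
\]
Since $\sum_{\mathbf{y}\in\{0,1\}^3}c_{\mathbf{y}}^2=64$ by Parseval, the Cauchy-Schwarz inequality gives $\sum_{\mathbf{y}\in\mathcal{R}_i}|c_{\mathbf{y}}|\le 8\sqrt{|\mathcal{R}_i|}$, hence $S^{(3,\mathcal{R}_i)}_{\mathcal{C}}\le\tfrac12\bigl(1+1/\sqrt{|\mathcal{R}_i|}\bigr)$, which for $|\mathcal{R}_i|=4$ reads $S^{(3,\mathcal{R}_i)}_{\mathcal{C}}\le 3/4$ (numerically the same as the prepare-and-measure bound of Theorem~\ref{UBPrepAndMeas}, used here merely as a classical upper bound).

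Next we would decide which subsets saturate this. For $\mathcal{R}_1$, majority encoding $\omega=maj(x_1,x_2,x_3)$ with identity decoding on $x_1,x_2,x_3$ and inverse-identity decoding on $x_1\oplus x_2\oplus x_3$ wins $6/8$ rounds on each of the four functions (read off from Table~\ref{MajIsNotOptimal}), so $S^{(3,\mathcal{R}_1)}_{\mathcal{C}}=3/4$. For $\mathcal{R}_2$, equality in Cauchy-Schwarz would force $c_{\mathbf{y}}=0$ for $\mathbf{y}\notin\mathcal{R}_2$ and $|c_{\mathbf{y}}|=4$ for $\mathbf{y}\in\mathcal{R}_2$, i.e.\ $\epsilon(\mathbf{x})=\tfrac12\sum_{\mathbf{y}\in\mathcal{R}_2}\sigma_{\mathbf{y}}(-1)^{\mathbf{y}\cdot\mathbf{x}}$ with signs $\sigma_{\mathbf{y}}=\pm1$; but then $\epsilon^2$ would have to be the constant function $1$, whereas expanding $\epsilon^2$ into characters the coefficient of $\mathbf{x}\mapsto(-1)^{x_1+x_2}$ receives a contribution only from the product of the $x_1$- and $x_2$-summands — the only pair in $\mathcal{R}_2$ whose labels XOR to that of $x_1\oplus x_2$ — and equals $\tfrac12\sigma_{x_1}\sigma_{x_2}=\pm\tfrac12\neq0$, a contradiction. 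Hence $S^{(3,\mathcal{R}_2)}_{\mathcal{C}}<3/4=S^{(3,\mathcal{R}_1)}_{\mathcal{C}}$. For the exact value, Theorem~\ref{GRACLB} applied with the $(4\to1)$ RAC gives $S^{(3,\mathcal{R}_2)}_{\mathcal{C}}\ge 11/16$; and splitting $\{0,1\}^3$ according to its last bit, each of $c_{x_1},c_{x_2},c_{x_1\oplus x_2},c_{x_3}$ is an explicit combination of the $\{0,1\}^2$-Fourier coefficients of the two halves of $\epsilon$, and a short case check (each half being either an affine sign or a ``generic'' sign) shows $\sum_{\mathbf{y}\in\mathcal{R}_2}|c_{\mathbf{y}}|\le 12$, i.e.\ $S^{(3,\mathcal{R}_2)}_{\mathcal{C}}\le 11/16$; alternatively this is just the linear program reported in Table~\ref{classical}.

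The hard part is this last step. Everything up to the universal $|\mathcal{R}_i|=4$ bound and the majority strategy for $\mathcal{R}_1$ is routine; what carries the content is (a) the equality analysis that forbids the value $3/4$ for $\mathcal{R}_2$ — where the absence of the pair-sum relation $f_i\oplus f_j=f_k\oplus f_l$ is precisely what obstructs a genuine $\pm1$-valued $\epsilon$ — and, if one insists on the sharp $11/16$ rather than only the strict inequality, (b) the finite case analysis (or the linear program) supplying the matching upper bound.
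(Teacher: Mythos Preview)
Your argument is correct and takes a genuinely different route from the paper. The paper's own proof is computational: it exhibits the majority strategy for $\mathcal{R}_1=\{x_1,x_2,x_3,x_1\oplus x_2\oplus x_3\}$ and the strategy $\omega=x_1\wedge\neg(x_2\wedge x_3)$ for $\mathcal{R}_2=\{x_1,x_2,x_3,x_1\oplus x_2\}$, and then simply appeals to the linear program (already used for Table~\ref{classical}) to certify that $3/4$ and $11/16$ are the respective optima. Your approach instead recasts the classical optimum via the Walsh--Fourier coefficients $c_{\mathbf{y}}$, obtains the uniform bound $S^{(3,\mathcal{R}_i)}_{\mathcal{C}}\le 3/4$ from Parseval plus Cauchy--Schwarz, and then analyses the equality case: the key observation that a $\pm1$-valued $\epsilon$ supported only on $\mathcal{R}_2$ would force a nonzero Fourier coefficient of $\epsilon^2$ at $(1,1,0)$ is exactly where the structural distinction (presence or absence of the relation $f_i\oplus f_j=f_k\oplus f_l$) enters, and it cleanly explains \emph{why} the two cases differ rather than merely verifying that they do. The trade-off is that the paper's LP route immediately yields the sharp value $11/16$ together with an explicit optimal strategy, whereas your argument gives only the strict inequality $S^{(3,\mathcal{R}_2)}_{\mathcal{C}}<3/4$ unless you also carry out the finite case check you sketch at the end; but for the Observation as stated, the strict inequality already suffices.
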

 \begin{proof}
 Consider the case, when Bob is required to guess $\mathcal{F}^3_{\mathcal{R}_i}\equiv\{x_1,x_2,x_3,x_1\oplus x_2 \oplus x_3\} \in \mathcal{F}^3_{\mathcal{R}}$, then \textit{majority encoding} and \textit{identity decoding} yields the optimal success probability, $S^{(n=3,\mathcal{R}_i)}_{\mathcal{C}} = \frac{3}{4}$. In general, whenever Bob has to guess $\mathcal{F}^3_{\mathcal{R}_i}\equiv\{f_i,f_j,f_k,f_l \in \mathcal{F}^3_{\mathcal{R}}\}$ such that $\forall \mathbf{x} \in\{0,1\}^3: f_i(\mathbf{x}) \oplus f_j(\mathbf{x}) = f_k(\mathbf{x})\oplus f_l(\mathbf{x})$, a strategy equivalent upto relabelling \textit{majority encoding} and \textit{identity decoding} yields the optimal success probability, $S^{(n=3,\mathcal{R}_i)}_{\mathcal{C}} = \frac{3}{4}$. 
 
 Now consider the case when Bob is required to guess $\mathcal{F}^3_{\mathcal{R}_i}\equiv\{x_1,x_2,x_3,x_1\oplus x_2\} \in \mathcal{F}^3_{\mathcal{R}}$. In this case the strategy entailing the encoding scheme $\omega=x_1 \wedge \neg(x_2 \wedge x_3)$, and the decoding scheme $z=\omega$ when Bob is asked to guess $x_1$ or $x_1\oplus x_2$, and $z = \omega \oplus 1$ otherwise, attains the optimal success probability $S^{(n=3,\mathcal{R}_i)}_{\mathcal{C}} = \frac{11}{16}$. In fact, whenever, $\exists \mathbf{x} \in\{0,1\}^3: f_i(\mathbf{x}) \oplus f_j(\mathbf{x}) \neq f_k(\mathbf{x})\oplus f_l(\mathbf{x})$, a strategy equivalent upto relabelling to the aforementioned strategy attains the optimal success probability, $S^{(n=3,\mathcal{R}_i)}_{\mathcal{C}} = \frac{11}{16}$. 
 \end{proof}
 The case of $(n=3,\mathcal{R}_i)$ GRAC with four questions, i.e., when $|\mathcal{R}_i|=4$, presents yet another peculiarity.  
 \begin{observation} \label{hard}[Harder the task, greater the payoff]
 In the case of four questions, $|\mathcal{R}_i|=4$, and $\exists \mathbf{x} \in\{0,1\}^3: f_i(\mathbf{x}) \oplus f_j(\mathbf{x}) \neq f_k(\mathbf{x})\oplus f_l(\mathbf{x})$, the maximal average success probability increases from $\frac{11}{16}=0.6875$ to $\frac{7}{10}=0.7$ when Bob is asked to guess any additional mutually unbiased balanced function of Alice's input. 
 \end{observation}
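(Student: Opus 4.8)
The plan is to reduce the statement to the single fact that $S^{(3,\mathcal{R}_i)}_{\mathcal{C}}=\tfrac{7}{10}$ for \emph{every} five-element question set $\mathcal{R}_i\subseteq\mathcal{R}$. Granting this, the observation is immediate: Observation~\ref{successDependsOnQuestions} fixes the value of a four-element set satisfying $\exists\mathbf{x}:f_i(\mathbf{x})\oplus f_j(\mathbf{x})\neq f_k(\mathbf{x})\oplus f_l(\mathbf{x})$ at $\tfrac{11}{16}$, every such set is contained in a five-element set, and $\tfrac{7}{10}>\tfrac{11}{16}$; the same remark also shows the increase does not depend on which extra MUBF Bob is given. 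To pin down the five-element value uniformly I would first use symmetry. Identifying $\mathbf{r}\in\mathcal{R}$ with a nonzero vector of $\mathbb{F}_2^{3}$ and relabelling Alice's input by an invertible $\mathbb{F}_2$-linear map, one checks that the GRAC task, and hence $S^{(3,\mathcal{R}_i)}_{\mathcal{C}}$, depends on $\mathcal{R}_i$ only through its $\mathrm{GL}(3,\mathbb{F}_2)$-orbit; since two distinct nonzero vectors of $\mathbb{F}_2^{3}$ are automatically independent, $\mathrm{GL}(3,\mathbb{F}_2)$ is transitive on two-element subsets and therefore on their complements, the five-element subsets. It thus suffices to treat one representative, for instance $\mathcal{R}_i=\{100,010,110,101,111\}$.

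Next I would recast the classical optimum as a Walsh--Fourier estimate. Restricting — without loss, as argued in Sec.~\ref{sec3} — to a deterministic encoding $\omega=\mathcal{E}(\mathbf{x})$ and to decodings of the form $z=\omega$ or $z=\omega\oplus1$, and writing $E(\mathbf{x})=(-1)^{\mathcal{E}(\mathbf{x})}$ and $\widehat{E}(\mathbf{y})=\tfrac18\sum_{\mathbf{x}}E(\mathbf{x})(-1)^{f_{\mathbf{y}}(\mathbf{x})}$, the two decodings succeed on question $\mathbf{y}$ with probabilities $\tfrac12\big(1\pm\widehat{E}(\mathbf{y})\big)$, so the best per-question success is $\tfrac12\big(1+|\widehat{E}(\mathbf{y})|\big)$ and
\[
S^{(3,\mathcal{R}_i)}_{\mathcal{C}}=\frac12+\frac{1}{2\,|\mathcal{R}_i|}\,\max_{\mathcal{E}}\ \sum_{\mathbf{y}\in\mathcal{R}_i}\big|\widehat{E}(\mathbf{y})\big|.
\]
Here each $8\widehat{E}(\mathbf{y})$ is an even integer in $[-8,8]$, and $\sum_{\mathbf{y}}\big(8\widehat{E}(\mathbf{y})\big)^{2}=64$ by Parseval. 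The upper bound then drops out: no five of the seven numbers $\{\,8|\widehat{E}(\mathbf{y})|:\mathbf{y}\neq\mathbf{0}\,\}$ can sum to more than $16$, because five even integers in $[0,8]$ with sum at least $18$ already have sum of squares at least $68>64$; hence $\sum_{\mathbf{y}\in\mathcal{R}_i}|\widehat{E}(\mathbf{y})|\le2$ and $S^{(3,\mathcal{R}_i)}_{\mathcal{C}}\le\tfrac{7}{10}$ when $|\mathcal{R}_i|=5$.

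For the matching lower bound I would exhibit an explicit strategy on the representative. The encoding $\mathcal{E}(\mathbf{x})=x_1\oplus x_2x_3$ (XOR of $x_1$ with the AND of $x_2,x_3$) has Walsh spectrum supported exactly on $\{100,110,101,111\}\subseteq\mathcal{R}_i$, equal to $+\tfrac12$ on $100,110,101$ and $-\tfrac12$ on $111$; decoding with $z=\omega$ on the first three questions, $z=\omega\oplus1$ on $111$, and arbitrarily on the remaining question $010$ (where $\widehat{E}=0$), Bob succeeds with probability $\tfrac34$ on four questions and $\tfrac12$ on the last, giving exactly $\tfrac{7}{10}$; by the symmetry step this value is attained for every five-element $\mathcal{R}_i$. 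I expect the genuinely non-routine step to be precisely this choice of encoder: unlike in ordinary $(n\to1)$ RACs the optimum is \emph{not} majority encoding and not an affine (XOR) function, but a maximally nonlinear Boolean function whose Walsh support avoids the trivial frequency and the two frequencies lying outside $\mathcal{R}_i$. Recognising that the Parseval bound above is tight only for such functions is the crux; the remainder is a finite verification over the $2^{8}$ three-bit encoders, which is also exactly what the linear program behind Table~\ref{classical} carries out.
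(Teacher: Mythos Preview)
Your proposal is correct and in fact goes well beyond what the paper does. The paper offers no self-contained proof of Observation~\ref{hard}: it simply reads off the entries $\tfrac{11}{16}$ and $\tfrac{7}{10}$ from Table~\ref{classical}, which were produced by a brute-force linear program over all deterministic encoders and decoders, and records the inequality. Your route is structural. The $\mathrm{GL}(3,\mathbb{F}_2)$-transitivity argument on two-element (hence five-element) subsets of $\mathcal{R}$ explains in one stroke why the table shows a single value at $|\mathcal{R}_i|=5$; the Walsh--Fourier rewriting together with Parseval and the parity constraint on the coefficients yields the tight analytic upper bound $\tfrac{7}{10}$ without any enumeration; and your encoder $\mathcal{E}(\mathbf{x})=x_1\oplus x_2x_3$, whose Walsh spectrum is flat of magnitude $\tfrac12$ on exactly four frequencies and vanishes elsewhere, makes it transparent \emph{why} the optimum is not realised by majority or by any affine encoder --- a fact the paper only discovers empirically via Table~\ref{MajIsNotOptimal} and the ad hoc strategy $\omega=x_1\wedge\neg(x_2\wedge x_3)$. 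The paper's LP approach is quicker to execute but opaque; your approach buys a genuine explanation and a template that could, with more work, be pushed to larger~$n$.
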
 
 This is especially surprising as, in general, for $(n \rightarrow 1)$ RAC and generic communication complexity tasks, increasing the number of questions that Bob is required to guess, $n$, decreases the maximal average success probability (see Table \ref{classical}).
\begin{table}[]
\begin{tabular}{@{}c|c|c@{}}
		\toprule
		$|\mathcal{R}_i|$    & $S^{(|\mathcal{R}_i|\rightarrow 1)}_{\mathcal{Q}}$ & $S^{(3,\mathcal{R}_i)}_{\mathcal{Q}}$                                                                        \\ \toprule
		$2$                  & $\frac{1}{2}(1+\frac{1}{\sqrt{2}}) \approx 0.85$                         & $\frac{1}{2}(1+\frac{1}{\sqrt{2}}) \approx 0.85$                                                                                                \\ \midrule
		$3$                  & $\frac{1}{2}(1+\frac{1}{\sqrt{3}}) \approx 0.78$                        & $\frac{1}{2}(1+\frac{1}{\sqrt{3}}) \approx 0.78$                                                                                                \\ \midrule
		
		\multirow{2}{*}{$4$} & \multirow{2}{*}{$\frac{1}{2}(1+\frac{\sqrt{2}+\sqrt{6}}{8})\approx 0.74$}      & \begin{tabular}[c]{@{}c@{}}$\frac{3}{4}=0.75$ \\ $\ (\text{if } f_i\oplus f_j = f_k \oplus f_l)$\end{tabular}     \\ \cmidrule(l){3-3} 
		&                                       & \begin{tabular}[c]{@{}c@{}}$\frac{1}{2}(1+\frac{\sqrt{2}+\sqrt{6}}{8})\approx 0.74$\\  $\ (\text{if} f_i\oplus f_j \neq f_k \oplus f_l)$\end{tabular} \\ \midrule
		$5$                  & $\approx 0.7135$                       & $\frac{1}{2}(1+\frac{1}{\sqrt{5}}) \approx 0.72$                                                                                               \\ \midrule
		$6$                  & $\approx 0.6940$                       & $\frac{1}{2}(1+\frac{1}{\sqrt{6}}) \approx 0.70$                                                                                                \\ \midrule
		$7$                  & $\approx 0.6786$                       & $\frac{1}{2}(1+\frac{1}{\sqrt{7}}) \approx 0.69$                                                                                              \\ \bottomrule
\end{tabular}
\caption{\label{quantum} The maximal quantum success probability of prepare and measure qubit protocols for $(3,\mathcal{R}_i)$ GRACs, $S^{(3,\mathcal{R}_i)}_{\mathcal{Q}}$, listed along with the number of MUBFs Bob is required to guess. These values are contrasted against the maximal quantum success probability of standard $(n=|\mathcal{R}_i|\rightarrow 1)$ RACs, $S^{(|\mathcal{R}_i|\rightarrow 1)}_{\mathcal{Q}}$, which form lower bounds for the former according to Theorem \ref{GRACLB}. All values were obtained upon coincidence (upto numerical precision) of lower bounds obtained from see-saw semi-definite programming method, and upper bounds obtained via \textit{Navascues-Vertesi} hierarchy of semidefinite programming relaxations, along with retrieval of explicit quantum protocols. In all cases expect when $|\mathcal{R}_i|=4$, notice that the maximal quantum success probabilities saturate the upper-bounds, $\frac{1}{2}\left(1+\frac{1}{\sqrt{|\mathcal{R}_i}|}\right)$, which follow from Theorem \ref{UBPrepAndMeas}. In particular, for the case of four MUBFs, when the four functions $\{f_i,f_j,f_k,f_l\}$ are such that $f_i\oplus f_j = f_k \oplus f_l$ the qubit prepare and measure protocols cannot exceed the classical maximum success probability, $0.75$, where in general qubit protocols go beyond the classical bound, $\frac{11}{16}=0.6875$, but saturate the maximal success probability of qubits for $(4\rightarrow 1)$ RAC, $\frac{1}{2}\left(1+\frac{\sqrt{2}+\sqrt{6}}{8}\right)\approx 0.74$.}
\end{table}

\subsection{Quantum prepare and measure protocols}
We now investigate the performance of qubit prepare and measure protocols in $(n=3,\mathcal{R}_i)$ GRACs. We employed standard \textit{see-saw} semi-definite programming technique to obtain lower bounds on maximal success probability of such protocols. Additionally, we employed the \textit{Navascues-Vertesi} hierarchy of semidefinite programming relaxations to obtain tight upper bounds. Whenever, the lower and upper bounds coincide (up to machine precision), they yield a proof of optimality. The consequent optimal values are listed in Table \ref{quantum}. Additionally, we retrieve explicit quantum protocols which saturate these values (see Appendix Section \ref{ExplicitAppend}).  
\subsubsection{Noisy Channels}
In this section, we investigate the effect of noisy channels on the performance of qubit prepare and measure protocols in $(n=3,\mathcal{R}_i)$ GRACs. Recall that a quantum channel is mathematically described by a completely positive trace preserving map $\Lambda:L(\mathcal{H}_{in})\to L(\mathcal{H}_{out})$, where $L(\mathcal{X})$ be the set of linear operators acting on the Hilbert space $\mathcal{X}$; $\mathcal{H}_{in}~\&~ \mathcal{H}_{out}$ respectively denote input and output Hilbert space of the map $\Lambda$ \cite{Kraus83,Chuang00,Wilde13}. Since we are considering qubit communication therefore we have $\mathcal{H}_{in})\equiv\mathcal{H}_{out}\equiv\mathbb{C}^2$. Furthermore, a channel is known to be unital if completely mixed state remains invariant under it. In the following we analyze effect of the following two unital qubit channels \cite{Chuang00,Wilde13} on the performance of $(n=3,\mathcal{R}_i)$ GRACs.
\begin{figure}[b!]
	\centering
	\includegraphics[width=0.75\linewidth]{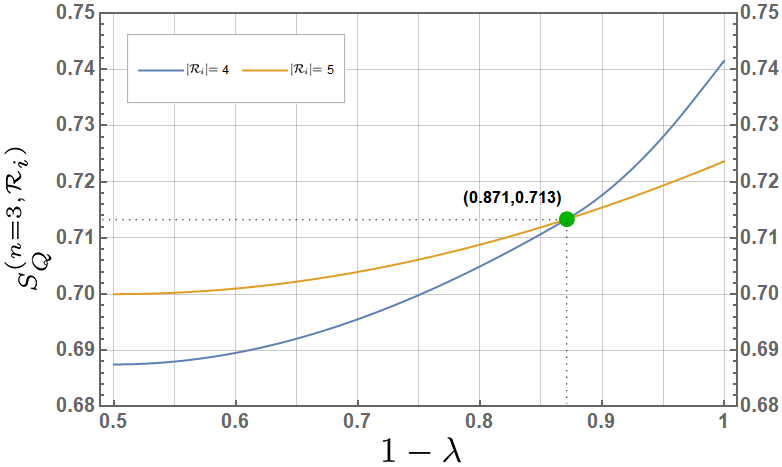}
	\caption{The maximal quantum success probability of $(n=3,\mathcal{R}_i)$ GRAC with $|\mathcal{R}_i|=\{4,5\}$ in presence of dephasing channel with noise parameter $\lambda$. Note that for $|\mathcal{R}_i|=4$ we considered only the cases for which $f_i\oplus f_j \neq f_k\oplus f_l$. Moreover, we find that for a range of the noise parameter $1-\lambda \in (0.5,0.871)$ the maximal quantum success probability $(n=3,\mathcal{R}_i)$ GRAC with four MUBFs exceeds that of $(n=3,\mathcal{R}_i)$ GRAC with five MUBFs, which is yet another instance of \textit{harder the task, greater the payoff}.}
    \label{interPNG}
\end{figure}
\begin{figure}[b!]
	\centering
	\includegraphics[width=0.75\linewidth]{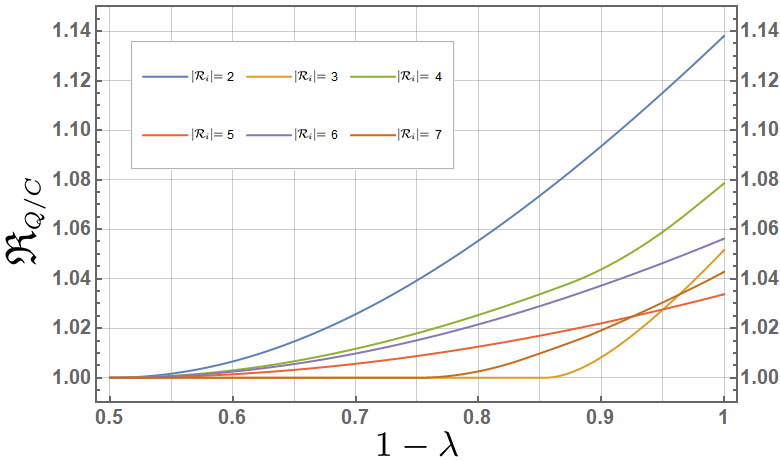}
	\caption{The ratio of maximal quantum success probability to maximal classical success probability, $\mathfrak{R}_{\scaleto{Q/C}{4.5pt}}$ = $\frac{S^{(n=3,\mathcal{R}_i)}_Q}{S^{(n=3,\mathcal{R}_i)}_C}$, for  $(n=3,\mathcal{R}_i)$ GRACs with $|\mathcal{R}_i|$=[2,7] plotted against $1-\lambda$ where $\lambda$ is the the noise parameter of the dephasing channel.}
    \label{ratioPNG}
\end{figure}

$(a.)$ \textit{Depolarising channel:} The effect of a depolarizing channel $\Phi^{\lambda}_{Depol}$ is to keep the input state intact with probability $(1-\lambda)$, while with probability $\lambda$ an 'error' occurs entirely replacing the input state by white noise, i.e., a generic initial state $\rho_{in} = \frac{\mathbb{I}+\mathbf{n}\cdot\boldsymbol{\sigma}}{2}$ is distorted to,
    \begin{equation}
         \rho_{out} = \Phi^{\lambda}_{Depol}(\rho_{in}) = \lambda\frac{\mathbb{I}}{2}+(1-\lambda)\rho_{in},  
    \end{equation}
    where $\lambda \in [0,1]$ is the noise parameter. Now, for qubits, increasing the noise parameter $\lambda$ shrinks the Bloch sphere uniformly, so it is enough to consider the noisy versions of the optimal preparations we recovered above. In the Table \ref{DepolTable} we list the threshold value of the noise parameter $\lambda$ such that we continue to retrieve a quantum advantage in $(n=3,\mathcal{R}_i)$ GRACs. Yet again, we witness the reappearence of the characteristic phenomenon of GRACs, namely, \textit{harder the task, greater the payoff}, as the noise threshold $\lambda_{crit}$ in the cases with $|\mathcal{R}_i|=5,6,7$ MUBFs exceeds that of $|\mathcal{R}_i|=3$. 

\begin{table}[t]
\begin{tabular}{@{}c|c|c|c|c|c|c@{}}
			\toprule
			$|\mathcal{R}_i|$ & $2$       & $3$       & $4$       & $5$       & $6$       & $7$       \\ \midrule
			$\lambda_{crit}$  & $0.29289$ & $0.13396$ & $0.22354$ & $0.10555$ & $0.18349$ & $0.14957$ \\ \bottomrule
\end{tabular}
\caption{\label{DepolTable} Threshold value of the noise parameter $\lambda_{crit}$ for depolarizing channel, such that we continue to retrieve the quantum advantage in $n=3,\mathcal{R}_i$ GRACs. Note that for $|\mathcal{R}_i|=4$ we considered only the cases for which $f_i\oplus f_j \neq f_k\oplus f_l$. Consequently, we observe that for $|\mathcal{R}_i|=3$ the noise tolerance is lower than that of $|\mathcal{R}_i|=5,6,7$, which forms yet another instance of \textit{harder the task, greater the payoff}.} 
\end{table}

$(b.)$ \textit{Dephasing channel:} The effect of a qubit dephasing channel $\Phi^{\lambda}_{Dephase}$ along a given spin-direction $\mathbf{n}$ is given by,
    \begin{equation}
         \rho_{out} = \Phi^{\lambda}_{Dephase}(\rho_{in}) = \lambda (\mathbf{n}\cdot\boldsymbol{\sigma})\rho_{in} (\mathbf{n}\cdot\boldsymbol{\sigma})+(1-\lambda)\rho_{in},  
    \end{equation}
    where $\lambda \in [0,\frac{1}{2}]$ is the noise parameter. Unlike the depolarising channel, here, the optimal quantum strategy, always performs as well as the optimal classical strategy. Therefore, in Fig. \ref{ratioPNG} we plot the ratio optimal quantum success probability to that of maximal classical success probability, $\mathfrak{R}_{\scaleto{Q/C}{4.5pt}}$ = $\frac{S^{(n=3,\mathcal{R}_i)}_Q}{S^{(n=3,\mathcal{R}_i)}_C}$, for different numbers of MUBFs, wherein we employed a \textit{bit-flip} channel, i.e. $\mathbf{n}\equiv [1,0,0]^T$, and numerically optimized over qubit preparations and measurements for all $\lambda \in [0,1]$. Moreover, even in this case we find the reappearance of the phenomenon, \textit{harder the task, greater the payoff}, as for a range of the noise parameter $1-\lambda \in (0.5,0.871)$ the maximal quantum success probability $(n=3,\mathcal{R}_i)$ GRAC with four MUBFs exceeds that of $(n=3,\mathcal{R}_i)$ GRAC with five MUBFs (see Fig. \ref{interPNG}).

\subsection{Entanglement assisted classical communication}
Finally, we investigate the performance of shared entanglement and cbit communication protocols. Again, we employ standard \textit{see-saw} semi-definite programming technique to obtain dimension dependent lower bounds on maximal success probability of such protocols. Moreover, we employ \textit{Navascues-Pironio-Acin} hierarchy of semidefinite programming relaxations to obtain upper bounds on the quantum violation of associated (Theorem \ref{EACCBineq}) Bell inequalities. Yet again, a coincidence (up to machine precision) implies the optimality of these bounds, listed in Table \ref{entanglement}.  
\begin{table}[t]
	\begin{tabular}{@{}c|c|c@{}}
		\toprule
		$|\mathcal{R}_i|$    & $S^{(|\mathcal{R}_i|\rightarrow 1)}_{EACC}$ & $S^{(3,\mathcal{R}_i)}_{EACC}$                                                                        \\ \toprule
		$2$                  & $\frac{1}{2}(1+\frac{1}{\sqrt{2}}) \approx 0.85$                         & $\frac{1}{2}(1+\frac{1}{\sqrt{2}}) \approx 0.85$                                                                                                \\ \midrule
		$3$                  & $\frac{1}{2}(1+\frac{1}{\sqrt{3}}) \approx 0.78$                        & $\frac{1}{2}(1+\frac{1}{\sqrt{3}}) \approx 0.78$                                                                                                \\ \midrule
		$4$                  & $\frac{1}{2}(1+\frac{\sqrt{2}+\sqrt{6}}{8})\approx 0.74$                        & $\frac{1}{2}(1+\frac{1}{\sqrt{4}})=\frac{3}{4}=0.75$                                                                                                \\ \midrule
		
		$5$                  & $\approx 0.7135$                       & $\frac{1}{2}(1+\frac{1}{\sqrt{5}}) \approx 0.72$                                                                                               \\ \midrule
		$6$                  & $\approx 0.6940$                       & $\frac{1}{2}(1+\frac{1}{\sqrt{6}}) \approx 0.70$                                                                                                \\ \midrule
		$7$                  & $\approx 0.6786$                       & $\frac{1}{2}(1+\frac{1}{\sqrt{7}}) \approx 0.69$                                                                                              \\ \bottomrule
	\end{tabular}
	\caption{\label{entanglement} The maximal quantum success probability of entanglement assisted one bit communication protocols for $(3,\mathcal{R}_i)$ GRACs, $S^{(3,\mathcal{R}_i)}_{EACC}$, listed along with the number of MUBFs Bob is required to guess. These values are contrasted against the maximal quantum success probability of standard $(n=|\mathcal{R}_i|\rightarrow 1)$ RACs, $S^{(|\mathcal{R}_i|\rightarrow 1)}_{EACC}$, which form lower bounds for the former according to Theorem \ref{GRACLB}. All values were obtained upon coincidence of lower bounds obtained from see-saw semi-definite programming method, and upper bounds obtained \textit{Navascues-Pironio-Acin} hierarchy of semidefinite programming relaxations. For all cases, shared entanglement and one bit of classical communication can achieve a maximal success probability of $\frac{1}{2}\left(1+\frac{1}{\sqrt{|\mathcal{R}_i|}}\right)$.}
\end{table}
It is known that entanglement assistance can increase classical capacity of a quantum channel as established in the seminal super-dense coding protocol \cite{Bennett92} (see also \cite{Bennett99}). Moreover, it has also been shown that entanglement, more generally nonlocal correlations, can increase the {\it zero-error capacity} \cite{Shannon56,Korner98} of a noisy classical channel \cite{Cubitt10,Cubitt11}. More strikingly, as established recently, entanglement can empower even a noiseless classical channel \cite{Frenkel21}.
It is known that EACC protocols can outperform qubit prepare and measure protocols in standard $(n\to 1 )$ RACs when the number of input bits to the sender exceeds three , i.e., for $n>3$ \cite{Pawloski10}. However, as we will report in the following observation, EACC protocols can outperform quantum prepare and measure even with three inputs to the sender in GRACs, 
\begin{observation}
For the case of four MUBFs $\{f_i,f_j,f_k,f_l\}$, such that $\exists \mathbf{x}: f_i(\mathbf{x})\oplus f_j(\mathbf{x}) \neq f_k(\mathbf{x}) \oplus f_l(\mathbf{x})$, entanglement assisted classical communication protocols can outperform the prepare and measure qubit protocols.
\end{observation}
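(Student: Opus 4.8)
The plan is to establish, for every four–element $\mathcal{R}_i$ satisfying $\exists\,\mathbf{x}: f_i(\mathbf{x})\oplus f_j(\mathbf{x}) \neq f_k(\mathbf{x})\oplus f_l(\mathbf{x})$, the chain $S^{(3,\mathcal{R}_i)}_{\mathcal{Q}} < \tfrac{3}{4} \leq S^{(3,\mathcal{R}_i)}_{EACC}$, from which the observation is immediate. For the prepare-and-measure bound I would sharpen Theorem~\ref{UBPrepAndMeas}: for $|\mathcal{R}_i|=4$ that proof gives $S^{(3,\mathcal{R}_i)}_{\mathcal{Q}}\leq\tfrac{1}{2}(1+\tfrac{1}{\sqrt{4}})=\tfrac{3}{4}$, and the only inequality invoked is Cauchy--Schwarz $\boldsymbol z\cdot\boldsymbol w\leq\|\boldsymbol z\|\,\|\boldsymbol w\|$ with $\boldsymbol z=(1,\dots,1)$. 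Since $\|\boldsymbol w\|^2=\sum_{\mathbf{x}}\|\mathbf{V}_{\mathbf{x}}\|^2=2^3|\mathcal{R}_i|=32$ is independent of the measurement vectors $\{\mathbf{v}_{\mathbf{y}}\}$ (precisely the computation of $\|\boldsymbol w\|^2$ in that proof), equality would force all eight norms $\|\mathbf{V}_{\mathbf{x}}\|=\bigl\|\sum_{\mathbf{y}\in\mathcal{R}_i}(-1)^{f_{\mathbf{y}}(\mathbf{x})}\mathbf{v}_{\mathbf{y}}\bigr\|$ to equal $2$. Using that an invertible $\mathbb{F}_2$-linear substitution $\mathbf{x}\mapsto A\mathbf{x}$ is a symmetry of a $(3,\mathcal{R}_i)$ GRAC (it merely relabels $\mathcal{R}_i$ via $A^{\mathsf{T}}$ and preserves the hypothesis), and that the hypothesis is equivalent to some three of the four functions summing to zero, one may take $\mathcal{R}_i$ to correspond to $\{x_1,x_2,x_3,x_1\oplus x_2\}$. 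Writing $g_m=(-1)^{x_m}$ and expanding, $\|\mathbf{V}_{\mathbf{x}}\|^2$ equals $4$ plus twice a combination of the six distinct nontrivial characters $g_1,g_2,g_1g_2,g_1g_3,g_2g_3,g_1g_2g_3$ of $\{\pm1\}^3$, in which each of the six pairwise inner products $\mathbf{v}_a\!\cdot\!\mathbf{v}_b$ appears as the coefficient of a distinct character (the occurrence of the \emph{linear} characters $g_1,g_2$ being the signature of the bad structure). By linear independence of characters, $\|\mathbf{V}_{\mathbf{x}}\|^2\equiv 4$ then forces $\mathbf{v}_a\!\cdot\!\mathbf{v}_b=0$ for all four measurement vectors --- four pairwise-orthogonal unit vectors in $\mathbb{R}^3$, which is impossible. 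Hence the Cauchy--Schwarz equality configuration is infeasible; by compactness of the set of unit Bloch vectors the maximum of $\sum_{\mathbf{x}}\|\mathbf{V}_{\mathbf{x}}\|$ in \eqref{eq:max} is strictly below $\|\boldsymbol z\|\,\|\boldsymbol w\|=\sqrt{8\cdot 32}=16$, which via \eqref{plug2} gives $S^{(3,\mathcal{R}_i)}_{\mathcal{Q}}<\tfrac{3}{4}$. (Alternatively one may simply quote the certified optimum $\tfrac{1}{2}(1+\tfrac{\sqrt{2}+\sqrt{6}}{8})\approx 0.7412$ from Table~\ref{quantum}.)

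For the EACC side I would exhibit an explicit entanglement-assisted one-bit protocol attaining $S^{(3,\mathcal{R}_i)}_{EACC}=\tfrac{3}{4}=\tfrac{1}{2}(1+\tfrac{1}{\sqrt{|\mathcal{R}_i|}})$, namely the one recovered by the see-saw semidefinite program and written out in Appendix~\ref{ExplicitAppend}; optimality of $\tfrac{3}{4}$ then follows from Theorem~\ref{EACCBineq} together with the Navascues--Pironio--Acin bound on the associated Bell functional, although for the present statement only the lower bound $S^{(3,\mathcal{R}_i)}_{EACC}\geq\tfrac{3}{4}$ is needed. Combining the two sides yields $S^{(3,\mathcal{R}_i)}_{EACC}=\tfrac{3}{4}>S^{(3,\mathcal{R}_i)}_{\mathcal{Q}}$, which is exactly the claim.

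The step I expect to be the main obstacle is the EACC achievability. The prepare-and-measure bound reduces to the transparent geometric impossibility above, but reaching $\tfrac{3}{4}$ with shared entanglement and a single classical bit --- while classically only $\tfrac{11}{16}$ and with a prepare-and-measure qubit strictly less than $\tfrac{3}{4}$ are available --- is where the entanglement-enabled gain genuinely appears, and I would look for the protocol among dense-coding-flavoured constructions in which Alice's local measurement outcome together with the relayed bit let Bob recover any of the four MUBFs with the symmetric overlap $1/\sqrt{|\mathcal{R}_i|}$. Verifying that no EACC protocol exceeds $\tfrac{3}{4}$ is a secondary, semidefinite-programming--assisted point handled through Theorem~\ref{EACCBineq}.
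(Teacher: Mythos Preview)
Your overall strategy matches the paper's: the observation is not given a standalone proof but is read off from Tables~\ref{quantum} and~\ref{entanglement}, which record $S^{(3,\mathcal{R}_i)}_{\mathcal{Q}}=\tfrac{1}{2}\bigl(1+\tfrac{\sqrt{2}+\sqrt{6}}{8}\bigr)\approx 0.7412$ and $S^{(3,\mathcal{R}_i)}_{EACC}=\tfrac{3}{4}$, both certified by the coincidence of see-saw lower bounds with Navascu\'es--V\'ertesi (respectively NPA) upper bounds. Your character-theoretic sharpening of Theorem~\ref{UBPrepAndMeas}---that saturating Cauchy--Schwarz would force four pairwise-orthogonal unit vectors in $\mathbb{R}^3$---is a genuine addition the paper does not supply; the paper relies entirely on numerics for the prepare-and-measure value, whereas your argument yields the strict inequality $S^{(3,\mathcal{R}_i)}_{\mathcal{Q}}<\tfrac{3}{4}$ analytically (at the cost of not pinning down the exact optimum). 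One correction: the Appendix contains only \emph{prepare-and-measure} protocols, not EACC protocols, so the explicit entanglement-assisted strategy you point to is not actually written out anywhere in the paper; the value $S^{(3,\mathcal{R}_i)}_{EACC}=\tfrac{3}{4}$ in Table~\ref{entanglement} is established purely through the see-saw/NPA coincidence. Your instinct that EACC achievability is the substantive step is therefore correct, and in the paper it is resolved numerically rather than by an explicit construction.
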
      

\section{Discussions and outlook}\label{sec5}
We introduce a generalization of a widely studied family of communication tasks, namely, the random access codes. At this point, it is worth mentioning the recent work \cite{Doriguello21}, wherein the authors also consider a generalization of RACs, referred to as $f$-RACs. In these tasks the receiver intends to recover the value of a given Boolean function $f$ of any subset of fixed size of the sender's input bits. Manifestly, the generalization considered in this work differs from that considered in \cite{Doriguello21}.
The generalization of RACs introduced in this work, namely, GRACs entail the receiver intending to recover certain Boolean functions of the sender's input bits. These functions belong to sets of mutually unbiased functions (MUBS) with the cryptographic property that recovering the value of any one such function does not yield any information about the values of the rest of functions in the set. We study three distinct of classes of protocols for GRACs, (i) classical, (ii) quantum prepare and measure, and (iii) entanglement assisted classical communication protocols. Along with finding general bounds on the success probability of these protocols in GRACs, we have detailed the specific case of GRACs with sender's input data comprising of three independently distributed bits. 

This work motivated several possible directions for future research. While we have studied only classical and quantum strategies, it is also possible to explore more generalized strategies. Note the for axiomatic derivation of Hilbert space quantum mechanics research have initiated the study of generalized probability theories (GPT) \cite{Hardy,Barrett07,Chiribella10,Barnum11,Masanes11}. The seminal two-party-two-input-two-output ($2-2-2$) Popescu-Rohrlich (PR) correlation \cite{Popescu94} that exhibits stronger nonlocal behavior than quantum theory can be studied in this GPT framework.  In \cite{Banik15}, it has been shown that the $(2\to1)$ RAC task can be perfectly accomplished in a particular GPT model called Box world that can be thought as the marginal state space of the set of all $2-2-2$ no-signaling correlations. A particular generalization of this Box world is the polygon model where state spaces are described by symmetric polygons \cite{Janotta11} which has been studied extensively in the recent past \cite{Yopp07,Weis12,Massar14,Safi15,Banik19,Bhattacharya20,Saha20,Saha21}. Performance of theses polygonal models in GRACs is worth exploring.     

Moreover, researchers have generalized the study of RAC/QRAC with larger input-output alphabets, wherein Alice is given random string $\mathbf{x}\equiv x_1\cdots x_n\in\{0,1,\cdots,d-1\}^n$ \cite{Ambainis19,Ambainis15,Tavakoli15}. Indeed, it is possible to generalize MUBF/MUBS and GRACs with larger input-output alphabets. However, we leave this for future study. Finally, as we have demonstrated, GRACs allow for quantum over classical advantage, hence, GRACs may be used for certification of private randomness, and quantum key distribution schemes. It remains to be seen whether GRACs provide an advantage over RACs in such tasks.  

\begin{acknowledgements}
MB acknowledges research grant of INSPIRE-faculty fellowship from the Department of Science and Technology, Government of India. AC acknowledges financial support by NCN grant SHENG 2018/30/Q/ST2/00625. The numerical optimization was carried out using \href{https://ncpol2sdpa.readthedocs.io/en/stable/index.html}{Ncpol2sdpa} \cite{wittek2015algorithm}, \href{https://yalmip.github.io/}{YALMIP} \cite{Lofberg2004} and \href{https://www.mosek.com/documentation/}{SDPT3}~\cite{toh1999sdpt3}.  

\end{acknowledgements}

\clearpage
\section*{Appendix: Explicit qubit prepare and measure protocols for $(n=3,\mathcal{R}_i)$ GRACs}\label{ExplicitAppend}
In this section, we present qubit states and measurements that attain maximal quantum success probability in $(n=3,\mathcal{R}_i)$ GRACs.

{\bf (A)} $|\mathcal{R}_i|=2$: All choices of $\mathcal{R}_i \subset \mathcal{R},$ $(|\mathcal{R}_i|=2)$ are equivalent upto a reordering of the input strings. We give an explicit example using $\mathcal{F}^3_{\mathcal{R}_i}=\{x_1,x_2\}$.  If Bob is asked to evaluate one of the functions from a MUBS of cardinality $2$ then they can have the optimal quantum success by following a strategy similar to the standard $(2\mapsto1)$ RAC. Recall that, optimal quantum protocol for $(2\mapsto1)$ RAC is given by, 
\begin{figure}[b]
	\begin{center}
	\includegraphics[width=0.75\linewidth]{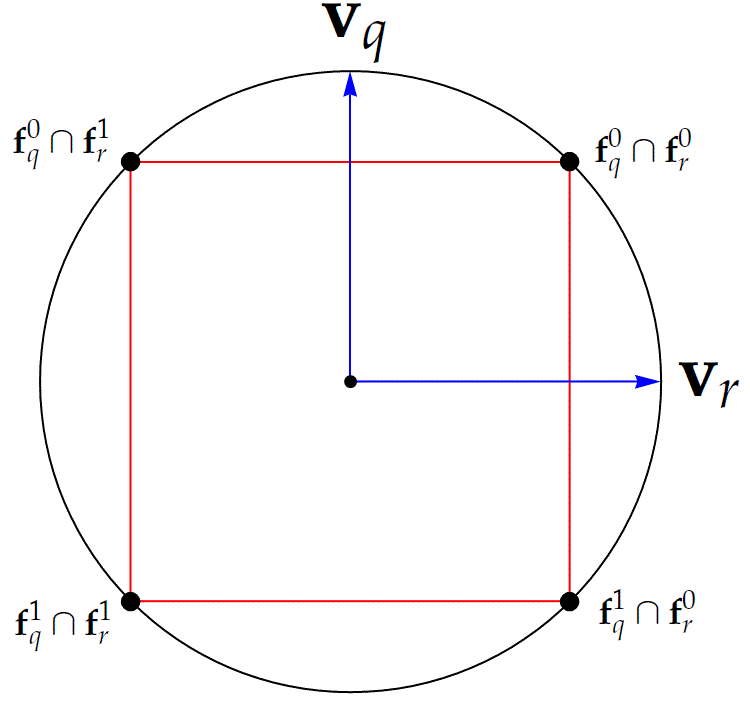}
	\end{center}
	\caption{Optimal quantum protocol for GRAC $(|\mathcal{R}_i|=2)$  Optimal quantum protocol for any $(n=3,\mathcal{R}_i)$ GRAC with $\mathcal{F}^3_{\mathcal{R}_i}\equiv \{f_q,f_r\}$. Shown in figure is a projection of the Bloch-sphere onto a plane with the black dots denoting the encoded states and blue arrows representing the measurement basis. The set of strings $\mathbf{f}^i_q\cap\mathbf{f}^j_r\subset\{0,1\}^3$ are encoded in one of the black dots representing a quantum state, where $\mathbf{f}^i_q$ denote the set of strings whose output value under the function $f_q$ is $i$; $\mathbf{f}^i_q = \{\mathbf{x}\in\{0,1\}^3|f_q(\mathbf{x})=i,i \in \{0,1\}\}$. For evaluating the function $f_q~(f_r)$ von Neumann measurement along $\mathbf{v}_q~(\mathbf{v}_r)$ is performed and function value is reported as $0~(1)$ if the `$+1$'~(`$-1$') outcome clicks.}\label{2rac}
\end{figure}
\begin{align*}
&\mbox{\bf Alice's~encoding:}~\{0,1\}^2\ni x_1x_2\mapsto \rho_{x_1x_2}\\
&\mbox{e.g.}\begin{cases}00\mapsto\frac{1}{2}\left(\mathbf{I}+\frac{1}{\sqrt{2}}\sigma_x+\frac{1}{\sqrt{2}}\sigma_y\right)\\
01\mapsto\frac{1}{2}\left(\mathbf{I}+\frac{1}{\sqrt{2}}\sigma_x-\frac{1}{\sqrt{2}}\sigma_y\right)\\
10\mapsto\frac{1}{2}\left(\mathbf{I}-\frac{1}{\sqrt{2}}\sigma_x+\frac{1}{\sqrt{2}}\sigma_y\right)\\
11\mapsto\frac{1}{2}\left(\mathbf{I}-\frac{1}{\sqrt{2}}\sigma_x-\frac{1}{\sqrt{2}}\sigma_y\right)
\end{cases}
\end{align*}
\begin{align*}
&\mbox{\bf Bob's~decoding:}~M_i\equiv\left\{\frac{1}{2}(\mathbf{I}+\mathbf{v}_i.\sigma),\frac{1}{2}(\mathbf{I}-\mathbf{v}_i.\sigma)\right\}\nonumber\\
&\mbox{e.g.}\begin{cases} 1^{st}~\mbox{function}\to \mathbf{v}_1\equiv(1,0,0)\\
2^{nd}~ \mbox{function}\to \mathbf{v}_2\equiv(0,1,0).\end{cases}
\end{align*}
Bob will guess the bit value as $`0'$ whenever he obtains $`+1'$ outcome, otherwise he guess the value as $`1'$. To make this protocol work for an arbitrary $\mathcal{F}^n_{\mathcal{R}_i}=\{f_q,f_r\}$, Alice follows the encoding $\mathbf{f}^{x_1}_q\cap\mathbf{f}^{x_2}_r\mapsto\ket{\psi}_{x_1x_2}$ and Bob performs the measurement $M_1$ for evaluating the function $f_q$ and performs the measurement $M_2$ for $f_r$ (see Fig.\ref{2rac}). Importantly, in this case both the worst case success probability as well as the average success probability turns out to be $\frac{1}{2}\left(1+\frac{1}{\sqrt{2}}\right)$. 
\\

{\bf (B)} $|\mathcal{R}_i|=3$: Unlike the previous case, all choices of $\mathcal{R}_i \subset \mathcal{R}, (|\mathcal{R}_i|=3)|$ are not equivalent under a permutation of the input strings. If $\mathcal{F}^3_{\mathcal{R}_i}=\{f_q,f_r,f_s\}$, then the two possible equivalence classes are defined by   $f_q\bigoplus f_r \neq f_s$ and $f_q\bigoplus f_r = f_s$.\\

\begin{figure}[t]
\begin{center}
\includegraphics[width=0.75\linewidth]{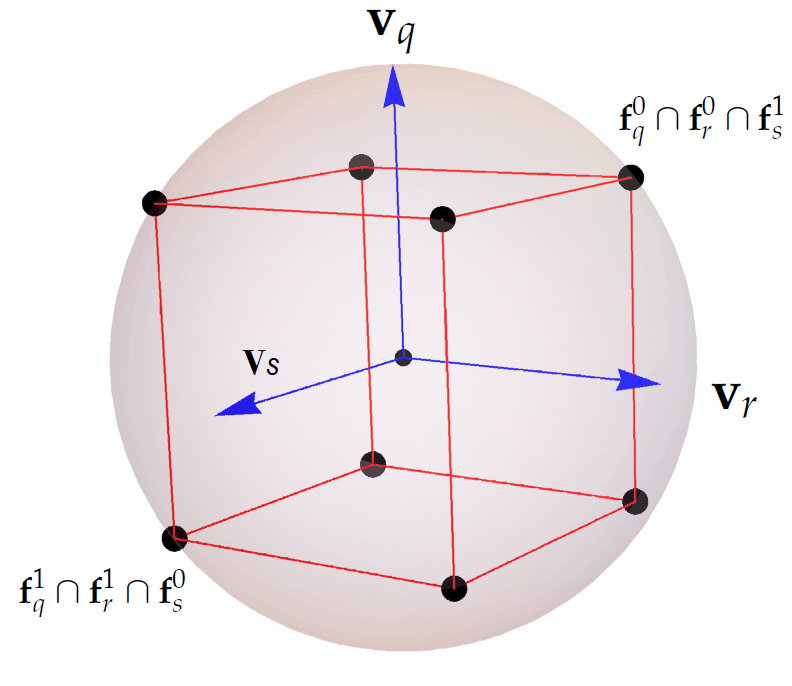}
\end{center}
\caption{Optimal quantum protocol for GRAC $\mathcal{F}^n\{3\}\equiv\{f_q,f_r,f_s\}$, where $f_q\bigoplus f_r \neq f_s$. The set of strings $f^i_q\cap f^j_r\cap f^k_s\subset\{0,1\}^n$ are encoded in qubit state  $\rho_{ijk}:=\frac{1}{2}\left(\mathbf{I}+(-1)^i\frac{1}{\sqrt{3}}\sigma_x+(-1)^j\frac{1}{\sqrt{3}}\sigma_y+(-1)^k\frac{1}{\sqrt{3}}\sigma_z\right)$ where the decoding measurements are along $\mathbf{v}_q\equiv(1,0,0),~\mathbf{v}_r\equiv(0,1,0),~\&~\mathbf{v}_s\equiv(0,0,1)$.}\label{fig2}
\end{figure}

{\bf Case-(i):} If $f_q\bigoplus f_r \neq f_s$, then every such set is equivalent to the set $\mathcal{F}^3_{\mathcal{R}_i}=\{x_1,x_2,x_3\}$. Employing the standard $(3\mapsto1)$ RAC protocol (see Fig. \ref{fig2}) on this set attains the optimal quantum success probability of $\frac{1}{2}\left(1+1/\sqrt{3}\right)$.
\begin{align*}
&\mbox{\bf Alice's~encoding:}\\
&\mbox{e.g.}\begin{cases}x_1x_2x_3\mapsto \rho_{x_1x_2x_3}\\
=\frac{1}{2}\left(\mathbf{I}+(-1)^{x_1}\frac{1}{\sqrt{3}}\sigma_x+(-1)^{x_2}\frac{1}{\sqrt{3}}\sigma_y+(-1)^{x_3}\frac{1}{\sqrt{3}}\sigma_z\right).\end{cases}
\end{align*}
\begin{align*}
&\mbox{\bf Bob's~decoding:}~M_i\equiv\left\{\frac{1}{2}(\mathbf{I}+\mathbf{v}_i.\sigma),\frac{1}{2}(\mathbf{I}-\mathbf{v}_i.\sigma)\right\}\nonumber\\
&\mbox{e.g.}\begin{cases} x_1\to \mathbf{v}_1\equiv(1,0,0)\\
x_2\to \mathbf{v}_2\equiv(0,1,0)\\
x_3\to \mathbf{v}_2\equiv(0,0,1).\end{cases}
\end{align*}

{\bf Case-(ii):} If $f_q\bigoplus f_r = f_s$, then every such set is equivalent to the set $\mathcal{F}^3_{\mathcal{R}_i}=\{x_1,x_2,x_1 \oplus x_2\}$.
\begin{align} \label{32enc}
&\mbox{\bf Alice's~encoding:}\\
&\mbox{e.g.}\begin{cases}\{000,001\} \mapsto\frac{1}{2}\left(\mathbf{I}+\frac{1}{\sqrt{3}}\sigma_x+\frac{1}{\sqrt{3}}\sigma_y+\frac{1}{\sqrt{3}}\sigma_z\right)\\
\{010,011\} \mapsto\frac{1}{2}\left(\mathbf{I}+\frac{1}{\sqrt{3}}\sigma_x-\frac{1}{\sqrt{3}}\sigma_y-\frac{1}{\sqrt{3}}\sigma_z\right)\\
\{100,101\} \mapsto\frac{1}{2}\left(\mathbf{I}-\frac{1}{\sqrt{3}}\sigma_x+\frac{1}{\sqrt{3}}\sigma_y-\frac{1}{\sqrt{3}}\sigma_z\right)\\
\{110,111\} \mapsto\frac{1}{2}\left(\mathbf{I}-\frac{1}{\sqrt{3}}\sigma_x-\frac{1}{\sqrt{3}}\sigma_y+\frac{1}{\sqrt{3}}\sigma_z\right).\end{cases}
\end{align}
\begin{align*}
&\mbox{\bf Bob's~decoding:}~M_i\equiv\left\{\frac{1}{2}(\mathbf{I}+\mathbf{v}_i.\sigma),\frac{1}{2}(\mathbf{I}-\mathbf{v}_i.\sigma)\right\}\nonumber\\
&\mbox{e.g.}\begin{cases} x_1\to \mathbf{v}_1\equiv(1,0,0)\\
x_2\to \mathbf{v}_2\equiv(0,1,0)\\
x_1 \oplus x_2\to \mathbf{v}_{12}\equiv(0,0,1).\end{cases}
\end{align*}
\begin{figure}[h]
\centering
\includegraphics[width=0.75\linewidth]{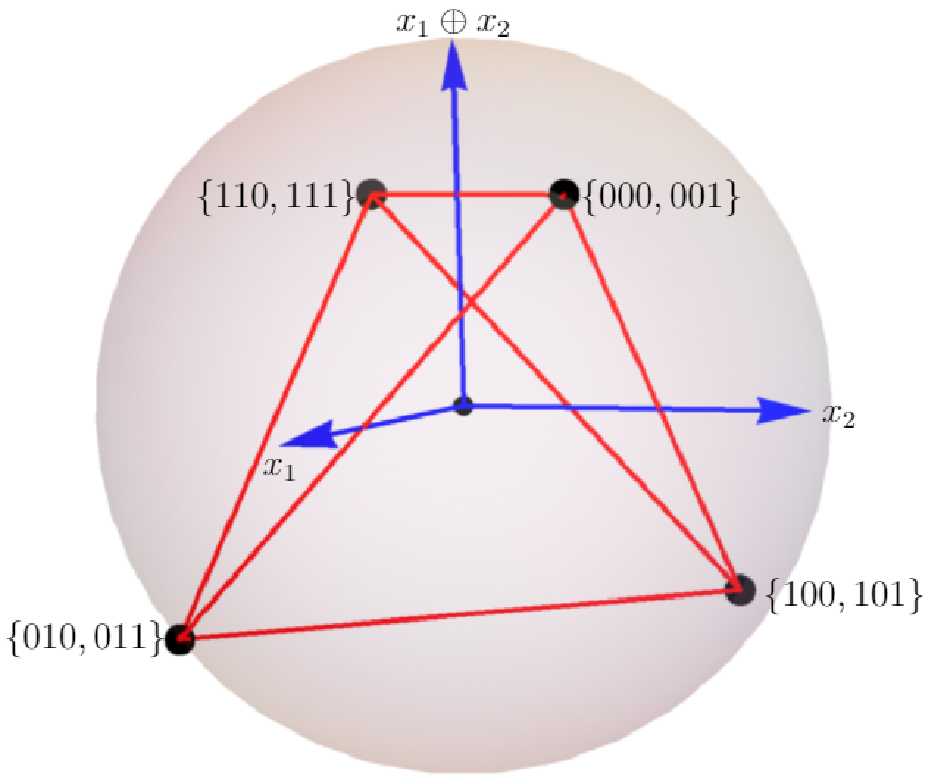}
\caption{($|\mathcal{R}_i|$=$3$) Optimal Encoding protocol for $\mathcal{F}^3_{\mathcal{R}_i}$ = $\{f_q,f_r,f_s\}$ with $f_q\bigoplus f_r = f_s$. Black dots denote the encoded states of Eq.(\ref{32enc}). They form the vertices of a regular tetrahedron. For evaluating the function $x_\alpha\in\{x_1,x_{2},x_1 \oplus x_2\}$, Bob performs the measurement $M_\alpha\equiv\left\{\frac{1}{2}(\mathbf{I}+\mathbf{v}_\alpha.\sigma),\frac{1}{2}(\mathbf{I}-\mathbf{v}_\alpha.\sigma)\right\}$ on the received state and guess the function value as $`0'$ if he obtains outcome $`+1'$, otherwise guess the value as $`1'$. He chooses $\mathbf{v}_1=(1,0,0)=\mathbf{v}_{2}$, and $\mathbf{v}_{12}=(0,0,1)$.}
\label{3rac2}
\end{figure}

{\bf (C)} $\mathcal{F}^3_{\mathcal{R}_i}(|\mathcal{R}_i|=4)$: If $\mathcal{F}^3_{\mathcal{R}_i}=$ $\{f_i,f_j,f_k,f_l\}$, such that $f_i\oplus f_j = f_k \oplus f_l$, then  the optimal classical success is same as the optimal possible quantum success. So in those cases there is no question of quantum advantage. However, when $f_i\oplus f_j \neq f_k \oplus f_l$ optimal classical success is $11/16$ whereas the optimal possible quantum success can go up-to $3/4$. We have find that for these cases the optimal quantum success indeed is higher than the classical value. For instance consider $\mathcal{F}^3_{\mathcal{R}_i}\equiv\{x_1,x_{2},x_{3},x_1 \oplus x_2\}$. 
\begin{figure}[t!]
\centering
\includegraphics[width=0.75\linewidth]{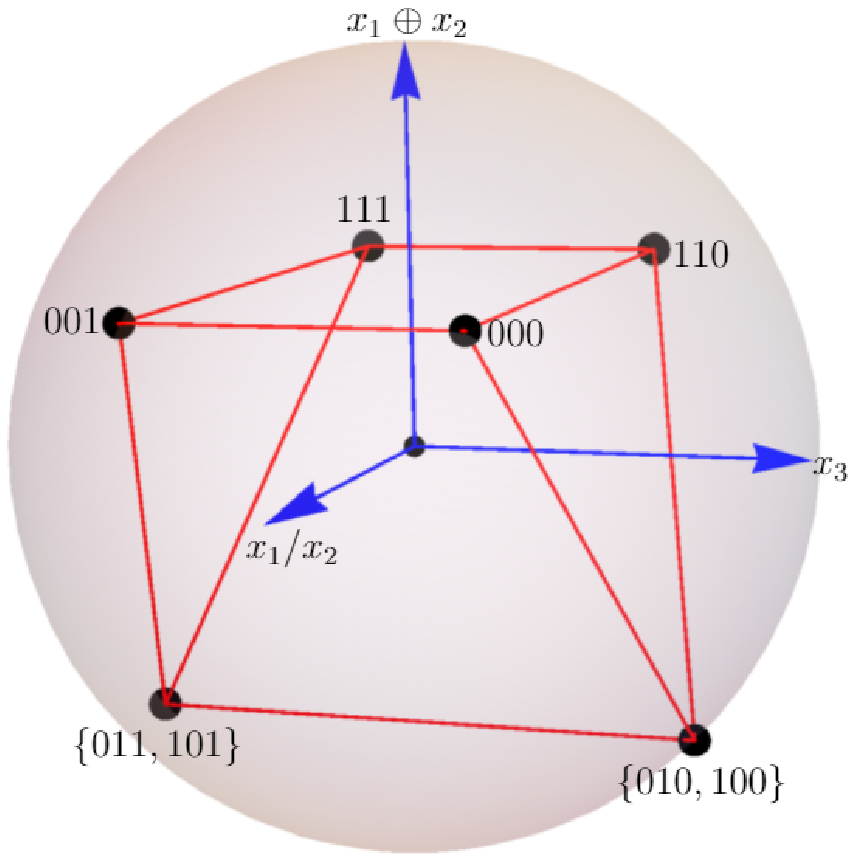}
\caption{$(|\mathcal{R}_i|=4)$ Black dots denote the encoded states of Eq.(\ref{4enc}). For evaluating the function $x_\alpha\in\{x_1,x_{2},x_{3},x_1 \oplus x_2\}$, Bob performs the measurement $M_\alpha\equiv\left\{\frac{1}{2}(\mathbf{I}+\mathbf{v}_\alpha.\sigma),\frac{1}{2}(\mathbf{I}-\mathbf{v}_\alpha.\sigma)\right\}$ on the received state and guess the function value as $`0'$ if he obtains outcome $`+1'$, otherwise guess the value as $`1'$. He chooses $\mathbf{v}_1=(1,0,0)=\mathbf{v}_{2}$, $\mathbf{v}_{3}=(0,1,0)$, and $\mathbf{v}_{12}=(0,0,1)$.}\label{4rac}
\end{figure}
\begin{align} \label{4enc}
&\mbox{\bf Alice's~encoding:}\\
&\mbox{e.g.}\begin{cases}000 \mapsto\frac{1}{2}\left(\mathbf{I}+\sqrt{\frac{2}{3}}\sigma_x+\frac{1}{\sqrt{6}}\sigma_y+\frac{1}{\sqrt{6}}\sigma_z\right)\\
001 \mapsto\frac{1}{2}\left(\mathbf{I}+\sqrt{\frac{2}{3}}\sigma_x-\frac{1}{\sqrt{6}}\sigma_y+\frac{1}{\sqrt{6}}\sigma_z\right)\\
\{010,100\} \mapsto\frac{1}{2}\left(\mathbf{I}+\frac{1}{\sqrt{2}}\sigma_y-\frac{1}{\sqrt{2}}\sigma_z\right)\\
\{011,101\}  \mapsto\frac{1}{2}\left(\mathbf{I}-\frac{1}{\sqrt{2}}\sigma_y-\frac{1}{\sqrt{2}}\sigma_z\right)\\
110\mapsto\frac{1}{2}\left(\mathbf{I}-\sqrt{\frac{2}{3}}\sigma_x+\frac{1}{\sqrt{6}}\sigma_y+\frac{1}{\sqrt{6}}\sigma_z\right)\\
111 \mapsto\frac{1}{2}\left(\mathbf{I}-\sqrt{\frac{2}{3}}\sigma_x-\frac{1}{\sqrt{6}}\sigma_y+\frac{1}{\sqrt{6}}\sigma_z\right).\end{cases}
\end{align}
\begin{align*}
&\mbox{\bf Bob's~decoding:}~M_i\equiv\left\{\frac{1}{2}(\mathbf{I}+\mathbf{v}_i.\sigma),\frac{1}{2}(\mathbf{I}-\mathbf{v}_i.\sigma)\right\}\nonumber\\
&\mbox{e.g.}\begin{cases} x_1\to \mathbf{v}_1\equiv(1,0,0)\\
x_{2}\to \mathbf{v}_{2}\equiv\mathbf{v}_1\\
x_{3}\to \mathbf{v}_{3}\equiv(0,1,0)\\
x_1 \oplus x_2\to \mathbf{v}_{12}\equiv(0,0,1).\end{cases}
\end{align*}
This protocol yields the average success probability $\frac{1}{2}\left(1+\frac{\sqrt{2}+\sqrt{6}}{8}\right)$. Note that the average success is still less than $3/4$. However, upto numerical precision, the lower bound obtained from see-saw semi-definite programming method, and upper bounds obtained via \textit{Navascues-Vertesi} hierarchy of semidefinite programming relaxations is same as the value obtained with the present explicit protocol (see Table \ref{quantum}).   

{\bf (D)} $\boldsymbol{|\mathcal{R}_i|=5:}$ Let us consider a particular case $\mathcal{F}^3_{\mathcal{R}_i}\equiv\{x_1,x_2,x_3,x_1 \oplus x_2,x_1 \oplus x_3\}$. \begin{figure}[t!]
	\begin{center}
\hspace{8pt}\includegraphics[width=0.75\linewidth]{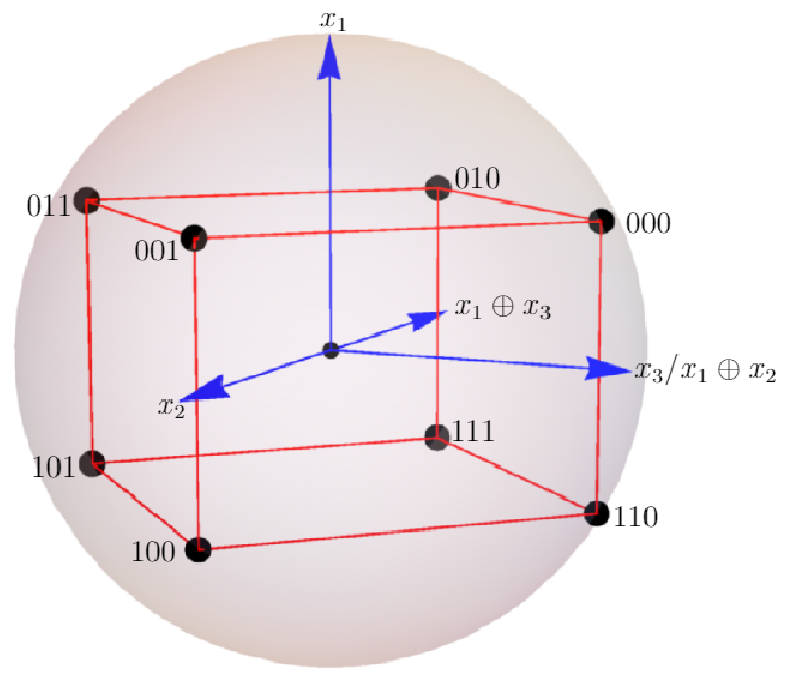}
\end{center}
\caption{$(|\mathcal{R}_i|=5)$ Optimal quantum protocol (non-planar) for a GRAC with $\mathcal{F}^3_{\mathcal{R}_i}\equiv\{x_1,x_2,x_3,x_1 \oplus x_2,x_1 \oplus x_3\}$.}
\label{5rac1} 
\end{figure}
\begin{align*}
&\mbox{\bf Alice's~encoding~(non-planar):}\\
&\mbox{e.g.}\begin{cases}000 \mapsto\frac{1}{2}\left(\mathbf{I}+\frac{1}{\sqrt{5}}\sigma_x+\frac{2}{\sqrt{5}}\sigma_z\right)\\
001 \mapsto\frac{1}{2}\left(\mathbf{I}+\frac{1}{\sqrt{5}}\sigma_x+\frac{2}{\sqrt{5}}\sigma_y\right)\\
010 \mapsto\frac{1}{2}\left(\mathbf{I}+\frac{1}{\sqrt{5}}\sigma_x-\frac{2}{\sqrt{5}}\sigma_y\right)\\
011 \mapsto\frac{1}{2}\left(\mathbf{I}+\frac{1}{\sqrt{5}}\sigma_x-\frac{2}{\sqrt{5}}\sigma_z\right)\\
100 \mapsto\frac{1}{2}\left(\mathbf{I}-\frac{1}{\sqrt{5}}\sigma_x+\frac{2}{\sqrt{5}}\sigma_y\right)\\
101 \mapsto\frac{1}{2}\left(\mathbf{I}-\frac{1}{\sqrt{5}}\sigma_x-\frac{2}{\sqrt{5}}\sigma_z\right)\\
110 \mapsto\frac{1}{2}\left(\mathbf{I}-\frac{1}{\sqrt{5}}\sigma_x+\frac{2}{\sqrt{5}}\sigma_z\right)\\
111 \mapsto\frac{1}{2}\left(\mathbf{I}-\frac{1}{\sqrt{5}}\sigma_x-\frac{2}{\sqrt{5}}\sigma_y\right)\end{cases}
\end{align*}
\begin{align*}
&\mbox{\bf Bob's~decoding:}~M_i\equiv\left\{\frac{1}{2}(\mathbf{I}+\mathbf{v}_i.\sigma),\frac{1}{2}(\mathbf{I}-\mathbf{v}_i.\sigma)\right\}\nonumber\\
&\mbox{e.g.}\begin{cases}
x_1\to \mathbf{v}_1\equiv(1,0,0)\\
x_2\to \mathbf{v}_2\equiv(0,1,0)\\
x_3\to \mathbf{v}_3\equiv(0,0,1)\\
x_1 \oplus x_2\to \mathbf{v}_{12}\equiv\mathbf{v}_3\\
x_1 \oplus x_3\to \mathbf{v}_{13}\equiv-\mathbf{v}_2.\end{cases}
\end{align*}
A straightforward calculation yields the average success probability $\frac{1}{2}\left(1+\frac{1}{\sqrt{5}}\right)$ for this particular encoding-decoding, which turns out to be the optimal quantum success (see Table \ref{quantum}). Note that the encoded states form a rectangular box (see Fig.\ref{fig4-5}). We, however, find that a different strategy where the encoded states lie on a great circle (Fig. \ref{fig5}) but yields the maximum success. 
\begin{figure}[t!]
\begin{center}
		\includegraphics[width=0.75\linewidth]{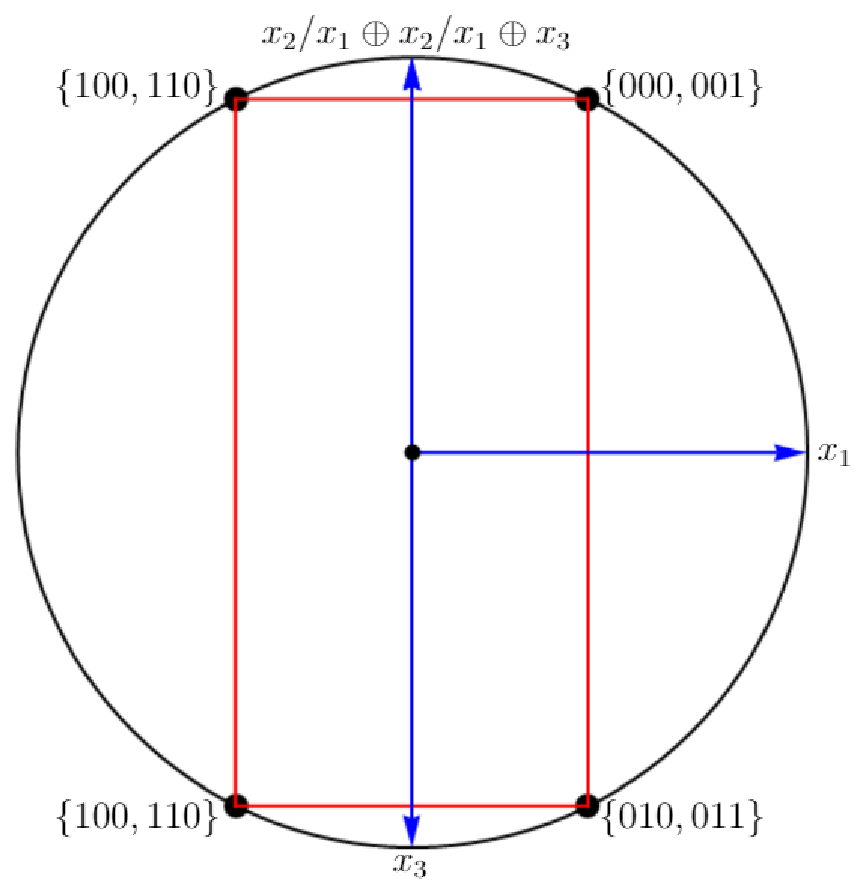}
	\end{center}
	
\caption{$(|\mathcal{R}_i|=5)$ Optimal quantum protocol (planar) for a GRAC with $\mathcal{F}^3_{\mathcal{R}_i}\equiv\{x_1,x_2,x_3,x_1 \oplus x_2,x_1 \oplus x_3\}$.}\label{fig4-5}
	\label{fig5} 
\end{figure}
\begin{align*}
&\mbox{\bf Alice's~encoding~(planar):}\\
&\mbox{e.g.}\begin{cases}
\{000,001\} \mapsto\frac{1}{2}\left(\mathbf{I}+\frac{1}{\sqrt{5}}\sigma_x+\frac{2}{\sqrt{5}}\sigma_y\right)\\
\{010,011\} \mapsto\frac{1}{2}\left(\mathbf{I}+\frac{1}{\sqrt{5}}\sigma_x-\frac{2}{\sqrt{5}}\sigma_y\right)\\
\{101,111\} \mapsto\frac{1}{2}\left(\mathbf{I}-\frac{1}{\sqrt{5}}\sigma_x+\frac{2}{\sqrt{5}}\sigma_y\right)\\
\{100,110\} \mapsto\frac{1}{2}\left(\mathbf{I}-\frac{1}{\sqrt{5}}\sigma_x-\frac{2}{\sqrt{5}}\sigma_y\right).
\end{cases}
\end{align*}
\begin{align*}
&\mbox{\bf Bob's~decoding:}~M_i\equiv\left\{\frac{1}{2}(\mathbf{I}+\mathbf{v}_i.\sigma),\frac{1}{2}(\mathbf{I}-\mathbf{v}_i.\sigma)\right\}\nonumber\\
&\mbox{e.g.}\begin{cases}
x_1\to \mathbf{v}_1\equiv(1,0,0)\\
x_2\to \mathbf{v}_2\equiv(0,1,0)\\
x_3\to \mathbf{v}_3\equiv-\mathbf{v}_2\\
x_1 \oplus x_2\to \mathbf{v}_{12}\equiv\mathbf{v}_2\\
x_1 \oplus x_3\to \mathbf{v}_{13}\equiv\mathbf{v}_2.\end{cases}
\end{align*}

{\bf (E)} $\mathcal{F}^3_{\mathcal{R}_i}(|\mathcal{R}_i|=6)$: Let us consider a particular case $\mathcal{F}^3_{\mathcal{R}_i}\equiv\{x_1,x_2,x_3,x_1 \oplus x_2,x_1 \oplus x_3,x_2 \oplus x_3\}$. 
\begin{figure}[t!]
	\begin{center}
		\includegraphics[scale=0.35]{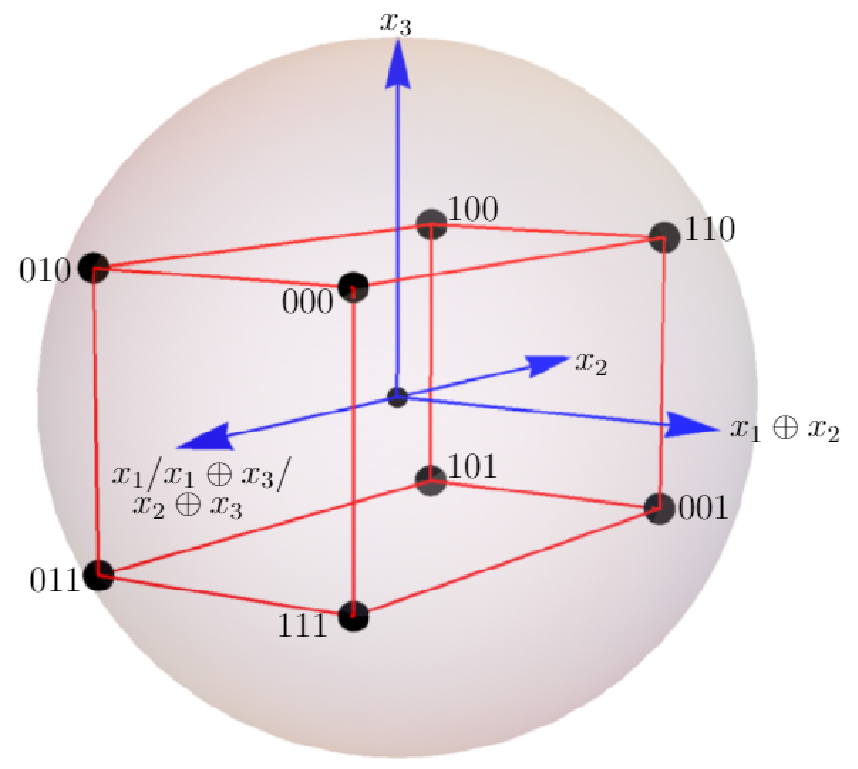}
	\end{center}
	\caption{$(|\mathcal{R}_i|=6)$ Encoded states and decoding measurements corresponding the optimal quantum protocol for $\mathcal{F}^3_{\mathcal{R}_i}\equiv\{x_1,x_2,x_3,x_1 \oplus x_2,x_1 \oplus x_3,x_2 \oplus x_3\}$.}\label{fig6}
\label{6rac} 
\end{figure}
\begin{align*}
&\mbox{\bf Alice's~encoding~(planar):}\\
&\mbox{e.g.}\begin{cases}
000 \mapsto\frac{1}{2}\left(\mathbf{I}+\sqrt{\frac{2}{3}}\sigma_x+\frac{1}{\sqrt{6}}\sigma_y+\frac{1}{\sqrt{6}}\sigma_z\right)\\
001 \mapsto\frac{1}{2}\left(\mathbf{I}-\sqrt{\frac{2}{3}}\sigma_x+\frac{1}{\sqrt{6}}\sigma_y-\frac{1}{\sqrt{6}}\sigma_z\right)\\
010 \mapsto\frac{1}{2}\left(\mathbf{I}+\sqrt{\frac{2}{3}}\sigma_x-\frac{1}{\sqrt{6}}\sigma_y+\frac{1}{\sqrt{6}}\sigma_z\right)\\
011 \mapsto\frac{1}{2}\left(\mathbf{I}+\sqrt{\frac{2}{3}}\sigma_x-\frac{1}{\sqrt{6}}\sigma_y-\frac{1}{\sqrt{6}}\sigma_z\right)\\
100 \mapsto\frac{1}{2}\left(\mathbf{I}-\sqrt{\frac{2}{3}}\sigma_x-\frac{1}{\sqrt{6}}\sigma_y+\frac{1}{\sqrt{6}}\sigma_z\right)\\
101 \mapsto\frac{1}{2}\left(\mathbf{I}-\sqrt{\frac{2}{3}}\sigma_x-\frac{1}{\sqrt{6}}\sigma_y-\frac{1}{\sqrt{6}}\sigma_z\right)\\
110 \mapsto\frac{1}{2}\left(\mathbf{I}-\sqrt{\frac{2}{3}}\sigma_x+\frac{1}{\sqrt{6}}\sigma_y+\frac{1}{\sqrt{6}}\sigma_z\right)\\
111\mapsto\frac{1}{2}\left(\mathbf{I}+\sqrt{\frac{2}{3}}\sigma_x+\frac{1}{\sqrt{6}}\sigma_y-\frac{1}{\sqrt{6}}\sigma_z\right).
\end{cases}
\end{align*}
\begin{align*}
&\mbox{\bf Bob's~decoding:}~M_i\equiv\left\{\frac{1}{2}(\mathbf{I}+\mathbf{v}_i.\sigma),\frac{1}{2}(\mathbf{I}-\mathbf{v}_i.\sigma)\right\}\nonumber\\
&\mbox{e.g.}\begin{cases}
x_1\to \mathbf{v}_1\equiv(1,0,0)\\
x_3\to \mathbf{v}_3\equiv(0,0,1)\\
x_1 \oplus x_2\to \mathbf{v}_{12}\equiv(0,1,0)\\
x_2\to \mathbf{v}_2\equiv-\mathbf{v}_1\\
x_1 \oplus x_3\to \mathbf{v}_{13}\equiv\mathbf{v}_1\\
x_2 \oplus x_3\to \mathbf{v}_{23}\equiv\mathbf{v}_1.\end{cases}
\end{align*}
For this encoding-decoding the average success probability turns out to be $P=\frac{1}{2}\left(1+\frac{1}{\sqrt{6}}\right)$, which is the optimal possible quantum success.

\bibliographystyle{apsrev4-1}
\bibliography{cite}

\begin{thebibliography}{60}%
\makeatletter
\providecommand \@ifxundefined [1]{%
 \@ifx{#1\undefined}
}%
\providecommand \@ifnum [1]{%
 \ifnum #1\expandafter \@firstoftwo
 \else \expandafter \@secondoftwo
 \fi
}%
\providecommand \@ifx [1]{%
 \ifx #1\expandafter \@firstoftwo
 \else \expandafter \@secondoftwo
 \fi
}%
\providecommand \natexlab [1]{#1}%
\providecommand \enquote  [1]{``#1''}%
\providecommand \bibnamefont  [1]{#1}%
\providecommand \bibfnamefont [1]{#1}%
\providecommand \citenamefont [1]{#1}%
\providecommand \href@noop [0]{\@secondoftwo}%
\providecommand \href [0]{\begingroup \@sanitize@url \@href}%
\providecommand \@href[1]{\@@startlink{#1}\@@href}%
\providecommand \@@href[1]{\endgroup#1\@@endlink}%
\providecommand \@sanitize@url [0]{\catcode `\\12\catcode `\$12\catcode
  `\&12\catcode `\#12\catcode `\^12\catcode `\_12\catcode `\%12\relax}%
\providecommand \@@startlink[1]{}%
\providecommand \@@endlink[0]{}%
\providecommand \url  [0]{\begingroup\@sanitize@url \@url }%
\providecommand \@url [1]{\endgroup\@href {#1}{\urlprefix }}%
\providecommand \urlprefix  [0]{URL }%
\providecommand \Eprint [0]{\href }%
\providecommand \doibase [0]{http://dx.doi.org/}%
\providecommand \selectlanguage [0]{\@gobble}%
\providecommand \bibinfo  [0]{\@secondoftwo}%
\providecommand \bibfield  [0]{\@secondoftwo}%
\providecommand \translation [1]{[#1]}%
\providecommand \BibitemOpen [0]{}%
\providecommand \bibitemStop [0]{}%
\providecommand \bibitemNoStop [0]{.\EOS\space}%
\providecommand \EOS [0]{\spacefactor3000\relax}%
\providecommand \BibitemShut  [1]{\csname bibitem#1\endcsname}%
\let\auto@bib@innerbib\@empty
\bibitem [{\citenamefont {Shannon}(1948)}]{Shannon48}%
  \BibitemOpen
  \bibfield  {author} {\bibinfo {author} {\bibfnamefont {C.~E.}\ \bibnamefont
  {Shannon}},\ }\href {\doibase 10.1002/j.1538-7305.1948.tb01338.x} {\bibfield
  {journal} {\bibinfo  {journal} {The Bell System Technical Journal}\ }\textbf
  {\bibinfo {volume} {27}},\ \bibinfo {pages} {379} (\bibinfo {year}
  {1948})}\BibitemShut {NoStop}%
\bibitem [{\citenamefont {Bennett}\ and\ \citenamefont
  {Wiesner}(1992)}]{Bennett92}%
  \BibitemOpen
  \bibfield  {author} {\bibinfo {author} {\bibfnamefont {C.~H.}\ \bibnamefont
  {Bennett}}\ and\ \bibinfo {author} {\bibfnamefont {S.~J.}\ \bibnamefont
  {Wiesner}},\ }\href {\doibase 10.1103/PhysRevLett.69.2881} {\bibfield
  {journal} {\bibinfo  {journal} {Phys. Rev. Lett.}\ }\textbf {\bibinfo
  {volume} {69}},\ \bibinfo {pages} {2881} (\bibinfo {year}
  {1992})}\BibitemShut {NoStop}%
\bibitem [{\citenamefont {Holevo}(1973)}]{Holevo73}%
  \BibitemOpen
  \bibfield  {author} {\bibinfo {author} {\bibfnamefont {A.~S.}\ \bibnamefont
  {Holevo}},\ }\href {http://mi.mathnet.ru/eng/ppi903} {\bibfield  {journal}
  {\bibinfo  {journal} {Problemy Peredachi Informatsii}\ }\textbf {\bibinfo
  {volume} {9}},\ \bibinfo {pages} {3} (\bibinfo {year} {1973})}\BibitemShut
  {NoStop}%
\bibitem [{\citenamefont {Frenkel}\ and\ \citenamefont
  {Weiner}(2015)}]{Frenkel15}%
  \BibitemOpen
  \bibfield  {author} {\bibinfo {author} {\bibfnamefont {P.~E.}\ \bibnamefont
  {Frenkel}}\ and\ \bibinfo {author} {\bibfnamefont {M.}~\bibnamefont
  {Weiner}},\ }\href {\doibase 10.1007/s00220-015-2463-0} {\bibfield  {journal}
  {\bibinfo  {journal} {Communications in Mathematical Physics}\ }\textbf
  {\bibinfo {volume} {340}},\ \bibinfo {pages} {563} (\bibinfo {year}
  {2015})}\BibitemShut {NoStop}%
\bibitem [{\citenamefont {Wiesner}(1983)}]{Wiesner83}%
  \BibitemOpen
  \bibfield  {author} {\bibinfo {author} {\bibfnamefont {S.}~\bibnamefont
  {Wiesner}},\ }\href {\doibase https://doi.org/10.1145/1008908.1008920}
  {\bibfield  {journal} {\bibinfo  {journal} {ACM Sigact News}\ }\textbf
  {\bibinfo {volume} {15}},\ \bibinfo {pages} {78} (\bibinfo {year}
  {1983})}\BibitemShut {NoStop}%
\bibitem [{\citenamefont {Ambainis}\ \emph {et~al.}(1999)\citenamefont
  {Ambainis}, \citenamefont {Nayak}, \citenamefont {Ta-Shma},\ and\
  \citenamefont {Vazirani}}]{Ambainis99}%
  \BibitemOpen
  \bibfield  {author} {\bibinfo {author} {\bibfnamefont {A.}~\bibnamefont
  {Ambainis}}, \bibinfo {author} {\bibfnamefont {A.}~\bibnamefont {Nayak}},
  \bibinfo {author} {\bibfnamefont {A.}~\bibnamefont {Ta-Shma}}, \ and\
  \bibinfo {author} {\bibfnamefont {U.}~\bibnamefont {Vazirani}},\ }in\ \href
  {\doibase 10.1145/301250.301347} {\emph {\bibinfo {booktitle} {Proceedings of
  the thirty-first annual ACM symposium on Theory of Computing}}}\ (\bibinfo
  {year} {1999})\ pp.\ \bibinfo {pages} {376--383}\BibitemShut {NoStop}%
\bibitem [{\citenamefont {Ambainis}\ \emph {et~al.}(2002)\citenamefont
  {Ambainis}, \citenamefont {Nayak}, \citenamefont {Ta-Shma},\ and\
  \citenamefont {Vazirani}}]{Ambainis02}%
  \BibitemOpen
  \bibfield  {author} {\bibinfo {author} {\bibfnamefont {A.}~\bibnamefont
  {Ambainis}}, \bibinfo {author} {\bibfnamefont {A.}~\bibnamefont {Nayak}},
  \bibinfo {author} {\bibfnamefont {A.}~\bibnamefont {Ta-Shma}}, \ and\
  \bibinfo {author} {\bibfnamefont {U.}~\bibnamefont {Vazirani}},\ }\href
  {\doibase 10.1145/581771.581773} {\bibfield  {journal} {\bibinfo  {journal}
  {J. ACM}\ }\textbf {\bibinfo {volume} {49}},\ \bibinfo {pages} {496–511}
  (\bibinfo {year} {2002})}\BibitemShut {NoStop}%
\bibitem [{\citenamefont {Buhrman}\ \emph {et~al.}(2010)\citenamefont
  {Buhrman}, \citenamefont {Cleve}, \citenamefont {Massar},\ and\ \citenamefont
  {de~Wolf}}]{Buhrman10}%
  \BibitemOpen
  \bibfield  {author} {\bibinfo {author} {\bibfnamefont {H.}~\bibnamefont
  {Buhrman}}, \bibinfo {author} {\bibfnamefont {R.}~\bibnamefont {Cleve}},
  \bibinfo {author} {\bibfnamefont {S.}~\bibnamefont {Massar}}, \ and\ \bibinfo
  {author} {\bibfnamefont {R.}~\bibnamefont {de~Wolf}},\ }\href {\doibase
  10.1103/RevModPhys.82.665} {\bibfield  {journal} {\bibinfo  {journal} {Rev.
  Mod. Phys.}\ }\textbf {\bibinfo {volume} {82}},\ \bibinfo {pages} {665}
  (\bibinfo {year} {2010})}\BibitemShut {NoStop}%
\bibitem [{\citenamefont {Klauck}(2001)}]{Klauck01}%
  \BibitemOpen
  \bibfield  {author} {\bibinfo {author} {\bibfnamefont {H.}~\bibnamefont
  {Klauck}},\ }in\ \href {\doibase 10.1109/SFCS.2001.959903} {\emph {\bibinfo
  {booktitle} {Proceedings 42nd IEEE Symposium on Foundations of Computer
  Science}}}\ (\bibinfo {organization} {IEEE},\ \bibinfo {year} {2001})\ pp.\
  \bibinfo {pages} {288--297}\BibitemShut {NoStop}%
\bibitem [{\citenamefont {Kerenidis}\ and\ \citenamefont
  {De~Wolf}(2004)}]{Kerenidis04}%
  \BibitemOpen
  \bibfield  {author} {\bibinfo {author} {\bibfnamefont {I.}~\bibnamefont
  {Kerenidis}}\ and\ \bibinfo {author} {\bibfnamefont {R.}~\bibnamefont
  {De~Wolf}},\ }\href
  {https://www.sciencedirect.com/science/article/pii/S0022000004000467}
  {\bibfield  {journal} {\bibinfo  {journal} {Journal of Computer and System
  Sciences}\ }\textbf {\bibinfo {volume} {69}},\ \bibinfo {pages} {395}
  (\bibinfo {year} {2004})}\BibitemShut {NoStop}%
\bibitem [{\citenamefont {Aaronson}(2004)}]{Aaronso04}%
  \BibitemOpen
  \bibfield  {author} {\bibinfo {author} {\bibfnamefont {S.}~\bibnamefont
  {Aaronson}},\ }in\ \href {\doibase 10.1109/CCC.2004.1313854} {\emph {\bibinfo
  {booktitle} {Proceedings. 19th IEEE Annual Conference on Computational
  Complexity, 2004.}}}\ (\bibinfo {organization} {IEEE},\ \bibinfo {year}
  {2004})\ pp.\ \bibinfo {pages} {320--332}\BibitemShut {NoStop}%
\bibitem [{\citenamefont {Wehner}\ and\ \citenamefont
  {De~Wolf}(2005)}]{Wehner05}%
  \BibitemOpen
  \bibfield  {author} {\bibinfo {author} {\bibfnamefont {S.}~\bibnamefont
  {Wehner}}\ and\ \bibinfo {author} {\bibfnamefont {R.}~\bibnamefont
  {De~Wolf}},\ }in\ \href {\doibase https://doi.org/10.1007/11523468_115}
  {\emph {\bibinfo {booktitle} {International Colloquium on Automata,
  Languages, and Programming}}}\ (\bibinfo {organization} {Springer},\ \bibinfo
  {year} {2005})\ pp.\ \bibinfo {pages} {1424--1436}\BibitemShut {NoStop}%
\bibitem [{\citenamefont {Gavinsky}\ \emph {et~al.}(2009)\citenamefont
  {Gavinsky}, \citenamefont {Kempe}, \citenamefont {Regev},\ and\ \citenamefont
  {De~Wolf}}]{Gavinsky06}%
  \BibitemOpen
  \bibfield  {author} {\bibinfo {author} {\bibfnamefont {D.}~\bibnamefont
  {Gavinsky}}, \bibinfo {author} {\bibfnamefont {J.}~\bibnamefont {Kempe}},
  \bibinfo {author} {\bibfnamefont {O.}~\bibnamefont {Regev}}, \ and\ \bibinfo
  {author} {\bibfnamefont {R.}~\bibnamefont {De~Wolf}},\ }\href {\doibase
  https://doi.org/10.1137/060665798} {\bibfield  {journal} {\bibinfo  {journal}
  {SIAM Journal on Computing}\ }\textbf {\bibinfo {volume} {39}},\ \bibinfo
  {pages} {1} (\bibinfo {year} {2009})}\BibitemShut {NoStop}%
\bibitem [{\citenamefont {Hayashi}\ \emph {et~al.}(2007)\citenamefont
  {Hayashi}, \citenamefont {Iwama}, \citenamefont {Nishimura}, \citenamefont
  {Raymond},\ and\ \citenamefont {Yamashita}}]{Hayashi07}%
  \BibitemOpen
  \bibfield  {author} {\bibinfo {author} {\bibfnamefont {M.}~\bibnamefont
  {Hayashi}}, \bibinfo {author} {\bibfnamefont {K.}~\bibnamefont {Iwama}},
  \bibinfo {author} {\bibfnamefont {H.}~\bibnamefont {Nishimura}}, \bibinfo
  {author} {\bibfnamefont {R.}~\bibnamefont {Raymond}}, \ and\ \bibinfo
  {author} {\bibfnamefont {S.}~\bibnamefont {Yamashita}},\ }in\ \href {\doibase
  https://doi.org/10.1007/978-3-540-70918-3_52} {\emph {\bibinfo {booktitle}
  {Annual Symposium on Theoretical Aspects of Computer Science}}}\ (\bibinfo
  {organization} {Springer},\ \bibinfo {year} {2007})\ pp.\ \bibinfo {pages}
  {610--621}\BibitemShut {NoStop}%
\bibitem [{\citenamefont {Spekkens}\ \emph {et~al.}(2009)\citenamefont
  {Spekkens}, \citenamefont {Buzacott}, \citenamefont {Keehn}, \citenamefont
  {Toner},\ and\ \citenamefont {Pryde}}]{Spekkens09}%
  \BibitemOpen
  \bibfield  {author} {\bibinfo {author} {\bibfnamefont {R.~W.}\ \bibnamefont
  {Spekkens}}, \bibinfo {author} {\bibfnamefont {D.~H.}\ \bibnamefont
  {Buzacott}}, \bibinfo {author} {\bibfnamefont {A.~J.}\ \bibnamefont {Keehn}},
  \bibinfo {author} {\bibfnamefont {B.}~\bibnamefont {Toner}}, \ and\ \bibinfo
  {author} {\bibfnamefont {G.~J.}\ \bibnamefont {Pryde}},\ }\href {\doibase
  https://doi.org/10.1103/PhysRevLett.102.010401} {\bibfield  {journal}
  {\bibinfo  {journal} {Physical review letters}\ }\textbf {\bibinfo {volume}
  {102}},\ \bibinfo {pages} {010401} (\bibinfo {year} {2009})}\BibitemShut
  {NoStop}%
\bibitem [{\citenamefont {Banik}\ \emph {et~al.}(2015)\citenamefont {Banik},
  \citenamefont {Bhattacharya}, \citenamefont {Mukherjee}, \citenamefont {Roy},
  \citenamefont {Ambainis},\ and\ \citenamefont {Rai}}]{Banik15}%
  \BibitemOpen
  \bibfield  {author} {\bibinfo {author} {\bibfnamefont {M.}~\bibnamefont
  {Banik}}, \bibinfo {author} {\bibfnamefont {S.~S.}\ \bibnamefont
  {Bhattacharya}}, \bibinfo {author} {\bibfnamefont {A.}~\bibnamefont
  {Mukherjee}}, \bibinfo {author} {\bibfnamefont {A.}~\bibnamefont {Roy}},
  \bibinfo {author} {\bibfnamefont {A.}~\bibnamefont {Ambainis}}, \ and\
  \bibinfo {author} {\bibfnamefont {A.}~\bibnamefont {Rai}},\ }\href {\doibase
  https://doi.org/10.1103/PhysRevA.92.030103} {\bibfield  {journal} {\bibinfo
  {journal} {Physical Review A}\ }\textbf {\bibinfo {volume} {92}},\ \bibinfo
  {pages} {030103} (\bibinfo {year} {2015})}\BibitemShut {NoStop}%
\bibitem [{\citenamefont {Chailloux}\ \emph {et~al.}(2016)\citenamefont
  {Chailloux}, \citenamefont {Kerenidis}, \citenamefont {Kundu},\ and\
  \citenamefont {Sikora}}]{Chailloux16}%
  \BibitemOpen
  \bibfield  {author} {\bibinfo {author} {\bibfnamefont {A.}~\bibnamefont
  {Chailloux}}, \bibinfo {author} {\bibfnamefont {I.}~\bibnamefont
  {Kerenidis}}, \bibinfo {author} {\bibfnamefont {S.}~\bibnamefont {Kundu}}, \
  and\ \bibinfo {author} {\bibfnamefont {J.}~\bibnamefont {Sikora}},\ }\href
  {\doibase 10.1088/1367-2630/18/4/045003} {\bibfield  {journal} {\bibinfo
  {journal} {New Journal of Physics}\ }\textbf {\bibinfo {volume} {18}},\
  \bibinfo {pages} {045003} (\bibinfo {year} {2016})}\BibitemShut {NoStop}%
\bibitem [{\citenamefont {Hameedi}\ \emph {et~al.}(2017)\citenamefont
  {Hameedi}, \citenamefont {Tavakoli}, \citenamefont {Marques},\ and\
  \citenamefont {Bourennane}}]{Hameedi17}%
  \BibitemOpen
  \bibfield  {author} {\bibinfo {author} {\bibfnamefont {A.}~\bibnamefont
  {Hameedi}}, \bibinfo {author} {\bibfnamefont {A.}~\bibnamefont {Tavakoli}},
  \bibinfo {author} {\bibfnamefont {B.}~\bibnamefont {Marques}}, \ and\
  \bibinfo {author} {\bibfnamefont {M.}~\bibnamefont {Bourennane}},\ }\href
  {\doibase https://doi.org/10.1103/PhysRevLett.119.220402} {\bibfield
  {journal} {\bibinfo  {journal} {Physical review letters}\ }\textbf {\bibinfo
  {volume} {119}},\ \bibinfo {pages} {220402} (\bibinfo {year}
  {2017})}\BibitemShut {NoStop}%
\bibitem [{\citenamefont {Ghorai}\ and\ \citenamefont {Pan}(2018)}]{Ghorai18}%
  \BibitemOpen
  \bibfield  {author} {\bibinfo {author} {\bibfnamefont {S.}~\bibnamefont
  {Ghorai}}\ and\ \bibinfo {author} {\bibfnamefont {A.}~\bibnamefont {Pan}},\
  }\href {\doibase https://doi.org/10.1103/PhysRevA.98.032110} {\bibfield
  {journal} {\bibinfo  {journal} {Physical Review A}\ }\textbf {\bibinfo
  {volume} {98}},\ \bibinfo {pages} {032110} (\bibinfo {year}
  {2018})}\BibitemShut {NoStop}%
\bibitem [{\citenamefont {Saha}\ and\ \citenamefont
  {Chaturvedi}(2019)}]{Saha19}%
  \BibitemOpen
  \bibfield  {author} {\bibinfo {author} {\bibfnamefont {D.}~\bibnamefont
  {Saha}}\ and\ \bibinfo {author} {\bibfnamefont {A.}~\bibnamefont
  {Chaturvedi}},\ }\href {\doibase https://doi.org/10.1103/PhysRevA.100.022108}
  {\bibfield  {journal} {\bibinfo  {journal} {Physical Review A}\ }\textbf
  {\bibinfo {volume} {100}},\ \bibinfo {pages} {022108} (\bibinfo {year}
  {2019})}\BibitemShut {NoStop}%
\bibitem [{\citenamefont {Ambainis}\ \emph {et~al.}(2019)\citenamefont
  {Ambainis}, \citenamefont {Banik}, \citenamefont {Chaturvedi}, \citenamefont
  {Kravchenko},\ and\ \citenamefont {Rai}}]{Ambainis19}%
  \BibitemOpen
  \bibfield  {author} {\bibinfo {author} {\bibfnamefont {A.}~\bibnamefont
  {Ambainis}}, \bibinfo {author} {\bibfnamefont {M.}~\bibnamefont {Banik}},
  \bibinfo {author} {\bibfnamefont {A.}~\bibnamefont {Chaturvedi}}, \bibinfo
  {author} {\bibfnamefont {D.}~\bibnamefont {Kravchenko}}, \ and\ \bibinfo
  {author} {\bibfnamefont {A.}~\bibnamefont {Rai}},\ }\href {\doibase
  https://doi.org/10.1007/s11128-019-2228-3} {\bibfield  {journal} {\bibinfo
  {journal} {Quantum Information Processing}\ }\textbf {\bibinfo {volume}
  {18}},\ \bibinfo {pages} {1} (\bibinfo {year} {2019})}\BibitemShut {NoStop}%
\bibitem [{\citenamefont {Chaturvedi}\ and\ \citenamefont
  {Saha}(2020)}]{Chaturvedi20}%
  \BibitemOpen
  \bibfield  {author} {\bibinfo {author} {\bibfnamefont {A.}~\bibnamefont
  {Chaturvedi}}\ and\ \bibinfo {author} {\bibfnamefont {D.}~\bibnamefont
  {Saha}},\ }\href {\doibase https://doi.org/10.22331/q-2020-10-21-345}
  {\bibfield  {journal} {\bibinfo  {journal} {Quantum}\ }\textbf {\bibinfo
  {volume} {4}},\ \bibinfo {pages} {345} (\bibinfo {year} {2020})}\BibitemShut
  {NoStop}%
\bibitem [{\citenamefont {Paw{\l}owski}\ and\ \citenamefont
  {{\.Z}ukowski}(2010)}]{Pawloski10}%
  \BibitemOpen
  \bibfield  {author} {\bibinfo {author} {\bibfnamefont {M.}~\bibnamefont
  {Paw{\l}owski}}\ and\ \bibinfo {author} {\bibfnamefont {M.}~\bibnamefont
  {{\.Z}ukowski}},\ }\href {\doibase
  https://doi.org/10.1103/PhysRevA.81.042326} {\bibfield  {journal} {\bibinfo
  {journal} {Physical Review A}\ }\textbf {\bibinfo {volume} {81}},\ \bibinfo
  {pages} {042326} (\bibinfo {year} {2010})}\BibitemShut {NoStop}%
\bibitem [{\citenamefont {Chaturvedi}\ \emph {et~al.}(2017)\citenamefont
  {Chaturvedi}, \citenamefont {Pawlowski},\ and\ \citenamefont
  {Horodecki}}]{AnubhavRAC}%
  \BibitemOpen
  \bibfield  {author} {\bibinfo {author} {\bibfnamefont {A.}~\bibnamefont
  {Chaturvedi}}, \bibinfo {author} {\bibfnamefont {M.}~\bibnamefont
  {Pawlowski}}, \ and\ \bibinfo {author} {\bibfnamefont {K.}~\bibnamefont
  {Horodecki}},\ }\href {\doibase 10.1103/PhysRevA.96.022125} {\bibfield
  {journal} {\bibinfo  {journal} {Phys. Rev. A}\ }\textbf {\bibinfo {volume}
  {96}},\ \bibinfo {pages} {022125} (\bibinfo {year} {2017})}\BibitemShut
  {NoStop}%
\bibitem [{\citenamefont {Paw{\l}owski}\ \emph {et~al.}(2009)\citenamefont
  {Paw{\l}owski}, \citenamefont {Paterek}, \citenamefont {Kaszlikowski},
  \citenamefont {Scarani}, \citenamefont {Winter},\ and\ \citenamefont
  {{\.Z}ukowski}}]{Pawloski09}%
  \BibitemOpen
  \bibfield  {author} {\bibinfo {author} {\bibfnamefont {M.}~\bibnamefont
  {Paw{\l}owski}}, \bibinfo {author} {\bibfnamefont {T.}~\bibnamefont
  {Paterek}}, \bibinfo {author} {\bibfnamefont {D.}~\bibnamefont
  {Kaszlikowski}}, \bibinfo {author} {\bibfnamefont {V.}~\bibnamefont
  {Scarani}}, \bibinfo {author} {\bibfnamefont {A.}~\bibnamefont {Winter}}, \
  and\ \bibinfo {author} {\bibfnamefont {M.}~\bibnamefont {{\.Z}ukowski}},\
  }\href {\doibase https://doi.org/10.1038/nature08400} {\bibfield  {journal}
  {\bibinfo  {journal} {Nature}\ }\textbf {\bibinfo {volume} {461}},\ \bibinfo
  {pages} {1101} (\bibinfo {year} {2009})}\BibitemShut {NoStop}%
\bibitem [{\citenamefont {Al-Safi}\ and\ \citenamefont {Short}(2011)}]{Safi11}%
  \BibitemOpen
  \bibfield  {author} {\bibinfo {author} {\bibfnamefont {S.~W.}\ \bibnamefont
  {Al-Safi}}\ and\ \bibinfo {author} {\bibfnamefont {A.~J.}\ \bibnamefont
  {Short}},\ }\href {\doibase https://doi.org/10.1103/PhysRevA.84.042323}
  {\bibfield  {journal} {\bibinfo  {journal} {Physical Review A}\ }\textbf
  {\bibinfo {volume} {84}},\ \bibinfo {pages} {042323} (\bibinfo {year}
  {2011})}\BibitemShut {NoStop}%
\bibitem [{\citenamefont {Paw\l{}owski}\ and\ \citenamefont
  {Brunner}(2011)}]{Marcin2011}%
  \BibitemOpen
  \bibfield  {author} {\bibinfo {author} {\bibfnamefont {M.}~\bibnamefont
  {Paw\l{}owski}}\ and\ \bibinfo {author} {\bibfnamefont {N.}~\bibnamefont
  {Brunner}},\ }\href {\doibase 10.1103/PhysRevA.84.010302} {\bibfield
  {journal} {\bibinfo  {journal} {Phys. Rev. A}\ }\textbf {\bibinfo {volume}
  {84}},\ \bibinfo {pages} {010302} (\bibinfo {year} {2011})}\BibitemShut
  {NoStop}%
\bibitem [{\citenamefont {Chaturvedi}\ \emph {et~al.}(2018)\citenamefont
  {Chaturvedi}, \citenamefont {Ray}, \citenamefont {Veynar},\ and\
  \citenamefont {Paw{\l}owski}}]{Chaturvedi2018}%
  \BibitemOpen
  \bibfield  {author} {\bibinfo {author} {\bibfnamefont {A.}~\bibnamefont
  {Chaturvedi}}, \bibinfo {author} {\bibfnamefont {M.}~\bibnamefont {Ray}},
  \bibinfo {author} {\bibfnamefont {R.}~\bibnamefont {Veynar}}, \ and\ \bibinfo
  {author} {\bibfnamefont {M.}~\bibnamefont {Paw{\l}owski}},\ }\href
  {https://link.springer.com/article/10.1007/s11128-018-1892-z} {\bibfield
  {journal} {\bibinfo  {journal} {Quantum information processing}\ }\textbf
  {\bibinfo {volume} {17}},\ \bibinfo {pages} {1} (\bibinfo {year}
  {2018})}\BibitemShut {NoStop}%
\bibitem [{\citenamefont {Chaturvedi}\ \emph {et~al.}(2020)\citenamefont
  {Chaturvedi}, \citenamefont {Farkas},\ and\ \citenamefont
  {Wright}}]{Anubhav2020}%
  \BibitemOpen
  \bibfield  {author} {\bibinfo {author} {\bibfnamefont {A.}~\bibnamefont
  {Chaturvedi}}, \bibinfo {author} {\bibfnamefont {M.}~\bibnamefont {Farkas}},
  \ and\ \bibinfo {author} {\bibfnamefont {V.~J.}\ \bibnamefont {Wright}},\
  }\href {https://arxiv.org/abs/2010.05853} {\bibfield  {journal} {\bibinfo
  {journal} {arXiv:2010.05853}\ } (\bibinfo {year} {2020})}\BibitemShut
  {NoStop}%
\bibitem [{\citenamefont {Ambainis}\ \emph {et~al.}(2008)\citenamefont
  {Ambainis}, \citenamefont {Leung}, \citenamefont {Mancinska},\ and\
  \citenamefont {Ozols}}]{Ambainis08}%
  \BibitemOpen
  \bibfield  {author} {\bibinfo {author} {\bibfnamefont {A.}~\bibnamefont
  {Ambainis}}, \bibinfo {author} {\bibfnamefont {D.}~\bibnamefont {Leung}},
  \bibinfo {author} {\bibfnamefont {L.}~\bibnamefont {Mancinska}}, \ and\
  \bibinfo {author} {\bibfnamefont {M.}~\bibnamefont {Ozols}},\ }\href
  {https://arxiv.org/abs/0810.2937v3} {\bibfield  {journal} {\bibinfo
  {journal} {arXiv preprint arXiv:0810.2937}\ } (\bibinfo {year}
  {2008})}\BibitemShut {NoStop}%
\bibitem [{\citenamefont {Ambainis}\ \emph {et~al.}(2015)\citenamefont
  {Ambainis}, \citenamefont {Kravchenko},\ and\ \citenamefont
  {Rai}}]{Ambainis15}%
  \BibitemOpen
  \bibfield  {author} {\bibinfo {author} {\bibfnamefont {A.}~\bibnamefont
  {Ambainis}}, \bibinfo {author} {\bibfnamefont {D.}~\bibnamefont
  {Kravchenko}}, \ and\ \bibinfo {author} {\bibfnamefont {A.}~\bibnamefont
  {Rai}},\ }\href {https://arxiv.org/abs/1510.03045v1} {\bibfield  {journal}
  {\bibinfo  {journal} {arXiv preprint arXiv:1510.03045}\ } (\bibinfo {year}
  {2015})}\BibitemShut {NoStop}%
\bibitem [{\citenamefont {Kraus}\ \emph {et~al.}(1983)\citenamefont {Kraus},
  \citenamefont {B{\"o}hm}, \citenamefont {Dollard},\ and\ \citenamefont
  {Wootters}}]{Kraus83}%
  \BibitemOpen
  \bibfield  {author} {\bibinfo {author} {\bibfnamefont {K.}~\bibnamefont
  {Kraus}}, \bibinfo {author} {\bibfnamefont {A.}~\bibnamefont {B{\"o}hm}},
  \bibinfo {author} {\bibfnamefont {J.~D.}\ \bibnamefont {Dollard}}, \ and\
  \bibinfo {author} {\bibfnamefont {W.}~\bibnamefont {Wootters}},\ }\href
  {https://pascal-francis.inist.fr/vibad/index.php?action=getRecordDetail&idt=9385051}
  {\bibfield  {journal} {\bibinfo  {journal} {Lecture notes in physics}\
  }\textbf {\bibinfo {volume} {190}} (\bibinfo {year} {1983})}\BibitemShut
  {NoStop}%
\bibitem [{\citenamefont {Nielsen}\ and\ \citenamefont
  {Chuang}(2002)}]{Chuang00}%
  \BibitemOpen
  \bibfield  {author} {\bibinfo {author} {\bibfnamefont {M.~A.}\ \bibnamefont
  {Nielsen}}\ and\ \bibinfo {author} {\bibfnamefont {I.}~\bibnamefont
  {Chuang}},\ }\href
  {https://aapt.scitation.org/doi/abs/10.1119/1.1463744?journalCode=ajp}
  {\enquote {\bibinfo {title} {Quantum computation and quantum information},}\
  } (\bibinfo {year} {2002})\BibitemShut {NoStop}%
\bibitem [{\citenamefont {Wilde}(2013)}]{Wilde13}%
  \BibitemOpen
  \bibfield  {author} {\bibinfo {author} {\bibfnamefont {M.~M.}\ \bibnamefont
  {Wilde}},\ }\href
  {https://books.google.pl/books?hl=pl&lr=&id=dphmVbPSLHMC&oi=fnd&pg=PR11&dq=Mark+M.+Wilde%3B+Quantum+Information+Theory&ots=UdP2DQp03O&sig=e2P48c32XQH4McjsSihcDx5vVqE&redir_esc=y#v=onepage&q=Mark%20M.%20Wilde%3B%20Quantum%20Information%20Theory&f=false}
  {\emph {\bibinfo {title} {Quantum information theory}}}\ (\bibinfo
  {publisher} {Cambridge University Press},\ \bibinfo {year}
  {2013})\BibitemShut {NoStop}%
\bibitem [{\citenamefont {Bennett}\ \emph {et~al.}(1999)\citenamefont
  {Bennett}, \citenamefont {Shor}, \citenamefont {Smolin},\ and\ \citenamefont
  {Thapliyal}}]{Bennett99}%
  \BibitemOpen
  \bibfield  {author} {\bibinfo {author} {\bibfnamefont {C.~H.}\ \bibnamefont
  {Bennett}}, \bibinfo {author} {\bibfnamefont {P.~W.}\ \bibnamefont {Shor}},
  \bibinfo {author} {\bibfnamefont {J.~A.}\ \bibnamefont {Smolin}}, \ and\
  \bibinfo {author} {\bibfnamefont {A.~V.}\ \bibnamefont {Thapliyal}},\ }\href
  {\doibase 10.1103/PhysRevLett.83.3081} {\bibfield  {journal} {\bibinfo
  {journal} {Phys. Rev. Lett.}\ }\textbf {\bibinfo {volume} {83}},\ \bibinfo
  {pages} {3081} (\bibinfo {year} {1999})}\BibitemShut {NoStop}%
\bibitem [{\citenamefont {Shannon}(1956)}]{Shannon56}%
  \BibitemOpen
  \bibfield  {author} {\bibinfo {author} {\bibfnamefont {C.}~\bibnamefont
  {Shannon}},\ }\href {\doibase 10.1109/TIT.1956.1056798} {\bibfield  {journal}
  {\bibinfo  {journal} {IRE Transactions on Information Theory}\ }\textbf
  {\bibinfo {volume} {2}},\ \bibinfo {pages} {8} (\bibinfo {year}
  {1956})}\BibitemShut {NoStop}%
\bibitem [{\citenamefont {Korner}\ and\ \citenamefont
  {Orlitsky}(1998)}]{Korner98}%
  \BibitemOpen
  \bibfield  {author} {\bibinfo {author} {\bibfnamefont {J.}~\bibnamefont
  {Korner}}\ and\ \bibinfo {author} {\bibfnamefont {A.}~\bibnamefont
  {Orlitsky}},\ }\href {\doibase 10.1109/18.720537} {\bibfield  {journal}
  {\bibinfo  {journal} {IEEE Transactions on Information Theory}\ }\textbf
  {\bibinfo {volume} {44}},\ \bibinfo {pages} {2207} (\bibinfo {year}
  {1998})}\BibitemShut {NoStop}%
\bibitem [{\citenamefont {Cubitt}\ \emph {et~al.}(2010)\citenamefont {Cubitt},
  \citenamefont {Leung}, \citenamefont {Matthews},\ and\ \citenamefont
  {Winter}}]{Cubitt10}%
  \BibitemOpen
  \bibfield  {author} {\bibinfo {author} {\bibfnamefont {T.~S.}\ \bibnamefont
  {Cubitt}}, \bibinfo {author} {\bibfnamefont {D.}~\bibnamefont {Leung}},
  \bibinfo {author} {\bibfnamefont {W.}~\bibnamefont {Matthews}}, \ and\
  \bibinfo {author} {\bibfnamefont {A.}~\bibnamefont {Winter}},\ }\href
  {\doibase 10.1103/PhysRevLett.104.230503} {\bibfield  {journal} {\bibinfo
  {journal} {Phys. Rev. Lett.}\ }\textbf {\bibinfo {volume} {104}},\ \bibinfo
  {pages} {230503} (\bibinfo {year} {2010})}\BibitemShut {NoStop}%
\bibitem [{\citenamefont {Cubitt}\ \emph {et~al.}(2011)\citenamefont {Cubitt},
  \citenamefont {Leung}, \citenamefont {Matthews},\ and\ \citenamefont
  {Winter}}]{Cubitt11}%
  \BibitemOpen
  \bibfield  {author} {\bibinfo {author} {\bibfnamefont {T.~S.}\ \bibnamefont
  {Cubitt}}, \bibinfo {author} {\bibfnamefont {D.}~\bibnamefont {Leung}},
  \bibinfo {author} {\bibfnamefont {W.}~\bibnamefont {Matthews}}, \ and\
  \bibinfo {author} {\bibfnamefont {A.}~\bibnamefont {Winter}},\ }\href
  {\doibase 10.1109/TIT.2011.2159047} {\bibfield  {journal} {\bibinfo
  {journal} {IEEE Transactions on Information Theory}\ }\textbf {\bibinfo
  {volume} {57}},\ \bibinfo {pages} {5509} (\bibinfo {year}
  {2011})}\BibitemShut {NoStop}%
\bibitem [{\citenamefont {Frenkel}\ and\ \citenamefont
  {Weiner}(2021)}]{Frenkel21}%
  \BibitemOpen
  \bibfield  {author} {\bibinfo {author} {\bibfnamefont {P.~E.}\ \bibnamefont
  {Frenkel}}\ and\ \bibinfo {author} {\bibfnamefont {M.}~\bibnamefont
  {Weiner}},\ }\href {https://arxiv.org/abs/2103.08567} {\bibfield  {journal}
  {\bibinfo  {journal} {arXiv:2103.08567}\ } (\bibinfo {year}
  {2021})}\BibitemShut {NoStop}%
\bibitem [{\citenamefont {Doriguello}\ and\ \citenamefont
  {Montanaro}(2021)}]{Doriguello21}%
  \BibitemOpen
  \bibfield  {author} {\bibinfo {author} {\bibfnamefont {J.~F.}\ \bibnamefont
  {Doriguello}}\ and\ \bibinfo {author} {\bibfnamefont {A.}~\bibnamefont
  {Montanaro}},\ }\href {\doibase 10.22331/q-2021-03-07-402} {\bibfield
  {journal} {\bibinfo  {journal} {{Quantum}}\ }\textbf {\bibinfo {volume}
  {5}},\ \bibinfo {pages} {402} (\bibinfo {year} {2021})}\BibitemShut {NoStop}%
\bibitem [{\citenamefont {Hardy}(2001)}]{Hardy}%
  \BibitemOpen
  \bibfield  {author} {\bibinfo {author} {\bibfnamefont {L.}~\bibnamefont
  {Hardy}},\ }\href {https://arxiv.org/abs/quant-ph/0101012} {\bibfield
  {journal} {\bibinfo  {journal} {arXiv preprint quant-ph/0101012}\ } (\bibinfo
  {year} {2001})}\BibitemShut {NoStop}%
\bibitem [{\citenamefont {Barrett}(2007)}]{Barrett07}%
  \BibitemOpen
  \bibfield  {author} {\bibinfo {author} {\bibfnamefont {J.}~\bibnamefont
  {Barrett}},\ }\href {\doibase 10.1103/PhysRevA.75.032304} {\bibfield
  {journal} {\bibinfo  {journal} {Phys. Rev. A}\ }\textbf {\bibinfo {volume}
  {75}},\ \bibinfo {pages} {032304} (\bibinfo {year} {2007})}\BibitemShut
  {NoStop}%
\bibitem [{\citenamefont {Chiribella}\ \emph {et~al.}(2010)\citenamefont
  {Chiribella}, \citenamefont {D'Ariano},\ and\ \citenamefont
  {Perinotti}}]{Chiribella10}%
  \BibitemOpen
  \bibfield  {author} {\bibinfo {author} {\bibfnamefont {G.}~\bibnamefont
  {Chiribella}}, \bibinfo {author} {\bibfnamefont {G.~M.}\ \bibnamefont
  {D'Ariano}}, \ and\ \bibinfo {author} {\bibfnamefont {P.}~\bibnamefont
  {Perinotti}},\ }\href {\doibase 10.1103/PhysRevA.81.062348} {\bibfield
  {journal} {\bibinfo  {journal} {Phys. Rev. A}\ }\textbf {\bibinfo {volume}
  {81}},\ \bibinfo {pages} {062348} (\bibinfo {year} {2010})}\BibitemShut
  {NoStop}%
\bibitem [{\citenamefont {Barnum}\ and\ \citenamefont
  {Wilce}(2011)}]{Barnum11}%
  \BibitemOpen
  \bibfield  {author} {\bibinfo {author} {\bibfnamefont {H.}~\bibnamefont
  {Barnum}}\ and\ \bibinfo {author} {\bibfnamefont {A.}~\bibnamefont {Wilce}},\
  }\href {\doibase https://doi.org/10.1016/j.entcs.2011.01.002} {\bibfield
  {journal} {\bibinfo  {journal} {Electronic Notes in Theoretical Computer
  Science}\ }\textbf {\bibinfo {volume} {270}},\ \bibinfo {pages} {3} (\bibinfo
  {year} {2011})},\ \bibinfo {note} {proceedings of the Joint 5th International
  Workshop on Quantum Physics and Logic and 4th Workshop on Developments in
  Computational Models (QPL/DCM 2008)}\BibitemShut {NoStop}%
\bibitem [{\citenamefont {Masanes}\ and\ \citenamefont
  {Müller}(2011)}]{Masanes11}%
  \BibitemOpen
  \bibfield  {author} {\bibinfo {author} {\bibfnamefont {L.}~\bibnamefont
  {Masanes}}\ and\ \bibinfo {author} {\bibfnamefont {M.~P.}\ \bibnamefont
  {Müller}},\ }\href {\doibase 10.1088/1367-2630/13/6/063001} {\bibfield
  {journal} {\bibinfo  {journal} {New Journal of Physics}\ }\textbf {\bibinfo
  {volume} {13}},\ \bibinfo {pages} {063001} (\bibinfo {year}
  {2011})}\BibitemShut {NoStop}%
\bibitem [{\citenamefont {{Popescu}}\ and\ \citenamefont
  {{Rohrlich}}(1994)}]{Popescu94}%
  \BibitemOpen
  \bibfield  {author} {\bibinfo {author} {\bibfnamefont {S.}~\bibnamefont
  {{Popescu}}}\ and\ \bibinfo {author} {\bibfnamefont {D.}~\bibnamefont
  {{Rohrlich}}},\ }\href {https://link.springer.com/article/10.1007} {\bibfield
   {journal} {\bibinfo  {journal} {Foundations of Physics}\ }\textbf {\bibinfo
  {volume} {24}},\ \bibinfo {pages} {379} (\bibinfo {year} {1994})}\BibitemShut
  {NoStop}%
\bibitem [{\citenamefont {Janotta}\ \emph {et~al.}(2011)\citenamefont
  {Janotta}, \citenamefont {Gogolin}, \citenamefont {Barrett},\ and\
  \citenamefont {Brunner}}]{Janotta11}%
  \BibitemOpen
  \bibfield  {author} {\bibinfo {author} {\bibfnamefont {P.}~\bibnamefont
  {Janotta}}, \bibinfo {author} {\bibfnamefont {C.}~\bibnamefont {Gogolin}},
  \bibinfo {author} {\bibfnamefont {J.}~\bibnamefont {Barrett}}, \ and\
  \bibinfo {author} {\bibfnamefont {N.}~\bibnamefont {Brunner}},\ }\href
  {\doibase 10.1088/1367-2630/13/6/063024} {\bibfield  {journal} {\bibinfo
  {journal} {New Journal of Physics}\ }\textbf {\bibinfo {volume} {13}},\
  \bibinfo {pages} {063024} (\bibinfo {year} {2011})}\BibitemShut {NoStop}%
\bibitem [{\citenamefont {Yopp}\ and\ \citenamefont
  {Hill~{\S}}(2005)}]{Yopp07}%
  \BibitemOpen
  \bibfield  {author} {\bibinfo {author} {\bibfnamefont {D.~A.}\ \bibnamefont
  {Yopp}}\ and\ \bibinfo {author} {\bibfnamefont {R.~D.}\ \bibnamefont
  {Hill~{\S}}},\ }\href {\doibase https://doi.org/10.1080/03081080412331272500}
  {\bibfield  {journal} {\bibinfo  {journal} {Linear and Multilinear Algebra}\
  }\textbf {\bibinfo {volume} {53}},\ \bibinfo {pages} {167} (\bibinfo {year}
  {2005})}\BibitemShut {NoStop}%
\bibitem [{\citenamefont {Weis}(2011)}]{Weis12}%
  \BibitemOpen
  \bibfield  {author} {\bibinfo {author} {\bibfnamefont {S.}~\bibnamefont
  {Weis}},\ }\href {https://arxiv.org/abs/1107.2319v2} {\bibfield  {journal}
  {\bibinfo  {journal} {arXiv preprint arXiv:1107.2319}\ } (\bibinfo {year}
  {2011})}\BibitemShut {NoStop}%
\bibitem [{\citenamefont {Massar}\ and\ \citenamefont
  {Patra}(2014)}]{Massar14}%
  \BibitemOpen
  \bibfield  {author} {\bibinfo {author} {\bibfnamefont {S.}~\bibnamefont
  {Massar}}\ and\ \bibinfo {author} {\bibfnamefont {M.~K.}\ \bibnamefont
  {Patra}},\ }\href {\doibase https://doi.org/10.1103/PhysRevA.89.052124}
  {\bibfield  {journal} {\bibinfo  {journal} {Physical Review A}\ }\textbf
  {\bibinfo {volume} {89}},\ \bibinfo {pages} {052124} (\bibinfo {year}
  {2014})}\BibitemShut {NoStop}%
\bibitem [{\citenamefont {Al-Safi}\ and\ \citenamefont
  {Richens}(2015)}]{Safi15}%
  \BibitemOpen
  \bibfield  {author} {\bibinfo {author} {\bibfnamefont {S.~W.}\ \bibnamefont
  {Al-Safi}}\ and\ \bibinfo {author} {\bibfnamefont {J.}~\bibnamefont
  {Richens}},\ }\href {\doibase 10.1088/1367-2630/17/12/123001} {\bibfield
  {journal} {\bibinfo  {journal} {New Journal of Physics}\ }\textbf {\bibinfo
  {volume} {17}},\ \bibinfo {pages} {123001} (\bibinfo {year}
  {2015})}\BibitemShut {NoStop}%
\bibitem [{\citenamefont {Banik}\ \emph {et~al.}(2019)\citenamefont {Banik},
  \citenamefont {Saha}, \citenamefont {Guha}, \citenamefont {Agrawal},
  \citenamefont {Bhattacharya}, \citenamefont {Roy},\ and\ \citenamefont
  {Majumdar}}]{Banik19}%
  \BibitemOpen
  \bibfield  {author} {\bibinfo {author} {\bibfnamefont {M.}~\bibnamefont
  {Banik}}, \bibinfo {author} {\bibfnamefont {S.}~\bibnamefont {Saha}},
  \bibinfo {author} {\bibfnamefont {T.}~\bibnamefont {Guha}}, \bibinfo {author}
  {\bibfnamefont {S.}~\bibnamefont {Agrawal}}, \bibinfo {author} {\bibfnamefont
  {S.~S.}\ \bibnamefont {Bhattacharya}}, \bibinfo {author} {\bibfnamefont
  {A.}~\bibnamefont {Roy}}, \ and\ \bibinfo {author} {\bibfnamefont
  {A.}~\bibnamefont {Majumdar}},\ }\href {\doibase
  https://doi.org/10.1103/PhysRevA.100.060101} {\bibfield  {journal} {\bibinfo
  {journal} {Physical Review A}\ }\textbf {\bibinfo {volume} {100}},\ \bibinfo
  {pages} {060101} (\bibinfo {year} {2019})}\BibitemShut {NoStop}%
\bibitem [{\citenamefont {Bhattacharya}\ \emph {et~al.}(2020)\citenamefont
  {Bhattacharya}, \citenamefont {Saha}, \citenamefont {Guha},\ and\
  \citenamefont {Banik}}]{Bhattacharya20}%
  \BibitemOpen
  \bibfield  {author} {\bibinfo {author} {\bibfnamefont {S.~S.}\ \bibnamefont
  {Bhattacharya}}, \bibinfo {author} {\bibfnamefont {S.}~\bibnamefont {Saha}},
  \bibinfo {author} {\bibfnamefont {T.}~\bibnamefont {Guha}}, \ and\ \bibinfo
  {author} {\bibfnamefont {M.}~\bibnamefont {Banik}},\ }\href {\doibase
  https://doi.org/10.1103/PhysRevResearch.2.012068} {\bibfield  {journal}
  {\bibinfo  {journal} {Physical Review Research}\ }\textbf {\bibinfo {volume}
  {2}},\ \bibinfo {pages} {012068} (\bibinfo {year} {2020})}\BibitemShut
  {NoStop}%
\bibitem [{\citenamefont {Saha}\ \emph
  {et~al.}(2020{\natexlab{a}})\citenamefont {Saha}, \citenamefont
  {Bhattacharya}, \citenamefont {Guha}, \citenamefont {Halder},\ and\
  \citenamefont {Banik}}]{Saha20}%
  \BibitemOpen
  \bibfield  {author} {\bibinfo {author} {\bibfnamefont {S.}~\bibnamefont
  {Saha}}, \bibinfo {author} {\bibfnamefont {S.~S.}\ \bibnamefont
  {Bhattacharya}}, \bibinfo {author} {\bibfnamefont {T.}~\bibnamefont {Guha}},
  \bibinfo {author} {\bibfnamefont {S.}~\bibnamefont {Halder}}, \ and\ \bibinfo
  {author} {\bibfnamefont {M.}~\bibnamefont {Banik}},\ }\href {\doibase
  https://doi.org/10.1002/andp.202000334} {\bibfield  {journal} {\bibinfo
  {journal} {Annalen der Physik}\ }\textbf {\bibinfo {volume} {532}},\ \bibinfo
  {pages} {2000334} (\bibinfo {year} {2020}{\natexlab{a}})}\BibitemShut
  {NoStop}%
\bibitem [{\citenamefont {Saha}\ \emph
  {et~al.}(2020{\natexlab{b}})\citenamefont {Saha}, \citenamefont {Guha},
  \citenamefont {Bhattacharya},\ and\ \citenamefont {Banik}}]{Saha21}%
  \BibitemOpen
  \bibfield  {author} {\bibinfo {author} {\bibfnamefont {S.}~\bibnamefont
  {Saha}}, \bibinfo {author} {\bibfnamefont {T.}~\bibnamefont {Guha}}, \bibinfo
  {author} {\bibfnamefont {S.~S.}\ \bibnamefont {Bhattacharya}}, \ and\
  \bibinfo {author} {\bibfnamefont {M.}~\bibnamefont {Banik}},\ }\href
  {https://arxiv.org/abs/2012.05781v1} {\bibfield  {journal} {\bibinfo
  {journal} {arXiv preprint arXiv:2012.05781}\ } (\bibinfo {year}
  {2020}{\natexlab{b}})}\BibitemShut {NoStop}%
\bibitem [{\citenamefont {Tavakoli}\ \emph {et~al.}(2015)\citenamefont
  {Tavakoli}, \citenamefont {Hameedi}, \citenamefont {Marques},\ and\
  \citenamefont {Bourennane}}]{Tavakoli15}%
  \BibitemOpen
  \bibfield  {author} {\bibinfo {author} {\bibfnamefont {A.}~\bibnamefont
  {Tavakoli}}, \bibinfo {author} {\bibfnamefont {A.}~\bibnamefont {Hameedi}},
  \bibinfo {author} {\bibfnamefont {B.}~\bibnamefont {Marques}}, \ and\
  \bibinfo {author} {\bibfnamefont {M.}~\bibnamefont {Bourennane}},\ }\href
  {\doibase 10.1103/PhysRevLett.114.170502} {\bibfield  {journal} {\bibinfo
  {journal} {Phys. Rev. Lett.}\ }\textbf {\bibinfo {volume} {114}},\ \bibinfo
  {pages} {170502} (\bibinfo {year} {2015})}\BibitemShut {NoStop}%
\bibitem [{\citenamefont {Wittek}(2015)}]{wittek2015algorithm}%
  \BibitemOpen
  \bibfield  {author} {\bibinfo {author} {\bibfnamefont {P.}~\bibnamefont
  {Wittek}},\ }\href {https://dl.acm.org/doi/10.1145/2699464} {\bibfield
  {journal} {\bibinfo  {journal} {ACM Trans.~Math.~Softw.}\ }\textbf {\bibinfo
  {volume} {41}},\ \bibinfo {pages} {1} (\bibinfo {year} {2015})}\BibitemShut
  {NoStop}%
\bibitem [{\citenamefont {L{\"{o}}fberg}(2004)}]{Lofberg2004}%
  \BibitemOpen
  \bibfield  {author} {\bibinfo {author} {\bibfnamefont {J.}~\bibnamefont
  {L{\"{o}}fberg}},\ }in\ \href {https://doi.org/10.1109/CACSD.2004.1393890}
  {\emph {\bibinfo {booktitle} {In Proceedings of the CACSD Conference}}}\
  (\bibinfo {address} {Taipei, Taiwan},\ \bibinfo {year} {2004})\BibitemShut
  {NoStop}%
\bibitem [{\citenamefont {Toh}\ \emph {et~al.}(1999)\citenamefont {Toh},
  \citenamefont {Todd},\ and\ \citenamefont
  {T{\"u}t{\"u}nc{\"u}}}]{toh1999sdpt3}%
  \BibitemOpen
  \bibfield  {author} {\bibinfo {author} {\bibfnamefont {K.-C.}\ \bibnamefont
  {Toh}}, \bibinfo {author} {\bibfnamefont {M.~J.}\ \bibnamefont {Todd}}, \
  and\ \bibinfo {author} {\bibfnamefont {R.~H.}\ \bibnamefont
  {T{\"u}t{\"u}nc{\"u}}},\ }\href {https://doi.org/10.1080/10556789908805762}
  {\bibfield  {journal} {\bibinfo  {journal} {Optim.~Methods Softw.~}\ }\textbf
  {\bibinfo {volume} {11}},\ \bibinfo {pages} {545} (\bibinfo {year}
  {1999})}\BibitemShut {NoStop}%
\end{thebibliography}%

\end{document}